\documentclass[preprint,5p, a4paper, sort&compress,authoryear,hyperref]{elsarticle}

\usepackage[T1]{fontenc}
\usepackage[fleqn]{amsmath}
\usepackage{amssymb, amsfonts}
\usepackage{booktabs} 
\usepackage{mathtools}
\usepackage{commath}
\usepackage{amsthm}
\usepackage{mathdots}
\usepackage{microtype}
\usepackage[nolist]{acronym}
\usepackage{siunitx}	
\usepackage{booktabs}
\usepackage{bm}
\usepackage{dsfont}

% Bibliography
\bibliographystyle{elsarticle-harv}

% input
\usepackage{inputenc}

% TIKZ package 
\usepackage{tikz}
\usepackage{pgfplots}
\pgfplotsset{every axis/.append style={line width=1pt}}
\pgfplotsset{compat=newest}
\usetikzlibrary{shapes, fit, decorations.markings, matrix, plotmarks, positioning, fit, spy, patterns, shadows, calc, backgrounds, arrows.meta,spy,decorations.fractals}
% figures
\usepackage[font=footnotesize,caption=false]{subfig}

% subequations
\newcounter{subeqn} %

% Theorem environments
%\let\proof\relax
%\let\endproof\relax
\usepackage{amsthm}
\newtheorem{theorem}{Theorem}%[section]
\newtheorem{corollary}{Corollary}
\newtheorem{lemma}{Lemma}
\newtheorem{proposition}{Proposition}%[section]
\newtheorem{definition}{Definition}%[section]
\newtheorem{remark}{Remark}
\newtheorem{assumption}{Assumption}
\newtheorem{example}{Example}

\DeclarePairedDelimiter{\diagpars}{(}{)}
\newcommand{\diag}{\operatorname{diag}\diagpars}

\begin{acronym}
	\acro{MPC}{Model Predictive Control}
	\acro{DMPC}{Distributed Model Predictive Control}
	\acro{LSS}{Large-Scale System}
	\acro{mRPI}{minimal Robust Positively Invariant}
	\acro{QP}{Quadratic Programming}
	\acro{RCI}{Robust Control Invariant}
	\acro{RPI}{Robust Positive Invariant}
	\acro{PnP}{Plug and Play}
	\acro{OCP}{Optimal Control Problem}
	\acro{MPC}{Model Predictive Control}
	\acro{PV}{Photo-voltaic}
	\acro{DG}{Distributed Generation}
	\acro{DS}{Distributed Storage}
	\acro{BIC}{Bounded Integral Controller}
	\acro{VCS}{Voltage controlled source}
	\acro{MG}{Micro-Grid}
	\acro{OPF}{Optimal Power Flow}
	\acro{SoC}{State of Charge}
	\acro{OCP}{Optimal Control Problem}
	\acro{CPL}{Constant Power Load}	
	\acro{ISS}{Input-to-state Stable}		
\end{acronym}

% Frequent used expressions
\newcommand{\ts}[1]{{\textnormal{#1}}}

\newcommand{\pdiff}[2]{{\frac{\partial #1}{\partial #2}}}
\newcommand{\ie}{\emph{i.e.},\ }
\newcommand{\eg}{\emph{e.g.}}

\newcommand{\Nset}{\mathbb{N}}
\newcommand{\Rset}{\mathbb{R}}

\newcommand{\mc}{\mathcal}
\newcommand{\mb}{\mathbf}
\newcommand{\mbb}{\mathbb}

\DeclareMathAlphabet\mathbfcal{OMS}{cmsy}{b}{n}

\allowdisplaybreaks
\begin{document}
\begin{frontmatter}
   \title{Two-layer nonlinear control of DC-DC buck converters with meshed network topology}
   \author[Paestum]{Pablo R. Baldivieso-Monasterios}\ead{p.baldivieso@sheffield.ac.uk}
   \author[Paestum]{Mahdieh Sadabadi}\ead{m.sadabadi@sheffield.ac.uk}
   \author[Paestum,Paestum1]{George C. Konstantopoulos}\ead{g.konstantopoulos@sheffield.ac.uk}
   \address[Paestum]{Department of Automatic Control \& Systems Engineering, University of Sheffield, Mappin Street, Sheffield S1 3JD, UK}
   \address[Paestum1]{Department of Electrical and Computer Engineering, University of Patras, Patras 26500, Greece}
   \cortext[c1]{Corresponding author}
   \tnotetext[t1]{Work supported by EPSRC under grants No $EP/S001107/1$ and $EP/S031863/1$.}
  \begin{keyword}
    
  \end{keyword}
  \begin{abstract}
    In this paper, we analyse the behaviour of a buck converter network that contains arbitrary, up to mild regularity assumptions, loads. Our analysis of the network begins with the study of the current dynamics; we propose a novel Lyapunov function for the current in closed-loop with a bounded integrator. We leverage on these results to analyse the interaction properties between voltages and bounded currents as well as between node voltages and to propose a two-layer optimal controller that keeps network voltages within a compact neighbourhood of the nominal operational voltage. We analyse the stability of the closed loop system in two ways: one considering the interconnection properties which yields a weaker ISS type property and a second that contemplates the network in closed loop with a distributed optimal controller. For the latter, we propose a novel distributed way of controlling a Laplacian network using neighbouring information which results in asymptotic stability. We demonstrate our results in a meshed topology network containing 6 power converters, each converter feeding an individual constant power load with values chaning arbitrarily within a pre-specified range. 
  \end{abstract}

\end{frontmatter}

\section{Introduction}
\label{sec:introduction}

The presence of DC networks has become an important feature in modern complex systems such as electric vehicles, data centers, aircrafts, spacecrafts, telecom systems, submarines, and many more which include devices such as storage units, photovoltaic panels, or electronic loads, operate in a DC setting \citep{Elsayed2015}. In addition, DC networks provide additional flexibility to handle high voltage transmission capabilities or small localised solutions in the form of a DC \ac{MG} \citep{ZambronideSouza2019}. Therefore, a DC \ac{MG} can grow from simple photovoltaic and storage implementations to highly complex and constantly evolving systems with meshed topology with an increasing tendency to decentralisation to increase reliability and power quality for local users \citep{Planas2015,Guerrero2011}.

Traditionally, control problems in DC \ac{MG}, as mentioned in \cite{Tucci2018}, involve voltage stabilisation, current or load sharing, and voltage balancing, \ie voltage at load buses operating around a nominal voltage level. As the size of DC \ac{MG} grows, so does the need of non-centralised control techniques that are capable of attaining control objectives without complete redesigns when the network changes. Several decentralised voltage control strategies for a network of DC-DC buck converters feeding different types of loads such as pure resistive loads, constant power loads, nonlinear ZIP (constant impedance, constant current, and constant power loads), and time-varying loads have been proposed in the literature. The proposed methods include robust control approaches \cite{Sadabadi2018}, plug-and-play methods \cite{Tucci2016,Tucci2018,Sadabadi2020}, passivity-based techniques \cite{NAHATA2020}, $\mathcal{L}_2$ gain-based loop shaping methods \cite{Sadabadi2021a}, output regulation approaches \cite{Silani2021}, and optimisation-based controller methods using control barrier functions \cite{Kosaraju22}. Although these methods ensure the voltage regulation and guarantee the stability of overall DC networks, they do not consider current regulation/sharing in the network of buck converters. Distributed control approaches for average voltage regulation and proportional current sharing in DC microgrids with DC-DC buck converters have been proposed in \cite{Tucci2018b, Cucuzzella2019,Trip2019,Sadabadi2021b,Nahata2020b}. In \cite{Guo2019a,Guo2019b}, overcurrent protection approaches for a single DC-DC buck converter have been proposed. However, the extension of these methods to a network of DC-DC buck converters is not straightforward. On the other hand, \cite{DePersis2018a} propose a power sharing controller which employs nonlinear consensus to obtain effective load sharing while keeping voltages bounded to a compact set. The problem of power sharing has been discussed in \citep{ParadaContzen2021a} where the authors show that droop control is inefficient to handle a meshed network topology. Despite the rapid developments of control techniques and deeper system theoretic understanding gained in the past decades, two main problems have not yet been fully understood: safe operation during transients for input currents combined with voltage regulation, and the role of current sharing in a meshed network topology.

Among distributed control techniques, receding horizon controllers offer a methodology that includes constraints in its formulation. Two main stream approaches exist for these kind of controllers \citep{Maestre2014a}. The first kind is based on distributed optimisation techniques such as \cite{Kohler2019} and \cite{Engelmann2020}; these aim to solve the overall optimisation problem in a distributed way. The second type of approaches employ robust control techniques like those of \cite{Trodden2017} and \cite{Riverso2018}; the objective of these approaches lies in handling interconnections as disturbances to be rejected. A common feature of both approaches is that they both require some degree of cooperation between the elements of the network. This is particularly useful when global constraints exists as shown by \cite{Jin2021} and \cite{Wang2017}.The former requires a more intensive level of communication, whereas robust approaches require only a limited exchange of information. This by no means represents a dichotomy, these approaches represent two sides of the same problem and the a natural trade-off between performance and ease of computations. For the \ac{MG} case, receding horizon techniques have been used predominantly in their distributed optimisation form as mentioned in the excellent review of \cite{Hu2021}. Some examples of these approaches are \citep{Zhao2015a, Parisio2014, Hans2018} where the authors exploit the benefits of using receding horizon methods for handling power flows and constraints. Robust control methods, however, have not yet, to the best of the authors' knowledge, established a foothold in a \ac{MG} setting. The source of limitation is the ubiquitous assumption on the size of the interaction strength \citep{Baldivieso2018b}. In a \ac{MG} setting, the interactions represent currents flowing through the network whose magnitude is comparable to that of local states. Therefore, robust control methods for distributed receding horizon control require adjustments to how they handle interactions among individual elements of the network.

In this paper, we aim to rigorously analyse the system theoretic properties of a DC network composed of buck converters arranged in meshed topology where each converter feeds generic nonlinear loads with a mild continuity assumption. To this end, we aim to address two important issues in DC distribution networks, that of operational safety in terms of input currents, together with voltage regulation that offers an equivalent version of load sharing for meshed network configurations while maintaining operation voltages within a compact range of a nominal voltage. Our proposed controller architecture contains a decentralised primary controller ensuring operation safety for inputs currents and a receding horizon distributed controller for voltage regulation. The analysis of the primary controller hinges on extending the approach of \cite{Konstantopoulos2019a} where the authors propose a state-limiting converter based on the concept of \ac{BIC}; we prove asymptotic stability inside a compact set representing a desired safety set using a Lyapunov approach defined on this set. We study the interconnection properties between voltages and currents under the scope of the cascaded systems approach; our results show that the kernel of the network Laplacian matrix defines an attractive set which yields an ISS type result. This analysis is, however, not enough to guarantee the convergence to a particular equilibrium point, as we show with some simple examples. To regulate voltages, we employ a distributed voltage regulation based on concepts of robust distributed model predictive controllers. The latter implies that information sharing occurs only once each sampling period as opposed to distributed approaches. On this vein, we rely and extend on the approach of \cite{BaldiviesoMonasterios2018} which proposes an MPC technique capable of handling exogenous information; in addition to neighbouring voltage information, we assume that a nominal measurement of the load is available at each sampling time. The resulting distributed controller is, to the best of the authors' knowledge, the first attempt to use a non-iterative distributed predictive controller. We analyse the nominal equilibrium behaviour of the proposed control law, and show that, assuming a bounded load deviation, it remains in a neighbourhood of the ``real'' equilibrium, \ie the one considering uncertain loads. Similarly, we show that our distributed control law at steady state lies in the equilibrium manifold of the system.

Our contributions are:
\begin{enumerate}[i)]
\item We propose a Lyapunov function for the \ac{BIC}-based primary controller in closed loop form with each node current as opposed to \cite{Konstantopoulos2019a} where only local stability results are obtained based on linearisation. With this Lyapunov function, we can conclude the asymptotic stability of all the current dynamics. 
\item We show that under the cascaded system interpretation, a buck converter network satisfies a weaker ISS property. The state remains close to the kernel of the Laplacian and not around an equilibrium point.
\item We propose a novel non-iterative distributed receding horizon voltage controller to steer the voltages towards a given equilibrium point which does not rely on the ubiquitous assumption of weak coupling between components, as seen for example in \cite{Riverso2018} and \cite{Trodden2017}. In the proposed controller, each node employs information about the voltage values of its neighbours to compute its own control law and exploit the influence of neighbouring nodes. 
\end{enumerate}
\textbf{\emph{Notation}}:  A \ac{MG} can be seen as a connected undirected graph $\mc{G} = (\mc{V},\mc{E})$ where the set of nodes $\mc{V}$ represents a collection of inverters and local loads; the set of edges $\mc{E}\subseteq\mc{V}\times\mc{V}$ defining the \ac{MG} topology is characterised by the node-edge matrix $\mc{B}\in\Rset^{|\mc{E}|\times|\mc{V}|}$ which for edge $e = (i,j)\in\mc{E}$ involving nodes $i$ and $j$ can be defined as $[\mc{B}]_{ei} = 1$ if node $i$ is the source of $e\in\mc{E}$, and $[\mc{B}]_{ej} = -1$ if node $j$ is its sink, and zero otherwise. An $s-$partite graph is a graph whose vertices can be partitioned into $s$ disjoint sets $\mc{V} = \bigcup_{h=1}^{s}\mc{V}_h$ such that for each $h$ and $i,j\in\mc{V}_h$, $(i,j)\notin\mc{E}$. The $2-$norm is denoted $|x| = \norm{x}_2$; the distance of a point $x\in\Rset^n$ to a set $\mc{A}\subset\Rset^n$ is denoted $|x|_\mc{A} = \inf\{|x-y|\colon y\in\mc{A}\}$. For a vector $x\in\Rset^n$, the operator $[\cdot]$ denotes $[x] = \diag{x_1,\ldots,x_n}\in\Rset^{n\times n}$. A set $\mc{A}\subset\Rset^n$ is a \emph{C-set} if it is convex and compact; \emph{PC-set} is a C-set with the origin in its nonempty interior. A class $\mc{K}-$function $\alpha\colon\Rset^+\to\Rset^+$ is a continuous, strictly increasing with $\alpha(0)=0$; $\alpha(\cdot)$ is $\mc{K}_\infty$ if in addition $\lim_{r\to\infty}\alpha(r) = \infty$. A function $\beta\colon\Rset\times\Rset\to\Rset^+$ is $\mc{K}\mc{L}$ if $\beta(\cdot,t)$ is $\mc{K}$ for all $t\in\Rset^+$; $\beta(r,\cdot)$ is continuous, strictly decreasing, and $\lim_{t\to\infty}\beta(r,t) = 0$ for all $r\in\Rset^+$.
\begin{definition}
A set $\mc{R}\subset\mbb{X}$ is \emph{control invariant} for $\dot{x} = f(x,u)$ and constraint sets $(\mbb{X},\mbb{U})$ if for any $x_0\in\mc{R}$, there exists a control law $\mu\colon\mc{R}\to\mbb{U}$, such that the closed loop system $\dot{x} = f(x,\mu(x))$ satisfies $x(t)\in\mc{R}$ for all $t\geq 0$ with $x(0) = x_0$.
\label{def:ci}	
\end{definition}
\begin{definition}A $\mc{C}^{1}$ function $V\colon\Rset^n\to\Rset$ is a \emph{control Lyapunov function} for a system $\dot{x} = f(x,u)$ with $u\in\mbb{U}$ if it is positive definite $V(x)\geq 0$ for all $x\in\Rset^{n}$, radially unbounded $\lim_{|
  x|\to\infty} V(x) = \infty$, and there exists $\alpha_V\colon\Rset\to\Rset$ nondecreasing and radially unbounded such that
\begin{equation}
  \inf_{u\in\mbb{U}}\pdiff{V}{x}f(x,u) + \alpha_V(|x|) < 0
  \label{eq:clf}
\end{equation}
\label{def:clf}
\end{definition}
\begin{definition}[Continuity for set-valued maps]
  A set $\Phi\colon U\to 2^X$ is
  \begin{enumerate}[i)]
  \item \emph{Upper semi-continuous} (u.s.c) at $x_0\in U$ if for an open neighbourhood $V_U\subset U$ of $x_0$, for all  $x\in\ V_U$,  $\Phi(x)\subset V_X$ for an open neighbourhood $V_X\subset X$.\label{def:usc}
  \item \emph{Lower semi-continuous} (l.s.c) at $x_0\in U$ if for any $y_0\in\Phi(x_0)$ and a neighbourhood $V_X\subset X$, there exists $V_U\subset U$ such that for all $x\in V_U$, $\Phi(x)\cap V_X \neq\emptyset.$\label{def:lsc}
  \item \emph{Continuous} if it is u.s.c. and l.s.c.\label{def:cont}
  \end{enumerate}
\label{def:continuous}  
\end{definition}
%\begin{theorem}
%	Let $\Omega\subset\mc{X}^\circ$ be a compact set that is positively invariant with respect to $\diff{z}{t} = f(z)$. Let $V\colon\mc{X}^\circ\to\Rset$ be a continuously differentiable function such that $\diff{V}{t} \leq 0 $ in $\mc{X}^\circ$. Let $\mc{E} $ be the set of all points in $\Omega$ where $\diff{V}{t} = 0 $; let $\mc{M}$ be the largest invariant set in $\mc{E}$, then every solution $z(t)\to \mc{M}$ as $t\to\infty$ when $z(0)\in\mc{X}^\circ$. 
%\end{theorem}	
%
\section{Preliminaries}
\label{sec:preliminaries}
\subsection{Modelling of DC-DC buck converters}
Consider $M$ DC-DC buck converters connected through $M_{E}$ power lines. The $i^\ts{th}$ converter voltage and current $(v_i,i_i)$ are modelled as follows:
\begin{subequations}
\begin{align}
		L_{i} \frac{d i_{i}}{dt} &=  -r_{i}i_{i}-v_{i}+\bar{v}_{i},\\
		C_{i} \frac{d v_{i}}{dt} &= -f_{L,i}(v_i,P_{L,i}) +i_{i} - \mc{B}_i^\top i_\mc{E},
\end{align}
\label{buck}
\end{subequations}
where $L_{i}$ and $r_{i}$ are the converter inductance and its associated parasitic resistance; $C_{i}$ denotes the converter capacitance; $f_{L,i}(\cdot)$ represents the current drawn by the local load according to the power reference $P_{L,i}$; and $\bar{v}_i = V_i^\ts{IN}\tilde{m}_i$ where $V_i^\ts{IN}$ is the input voltage and $\tilde{m}_i$ is the converter duty ratio. The network affects the $i^\ts{th}$ converter via $\mc{B}_i^\top i_\mc{E}$ with $\mc{B}_i^\top$ corresponding to the $i^\ts{th}$ column of the node-edge matrix $\mc{B}$ and $i_\mc{E}$ the current running through the network lines. The dynamics of current flowing through a power line $e = (i,j)$ connecting the converter $i$ to $j$ is
\begin{equation}
	\begin{split}
		L_{e} \frac{d i_{e}}{dt} &= -r_e i_e + \mathcal{B}_e v,
	\end{split}
\end{equation}
where $L_e$ and $r_e$ are the inductance and resistance of line $e$. Similarly to the previous case, the current flowing through line $e = (i,j)\in\mc{E}$ depends on the voltages of node $i$ and $j$ characterised by $\mc{B}_e$ which is the $e^\ts{th}$ row of $\mc{B}$. 
% %
% \subsection{Problem statement and control objectives}
% %
% two layered controller decentralised primary control  and distributed secondary control

% follow reference trajectories without violating constraints

% minimise a running cost

\subsection{Current control Structure} 
The decentralised primary current controller follows the structure  developed in \cite{Konstantopoulos2019a} which ensures current limitation. The main modification is that the controller proposed in \cite{Konstantopoulos2019a} guarantees $|i_i|\leq I_i^\ts{max}$, where $I_i^\ts{max}$ is associated with the rated capacity, but in the context of buck converters, the input current cannot be negative, \ie only one directional power flows are allowed. To overcome this problem, we introduce a shift of coordinates $ i_i = \tilde{i}_i + i_{s,i}$ where $i_{s,i} = \frac{1}{2}I_i^\ts{max}$, then the proposed nonlinear PI controller that allows tracking of a current reference $i_i^\ts{ref}$ is 
\begin{subequations}
\begin{align}
		\bar{v}_i& = v_i - k_{P,i}\tilde{i}_{i} + r_ii_{s,i} + M_{i} \sin(\sigma_{i}),\label{eq:control_law}\\
		\frac{d \sigma_{i}}{dt} &= \frac{k_{I,i}}{M_i}(i_{i}^\ts{ref}-\tilde{i}_{i} - i_{s,i}) \cos(\sigma_i),\label{eq:int_dyn}
\end{align}
\label{eq:cont}
\end{subequations}
where $k_{P,i}>0$, $k_{I,i}>0$, $M_i = \frac{1}{2}(r_i+k_{P,i})I_{i}^\ts{max}$, are controller gains to ensure the $i^\ts{th}$ inverter current satisfies $|\tilde{i}_i(t)|\leq \frac{1}{2}I_{i}^\ts{max}$. The closed-loop dynamics with the nonlinear control law \eqref{eq:cont} are described by the following equations for all $i\in\mc{V}$ and $e\in\mc{E}$:
\begin{subequations} 
	\begin{align}
		L_{i} \frac{d \tilde{i}_{i}}{dt} &= -(r_{i}+k_{P,i})\tilde{i}_{i} +M_i \sin\sigma_i,\label{eq:current}\\
		M_{i} \frac{d \sigma_{i}}{dt} &= k_{I,i}(i_{i}^\ts{ref}-\tilde{i}_{i}-i_{s,i}) \cos\sigma_i,\label{eq:bic}\\
		C_{i} \frac{d v_{i}}{dt} &= -f_{L,i}(v_i,P_{L,i})+ i_{s,i} + \tilde{i}_i -\mc{B}_i^\top i_\mc{E},\label{eq:voltage}\\
		L_{e} \frac{d i_{e}}{dt} &= -r_{e} i_{e} + \mathcal{B}_{e}v\label{eq:lines}.
	\end{align}
	\label{eq:cl}
\end{subequations}
The closed loop equilibrium points are given by the solution of\footnote{The saturation function is defined as  $\ts{sat}(x,y) = \begin{cases} x & |x|\leq y\\ y~\ts{sgn}(x) & |x|> y\\ \end{cases}$} 
\begin{subequations}
	\begin{align}
		\tilde{i}_i^\ts{eq} &= \ts{sat}(i_i^\ts{ref} - i_{s,i},\frac{1}{2}I_{i}^\ts{max}),\label{eq:eq_current}\\	
		\sigma_i^\ts{eq} &= \ts{sat}\biggr ( \arcsin{\frac{2(i_{i}^\ts{ref} - i_{s,i})}{I_{i}^\ts{max}}},\frac{\pi}{2}\biggr)\label{eq:eq_bic},\\
		\tilde{i}^\ts{eq} + i_{s,i}&= f_{L}(v^\ts{eq},P_{L}) +\mc{B}^\top{r}_{E}^{-1}\mc{B}v^\ts{eq},\label{eq:eq_voltage}\\
		i^\ts{eq}_\mc{E}&={r}_{E}^{-1}\mc{B}v^\ts{eq},\label{eq:line_equi}
	\end{align}
	\label{eq:equilibrium}
\end{subequations}
where $f_L(v,P_L) = (f_{L,1}(v_1,P_{L,1}),\ldots, f_{L,M}(v_M,P_{L,M}))$ and $r_E=diag(r_1,\dots,r_{M_E})$. Because of the voltage-current decoupling, the equilibrium point has two distinct parts: one depending exclusively on local information, \ie current and integrator, and network wide equilibria corresponding to both voltages and line currents.
\begin{remark}
The operator $\pdiff{f_L}{v} +\mc{B}^\top{r}_E^{-1}\mc{B}$ is invertible as long as $\pdiff{f_L}{v}$ is positive definite, \ie for passive loads. The Laplacian matrix $\mc{B}^\top{r}_E^{-1}\mc{B}$ is positive semi-definite with a zero eigenvalue. Adding $\pdiff{f_L}{v}$ leads to an eigenvalue shift which implies, following Weyl's inequalities, that  $\pdiff{f_L}{v} +\mc{B}^\top{r}_E^{-1}\mc{B}$ has eigenvalues different than zero.
\label{rem:Load_resistive}
\end{remark}
We make a further assumption on the load characteristics:
\begin{assumption}
For each $i\in\mc{V}$, the current drawn by each load can be written as $f_{L,i}(v_i,P_{L,i}) = g_{L,i}(v_i)d_{L,i}$ with $g_{L,i}\colon\Rset^n_{L,i}\to\Rset$ and some $d_{L,i}\in\Rset^{n_{L,i}}$ defining the load characteristics. In addition, the function $g_{L,i}(\cdot)$ is $\mc{C}^1$ in an open neighbourhood $A\subset\Rset$ of a voltage $v\neq 0$.
\label{assum:load_structure}	
\end{assumption}
\begin{remark}
	Assumption~\ref{assum:load_structure} is not restrictive; for example a \emph{ZIP} load can be written as:
\begin{equation}
  i_{L,i} = \frac{1}{R_{L,i}}v_i + I_{L,i}  + \frac{P_{L,i}}{v_i}
\label{eq:zip_load}  
\end{equation}
for an impedance $R_{L,i}$, constant current $I_{L,i}$, and constant power $P_{L,i}$. Therefore, $g_{L,i}(v_i) = (v_{i},1,v_i^{-1}) $, $d_i = (R_{L,i},I_{L,i},P_{L,i})$, and total load power is $\frac{v_i^2}{R_{L,i}}+I_{L,i}v_i+P_{L,i}$.
\end{remark}
\subsection{Constraints and control objectives}
The buck converter network is subject to constraints on the inputs, currents, and output voltages. The first assumption on~\eqref{eq:cl} is concerned with the time-scale separation between~\eqref{eq:current}--\eqref{eq:voltage} and~\eqref{eq:lines}.
\begin{assumption}[Time-scale separation]
The network parameters satisfy
\begin{equation}
	\min_{i\in\mc{V}}\biggl \{\frac{L_i}{r_i+k_{P,i}},\frac{4C_i P_{L,i}}{(I_i^\ts{max})^2},\frac{r_i+k_{P,i}}{k_{I,i}}\biggr \} \gg \max_{e\in\mc{E}}\biggl \{\frac{L_e}{r_e}\biggr\}
\end{equation} 
where $P_{L,i}$ is the load power at node $i\in\mc{V}$.
\label{assump:time-scale}
\end{assumption}
Assumption~\ref{assump:time-scale} essentially requires the time constants for each power line to be sufficiently small; the authors in \cite{Venkatasubramanian1995} argue in favour of similar assumptions. A consequence of Assumption~\ref{assump:time-scale} is that~\eqref{eq:cl} can be reduced to only node dynamics, while line currents satisfy the algebraic relation~\eqref{eq:line_equi}. This Assumption takes into account the case of low voltage microgrids, \ie $L_e = 0$. The overall state for each node is therefore $x_i = (v_i,\tilde{i}_i,\sigma_i)$ with inputs $u_i = i_i^\ts{ref} - i_{s,i}$. The system is subject to constraints on the inputs and states such that for all $i\in\mc{V}$:
\begin{equation}
x_i \in \mbb{X}_i,\quad u_i\in\mbb{U}_i
\end{equation}
which satisfy the following assumption:
\begin{assumption}[Network constraints]
 For each $i\in\mc{V}$, 
 \begin{enumerate}[i)]
 \item the input constraint set $\mbb{U}_i\subset\Rset$ is a PC-set, while the state constraint $\mbb{X}_i$ is a C-set.
 \item the load characteristics $d_{L,i} \in\Rset^{n_{L,i}}$ are constrained to a PC-set $\mbb{D}_i\subseteq\Rset^{n_{L,i}}$.
 \end{enumerate}
\label{assump:constraints}
\end{assumption}
\begin{remark}
	The constraints imposed on each node are generally hypercubes, \ie $[0,V_\ts{in}]\times\frac{1}{2}[-I_i^\ts{max},I_i^\ts{max}]\times [-\frac{\pi}{2},\frac{\pi}{2}]$, and input constraints $\mbb{U}_i =\frac{1}{2} [-I_i^\ts{max},I_i^\ts{max}]$. In our setting, however, the feedback transformation~\eqref{eq:control_law} imposes additional constraints
	\[
	v_i - k_{P,i}\tilde{i}_i + r_i i_{s,i} + M\sin\sigma_i \in [0,V_\ts{in}].
	\]
	which constrains the converter modulation index.
\end{remark}
The aim is to solve the following optimal control problem: from a state $x(0)$, determine the control policy, \ie reference currents, that minimises the criteria  
\begin{equation}
	J(x,u,v^*) = \int_0^\infty (\mathds{1}_{|\mc{V}|}v^* - Cx)^\top (\mathds{1}_{|\mc{V}|}v^* - Cx) + \gamma(u)dt.
\label{eq:overall_cost}
\end{equation}
where the node voltage is an output defined by the linear map $v = Cx$. This criteria encourages node voltages to operate near a common point $v^*$, and each source to feed its associated local load while minimising its operating costs $\gamma(\cdot)$.
\section{Primary controller and interconnection analysis}
\label{sec:inv_stability}
In this section, we analyse the properties of the current controller, and its relation to the network voltage dynamics. First in Section~\ref{sec:inv_stability:primary}, we prove the invariance properties of the bounded integral control, then using a Lyapunov function, we can infer stability using the invariance principle. Later in Section~\ref{sec:inv_stability:cascaded}, the dynamics of each node are decomposed into two constituting parts: the \emph{driving system} given by currents and integrator; and the \emph{driven system} composed by node voltages.

\subsection{Current controller properties}
\label{sec:inv_stability:primary}
The next result is an adaptation to our setting of \cite[Proposition 3]{Konstantopoulos2019a}
\begin{lemma}[Bounded integral control] 
For all $i\in\mc{V}$, the set $\mbb{Z}_i = \frac{1}{2}[-I_i^\ts{max},I_i^\ts{max}]\times [-\frac{\pi}{2},\frac{\pi}{2}]$ is \emph{control invariant} for \eqref{eq:current} and \eqref{eq:bic} with $I_i^\ts{max} = \frac{2M_i}{(r_i+k_{P,i})}$.
\label{lem:bic}
\end{lemma}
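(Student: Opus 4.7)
The plan is to verify invariance directly by inspecting the vector field on the four edges of the rectangle $\mbb{Z}_i$ and appealing to Nagumo's subtangentiality criterion (any admissible control law then witnesses control invariance in the sense of Definition~\ref{def:ci}). The key algebraic fact is the identity $M_i = \tfrac{1}{2}(r_i+k_{P,i})I_i^\ts{max}$, which equivalently reads $I_i^\ts{max} = \tfrac{2M_i}{r_i+k_{P,i}}$; this is what couples the bound on $\tilde{i}_i$ to the amplitude $M_i$ in the dynamics.

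First I would look at the horizontal edges $\sigma_i = \pm\pi/2$. On these edges $\cos\sigma_i = 0$, so equation~\eqref{eq:bic} gives $\dot\sigma_i = 0$ independently of the input $i_i^\ts{ref}$. These two edges are therefore forward invariant under any control. Next I would look at the vertical edges $\tilde{i}_i = \pm\tfrac{1}{2}I_i^\ts{max}$. Substituting the identity above into~\eqref{eq:current} yields
\begin{equation}
L_i\,\dot{\tilde{i}}_i\bigr|_{\tilde{i}_i=\pm\tfrac{1}{2}I_i^\ts{max}} = \mp M_i + M_i\sin\sigma_i = M_i(\sin\sigma_i\mp 1),
\end{equation}
which is non-positive on the upper edge $\tilde{i}_i = +\tfrac{1}{2}I_i^\ts{max}$ and non-negative on the lower edge, uniformly in $\sigma_i\in[-\tfrac{\pi}{2},\tfrac{\pi}{2}]$ and uniformly in the control input. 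Hence on both vertical edges the vector field points into (or tangent to) $\mbb{Z}_i$.

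The four corners are handled automatically since the vector field is continuous and both edges meeting at a corner satisfy the inward/tangent condition; no component of the velocity can point strictly outward. By Nagumo's theorem the boundary of $\mbb{Z}_i$ is never crossed outward, so any trajectory of~\eqref{eq:current}--\eqref{eq:bic} originating in $\mbb{Z}_i$ remains in $\mbb{Z}_i$ for all $t\geq 0$, and this holds for any admissible input $i_i^\ts{ref}$; thus taking for instance $\mu(\tilde{i}_i,\sigma_i)\equiv i_{s,i}$ (or any feedback with values in $\mbb{U}_i$) satisfies Definition~\ref{def:ci}.

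The only non-routine point is the identification of the correct scaling between $M_i$ and $I_i^\ts{max}$ that makes the sign condition on the current edges hold without dependence on $i_i^\ts{ref}$; once that is in place the argument is a short Nagumo-type verification rather than a Lyapunov construction. The stability claim inside $\mbb{Z}_i$ is then separate and handled subsequently via a Lyapunov function on $\mbb{Z}_i$ together with LaSalle's invariance principle, as announced at the beginning of Section~\ref{sec:inv_stability:primary}.
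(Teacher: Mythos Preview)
Your proof is correct and takes a genuinely different route from the paper's. The paper argues in two separate steps: for the current coordinate it introduces $W_i(\tilde{i}_i,\sigma_i)=\tfrac{1}{2}\tilde{i}_i^2$, bounds $\dot W_i$ to obtain an ultimate-boundedness estimate via \cite[Theorem~4.18]{Khalil2001}, and infers $|\tilde{i}_i|\leq I_i^\ts{max}/2$; for the integrator coordinate it runs a contradiction argument (reaching $\sigma_i=\pm\pi/2$ forces $\cos\sigma_i=0$ and halts $\dot\sigma_i$), supplemented by the instability of the boundary equilibria from \cite{Konstantopoulos2019a}. Your Nagumo-type verification instead checks directly that on each edge of the rectangle the outward-normal component of the vector field is nonpositive, which is exactly the geometric content of invariance and requires neither a Lyapunov-type estimate nor any equilibrium analysis. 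The trade-off is that you invoke Nagumo's theorem as a black box, whereas the paper's argument is closer to first principles; on the other hand your computation is shorter, makes the role of the scaling $M_i=\tfrac{1}{2}(r_i+k_{P,i})I_i^\ts{max}$ on the vertical edges completely explicit, and yields forward invariance for all $t\ge 0$ rather than passing through an ultimate-boundedness step.
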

\begin{proof}
	Consider the function $W_i(\tilde{i}_i,\sigma_i) = \frac{1}{2}\tilde{i}_i^2$, and its time derivative:
	\[
	\dot{W}_i  = -(r_i + k_{P,i})\tilde{i}_i^2 + M_i\tilde{i}\sin\sigma_i \leq -(1-\gamma_i)(r_i + k_{P,i})\tilde{i}_i^2
	\]
	for all $|\tilde{i}_i| \geq \frac{M_i}{\gamma_i k_{P,i}}$ and $\gamma_i\in (0,1)$ which implies $\tilde{i}_i(t)$ is ultimately bounded. Following \cite[Theorem 4.18]{Khalil2001}, the shifted current satisfies $|\tilde{i}_i|\leq \frac{I_i^\ts{max}}{2}$.
	
	For the integrator we use contradiction, suppose $\sigma_i(0) \in [-\frac{\pi}{2},\frac{\pi}{2}]$ and   $\exists t_u >0$ such that $\sigma_i(t_u) \notin [-\frac{\pi}{2},\frac{\pi}{2}]$. By continuity of solutions, there exist a $t_1$ with $\sigma_i(t_1) = \frac{\pi}{2}$, however this implies that $\cos(\sigma_i(t_1)) = 0$ which stops integration. On the other hand, there are three equilibrium points in $\mbb{Z}_i$ following~\eqref{eq:eq_current} and \eqref{eq:eq_bic} for each $u_i\in\mbb{U}_i$, \ie $z_{i,1}^\ts{eq} = (u_i,\arcsin\frac{2u_i}{I_i^\ts{max}})$ and $z_{i,2,3}^\ts{eq} = (\pm \frac{I_i^\ts{max}}{2},\pm \frac{\pi}{2})$. Using \cite[Theorem 1]{Konstantopoulos2019a}, the equilibria at the boundary of $\mbb{Z}_i$ is unstable. For the case, $u_i = \pm I_i^\ts{max}$, the associated equilibrium is $\sigma_i^\ts{eq} = \pm\frac{\pi}{2}$ which is stable. This implies a contradiction, therefore $|\sigma_i(t)| \leq  \frac{\pi}{2}$ for all $t\geq 0$. As a result, the set $\mbb{Z}_i$ is control invariant for shifted currents and integrator for any $u_i \in \mbb{U}_i$.
\end{proof}
\begin{proposition}[Bounded integrator Lyapunov function]
	For each $i\in\mc{V}$ and any $u_i\in \frac{1}{2}(-I_i^\ts{max},I_i^\ts{max})$, the $\mc{C}^{1}$ function $W_i\colon\Rset^2\to\Rset$  defined as
	\begin{equation}
		\begin{split}
			W_i(\tilde{i}_i,\sigma_i) = & \frac{1}{2}L_i(\tilde{i}_i - u_i)^2 +\\
			& \frac{M_i^2}{k_{I,i}}\biggl (1 - \frac{2u_i}{I_i^\ts{max}}\biggr ) \ln\biggl | \frac{M_i\sqrt{1 - \bigl ({\frac{2u_i}{I_i^\ts{max}}}\bigr )^2}}{\cos\sigma_i}\biggr | +\\
	 & \frac{M_i^2}{k_{I,i}}\frac{2u_i}{I_i^\ts{max}}\ln\biggl | \frac{1+\frac{2u_i}{I_i^\ts{max}}}{1+\sin\sigma_i}\biggr |
		\end{split}
		\label{eq:current_Lyapunov}
	\end{equation}
	is a Lyapunov function for the driving subsystem inside $\ts{int}(\mbb{Z}_i)$.
	\label{prop:lyapunov_driving}
\end{proposition}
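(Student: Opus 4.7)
\emph{Plan.} The plan is to verify the three defining properties of a Lyapunov function for the driving subsystem~\eqref{eq:current}--\eqref{eq:bic} about the equilibrium $(\tilde i_i^\ts{eq},\sigma_i^\ts{eq})=(u_i,\arcsin(2u_i/I_i^\ts{max}))$: smoothness on $\ts{int}(\mbb{Z}_i)$, positive definiteness relative to the equilibrium, and non-positivity of $\dot W_i$ along trajectories. For $u_i\in\tfrac{1}{2}(-I_i^\ts{max},I_i^\ts{max})$ the equilibrium lies in $\ts{int}(\mbb{Z}_i)$, where $\cos\sigma_i>0$ and $1+\sin\sigma_i>0$, so both logarithmic terms are well defined and $W_i$ is $\mc{C}^{1}$ there. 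To compress the algebra I will use the shorthand $\alpha:=2u_i/I_i^\ts{max}\in(-1,1)$, together with the identity $M_i\alpha=(r_i+k_{P,i})u_i$ that follows directly from $M_i=\tfrac{1}{2}(r_i+k_{P,i})I_i^\ts{max}$.

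The crux of the argument is the simplification of $\partial W_i/\partial\sigma_i$. Differentiating the two logarithmic terms and combining them over the common denominator $\cos\sigma_i(1+\sin\sigma_i)$ yields $(M_i^{2}/k_{I,i})\bigl[(1-\alpha)\tan\sigma_i-\alpha\cos\sigma_i/(1+\sin\sigma_i)\bigr]$; using $\cos^{2}\sigma_i=(1-\sin\sigma_i)(1+\sin\sigma_i)$ the numerator factors as $(1+\sin\sigma_i)(\sin\sigma_i-\alpha)$, and cancelling $(1+\sin\sigma_i)$ leaves the compact form
\[
\pdiff{W_i}{\sigma_i}=\frac{M_i^{2}}{k_{I,i}}\cdot\frac{\sin\sigma_i-\alpha}{\cos\sigma_i}.
\]
Paired with $\partial W_i/\partial\tilde i_i=L_i(\tilde i_i-u_i)$, the gradient vanishes precisely at the equilibrium.

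Evaluating $\dot W_i$ along~\eqref{eq:current}--\eqref{eq:bic}, the substitution $M_i\alpha=(r_i+k_{P,i})u_i$ cancels the $M_i\sin\sigma_i$ cross-terms and collapses the Lie derivative to
\[
\dot W_i=-(r_i+k_{P,i})(\tilde i_i-u_i)^{2}\le 0,
\]
with equality only on the line $\{\tilde i_i=u_i\}$. For positive definiteness, differentiating $(\sin\sigma_i-\alpha)/\cos\sigma_i$ once more and substituting $\sin\sigma_i^\ts{eq}=\alpha$ gives a diagonal Hessian at the equilibrium with entries $L_i$ and $M_i^{2}/k_{I,i}$, both strictly positive; subtracting the (finite) constant $W_i(u_i,\arcsin\alpha)$ then produces a function that vanishes at the equilibrium and is positive in a neighbourhood. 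For the boundary behaviour on $\ts{int}(\mbb{Z}_i)$, the singularities $-\ln\cos\sigma_i$ at $\sigma_i\to\pm\pi/2$ and $-\ln(1+\sin\sigma_i)$ at $\sigma_i\to-\pi/2$ combine with coefficients $(1-\alpha)>0$ and $(1+\alpha)>0$, so $W_i\to+\infty$ as $(\tilde i_i,\sigma_i)\to\partial\mbb{Z}_i$.

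The main obstacle is precisely the trigonometric identity that collapses $\partial W_i/\partial\sigma_i$ into $(\sin\sigma_i-\alpha)/\cos\sigma_i$: without this cancellation the cross-terms in $\dot W_i$ do not simplify and one is left with a sign-indefinite remainder. A secondary subtlety is the boundary bookkeeping, since the sign of $\alpha$ determines which logarithm carries the dominant singularity at $\sigma_i=-\pi/2$; the paired coefficients $(1\pm\alpha)$ are both positive exactly because $|\alpha|<1$, which is where the open input constraint $u_i\in\tfrac{1}{2}(-I_i^\ts{max},I_i^\ts{max})$ is essential.
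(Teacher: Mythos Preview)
Your proof is correct and follows essentially the same route as the paper: compute $\dot W_i$ from the partials, identify the unique interior critical point via $\nabla W_i=0$, and check the Hessian. Your collapsed form $\partial W_i/\partial\sigma_i=(M_i^{2}/k_{I,i})(\sin\sigma_i-\alpha)/\cos\sigma_i$ is tidier than the paper's term-by-term computation and delivers the exact equality $\dot W_i=-(r_i+k_{P,i})(\tilde i_i-u_i)^{2}$, where the paper records only the looser bound with a factor $\gamma_i\in(0,1)$.

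One small correction on the positive-definiteness step: $W_i$ does \emph{not} blow up on the $\tilde i_i$-edges $\{\tilde i_i=\pm\tfrac{1}{2}I_i^{\ts{max}}\}$ of $\partial\mbb Z_i$ (the quadratic term is finite there), so boundary blow-up alone does not upgrade the local minimum to a global one on $\ts{int}(\mbb Z_i)$. The fix is immediate from what you already have: differentiating your compact form once more gives $\partial^{2} W_i/\partial\sigma_i^{2}=(M_i^{2}/k_{I,i})(1-\alpha\sin\sigma_i)/\cos^{2}\sigma_i>0$ for all $\sigma_i\in(-\pi/2,\pi/2)$ and $|\alpha|<1$, which together with $\partial^{2}W_i/\partial\tilde i_i^{2}=L_i>0$ and the vanishing cross-derivative makes $W_i$ strictly convex on the whole strip, hence globally minimised at its unique critical point --- this is precisely the paper's route.
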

\begin{proof}
The time derivative of~\eqref{eq:current_Lyapunov} yields:
\[
\begin{split}
	\dot{W}_i = & \pdiff{W}{\tilde{i}_i}f(\tilde{i},\sigma_i) + \pdiff{W}{\sigma_i}g(\tilde{i},\sigma_i,u_i)\\
	= &~~~ (\tilde{i}_i - u_i)( -(r_{i}+k_{P,i})\tilde{i}_{i} +M_i \sin\sigma_i)  \\
	& +M_i\biggl (1 - \frac{2u_i}{I_i^\ts{max}}\biggr )\frac{\sin\sigma_i}{\cos\sigma_i} (u_i-\tilde{i}_{i}) \cos\sigma_i  \\
	& -M_i\frac{2u_i}{I_i^\ts{max}}\frac{\cos\sigma_i}{1+\sin\sigma_i} (u_i-\tilde{i}_{i}) \cos\sigma_i 
\end{split}
\]
which results in
\begin{equation}
  \dot{W}_i  < -\gamma_i(r+k_{P,i})(\tilde{i}_i - u_i)^2
  \label{eq:Lyap_decrease}
\end{equation}

with $\gamma_i\in(0,1)$. Note that, following~\eqref{eq:eq_current}, the derivative $\dot{W}_i$ only vanishes at the equilibrium point. For positive definiteness, we analyse the extremums of the candidate function $W_i(\cdot,\cdot)$; the only viable solutions of $\nabla W_i = 0$ are
\[
\tilde{i}_i^\ts{eq} = u_i;\quad \sigma_i^\ts{eq} = \arcsin\frac{2u_i}{I_i^\ts{max}}
\]
This solution represents the equilibrium point given in~\eqref{eq:eq_current} and \eqref{eq:eq_bic}. There is an extra solution at $\sigma = -\frac{\pi}{2}$, but this lies outside the domain of definition of $W_i$ so it is discarded. The Hessian of $W_i$ satisfies
\[
\nabla^2W_i =\begin{bmatrix}
	L_i& 0\\ 0 & \bigl (1 - \frac{2u_i}{I_i^\ts{max}}\bigr )\sec^2\sigma_i + \frac{2u_i}{I_i^\ts{max}}\frac{1}{1+\sin\sigma_i}
\end{bmatrix} > 0
\]
Since $\sec\sigma_i > 0$ and $1+\sin\sigma_i >0$ for $\sigma_i\in \bigl ( -\frac{\pi}{2},\frac{\pi}{2}\bigr )$, then $W_i(\tilde{i}_i^\ts{eq},\sigma_i^\ts{eq}) \leq W_i(\tilde{i}_i,\sigma_i)$ for all $(\tilde{i}_i,\sigma_i)\in\ts{int}(\mbb{Z}_i)$. As a result, the candidate function is a Lyapunov function.
\end{proof}
An immediate consequence of the above result is
\begin{corollary}[Asymptotic stability driving subsystem]
	 Suppose Assumption~\ref{assump:constraints} holds. For all $i\in\mc{V}$ and any $u_i\in \frac{1}{2}(-I_i^\ts{max},I_i^\ts{max})$, $|(\tilde{i}_i(t),\sigma_i(t))|_{(\tilde{i}_i^\ts{eq},\sigma_i^\ts{eq})}\to 0$ as $t\to\infty$.
	\label{cor:Lyapunov_stability}
\end{corollary}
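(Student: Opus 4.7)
The plan is to chain together the two preceding results: Lemma~\ref{lem:bic} (control invariance of $\mbb{Z}_i$) and Proposition~\ref{prop:lyapunov_driving} (existence of a strict Lyapunov function $W_i$ on $\ts{int}(\mbb{Z}_i)$), and then invoke the standard Lyapunov asymptotic stability theorem. Since $W_i$ is already shown to be positive definite, continuously differentiable and to satisfy the strict decrease $\dot W_i < -\gamma_i (r_i+k_{P,i})(\tilde{i}_i - u_i)^2$ away from the equilibrium, the bulk of the work is packaging these facts, not proving new estimates.

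First, I would observe that for an initial condition $(\tilde{i}_i(0),\sigma_i(0)) \in \ts{int}(\mbb{Z}_i)$ with $u_i\in \frac{1}{2}(-I_i^\ts{max},I_i^\ts{max})$, the candidate function $W_i$ is finite at the initial state and diverges as $\sigma_i \to \pm\frac{\pi}{2}$ (through the $\ln|\cdot/\cos\sigma_i|$ and $\ln|\cdot/(1+\sin\sigma_i)|$ terms with their coefficients having the required sign because $|2u_i/I_i^\ts{max}|<1$). Consequently, the sub-level set
\[
\Omega_{c_0} = \{(\tilde{i}_i,\sigma_i)\in\ts{int}(\mbb{Z}_i) \st W_i(\tilde{i}_i,\sigma_i) \leq W_i(\tilde{i}_i(0),\sigma_i(0))\}
\]
is a compact subset of $\ts{int}(\mbb{Z}_i)$. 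From the strict decrease \eqref{eq:Lyap_decrease} established in the proof of Proposition~\ref{prop:lyapunov_driving}, $\Omega_{c_0}$ is forward invariant, so the trajectory stays bounded away from the boundary of $\mbb{Z}_i$ and is globally defined on $[0,\infty)$.

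Next, I would apply the classical Lyapunov theorem on $\Omega_{c_0}$: since $W_i$ is positive definite with respect to $(\tilde{i}_i^\ts{eq},\sigma_i^\ts{eq})$ (shown via the Hessian argument) and $\dot W_i$ is negative definite with respect to the same point — indeed, combining $\dot W_i < -\gamma_i(r_i+k_{P,i})(\tilde{i}_i-u_i)^2 = -\gamma_i(r_i+k_{P,i})(\tilde{i}_i-\tilde{i}_i^\ts{eq})^2$ with the fact that $\dot W_i = 0$ forces $\tilde{i}_i=u_i=\tilde{i}_i^\ts{eq}$, and then substitution into \eqref{eq:bic} forces $\sin\sigma_i = 2u_i/I_i^\ts{max}$, i.e.\ $\sigma_i = \sigma_i^\ts{eq}$ — the equilibrium is asymptotically stable. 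A LaSalle-style argument on the compact invariant set $\Omega_{c_0}$ gives the same conclusion with essentially the same calculation. Either route yields $|(\tilde{i}_i(t),\sigma_i(t))|_{(\tilde{i}_i^\ts{eq},\sigma_i^\ts{eq})}\to 0$.

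The main obstacle, and the only point requiring care beyond invoking standard theorems, is the interplay between the open domain of definition of $W_i$ and the closed invariant set $\mbb{Z}_i$ of Lemma~\ref{lem:bic}: Lemma~\ref{lem:bic} only guarantees trajectories stay in $\mbb{Z}_i$, but Proposition~\ref{prop:lyapunov_driving} needs strict interiority. The blow-up of $W_i$ at $\sigma_i=\pm\pi/2$ is exactly what closes this gap, turning sub-level sets into compact subsets of $\ts{int}(\mbb{Z}_i)$ and preventing trajectories from sliding onto the boundary equilibria $(\pm\tfrac{1}{2}I_i^\ts{max},\pm\tfrac{\pi}{2})$ identified as unstable in Lemma~\ref{lem:bic}. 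Once this trapping argument is spelled out, asymptotic stability is immediate.
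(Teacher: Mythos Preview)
Your approach is essentially the same as the paper's: define sub-level sets of $W_i$, argue they are positively invariant, and apply LaSalle's invariance principle. You add a worthwhile detail the paper glosses over, namely the blow-up of $W_i$ at $\sigma_i=\pm\tfrac{\pi}{2}$ ensuring that sub-level sets are compact subsets of $\ts{int}(\mbb{Z}_i)$.

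One correction: $\dot W_i$ is \emph{not} negative definite. From the computation in Proposition~\ref{prop:lyapunov_driving} one gets $\dot W_i = -(r_i+k_{P,i})(\tilde{i}_i-u_i)^2$, which vanishes on the entire line $\tilde{i}_i=u_i$, so the direct Lyapunov theorem does not apply and LaSalle is genuinely needed. In the LaSalle step you cite the wrong equation: substituting $\tilde{i}_i=u_i$ into \eqref{eq:bic} only gives $\dot\sigma_i=0$, which does not pin down $\sigma_i$. What you need is invariance: if the trajectory stays in $\{\tilde{i}_i=u_i\}$ then $\dot{\tilde{i}}_i=0$, and substituting into \eqref{eq:current} yields $M_i\sin\sigma_i=(r_i+k_{P,i})u_i$, i.e.\ $\sigma_i=\sigma_i^\ts{eq}$. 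With that fix the argument is complete and matches the paper.
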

\begin{proof}
The proof relies on the LaSalle's invariance theorem. To this aim, let 
\[\Omega_{i,(c,u_i)} = \{(\tilde{i}_i,\sigma_i)\colon W_i(\tilde{i}_i,\sigma_i) \leq c\} \]
be a $c-$level set for $W_i$. We claim $\Omega_{i,(c,u_i)}$ is a PI set for all $|u_i|< \frac{1}{2}I_i^\ts{max}$. The normal vector, the gradient $\nabla W_i$, of each level set forms a negative angle with the vector field, $f_i$, defining the dynamics.
\[	\nabla W_i f_i(\tilde{i}_i,\sigma_i,u_i) \leq 0 \]	
Next, by construction, we note that $W_i(\tilde{i}_i,\sigma_i) =0$ only at the equilibrium $(\tilde{i}_i^\ts{eq},\sigma_i^\ts{eq})\in \Omega_{i,(c,u_i)}$. As a result, the asymptotic stability of the equilibrium point follows by applying the invariance principle.
\end{proof}
Both Proposition~\ref{prop:lyapunov_driving} and Corollary~\ref{cor:Lyapunov_stability} establish asymptotic stability inside the constraint set $\mbb{Z}_i$; furthermore, each level set $\Omega_{i,(c,u_i)}\subseteq\mbb{Z}_i$ of $W_i$ is also positively invariant. The case for which $u_i = \pm \frac{I_i^\ts{max}}{2}$ corresponds to a Lyapunov function
\[
W_i(\tilde{i},\sigma_i) = \frac{1}{2} L_i \tilde{i}_i^2 + \frac{M_i^2}{k_{I,i}}\ln \biggl | \frac{2}{1\pm\sin\sigma_i}\biggr |.
\]
This corresponds to a limiting process for $u_i\to\pm I_i^\ts{max}$. The Lyapunov function~\eqref{eq:current_Lyapunov} can be decomposed into two terms corresponding to currents and integrator states, \ie $W_i = U_i(\tilde{i},u_i) + \Phi_i(\sigma_i,u_i)$. The term associated with the energy of the inductor is defined globally, but the one corresponding to the integrator is only defined for the constraint set $\mbb{Z}_i$, see Fig.~\ref{fig:Lyapunov_function}. This is a consequence of the multi-stability properties of \eqref{eq:bic}; those equilibrium points corresponding to $(\pm I_i^\ts{max},\pm \frac{\pi}{2} \pm \pi k )$ for $k \in \{0,1,\ldots\}$ for $u_i\in \frac{1}{2}(-I_i^\ts{max},I_i^\ts{max})$ are not stable which ensures that the basin of attraction remains inside $\mbb{Z}_i$, see Fig.~\ref{fig:Lyapunov_function_W}. The following corollary presents an alternative form for the bounded integrator~\eqref{eq:int_dyn} by means of a diffeomorphism:
\begin{corollary}
The mapping $\Phi\colon\Rset^2\to\Rset^2$ given by $\Phi(\tilde{i},\sigma_i) = (\tilde{i}_i,\sin\sigma_i)$ is a diffeomorphism in $\Rset\times [-\frac{\pi}{2},\frac{\pi}{2}]$.
\label{lem:change_coordinates}
\end{corollary}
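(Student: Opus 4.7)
The plan is to verify the three standard criteria for a diffeomorphism between $\mathbb{R}\times[-\pi/2,\pi/2]$ and its image $\mathbb{R}\times[-1,1]$: smoothness of $\Phi$, bijectivity, and smoothness of the inverse.

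First, I would observe that $\Phi$ is trivially $\mathcal{C}^\infty$, since its first component is the identity and its second component is $\sin(\cdot)$, both smooth. Bijectivity follows componentwise: the identity is a bijection of $\mathbb{R}$, and because $\sin$ is strictly monotonically increasing on $[-\pi/2,\pi/2]$ (its derivative $\cos\sigma_i$ is non-negative there, and strictly positive on the interior), it maps this interval bijectively onto $[-1,1]$. The inverse is given explicitly by $\Phi^{-1}(\tilde{i}_i,s) = (\tilde{i}_i,\arcsin s)$.

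Second, to certify smoothness of $\Phi^{-1}$, I would invoke the inverse function theorem by computing the Jacobian
\begin{equation*}
\nabla\Phi(\tilde{i}_i,\sigma_i) = \begin{bmatrix} 1 & 0 \\ 0 & \cos\sigma_i\end{bmatrix}, \qquad \det(\nabla\Phi) = \cos\sigma_i.
\end{equation*}
For every $\sigma_i$ in the open interval $(-\pi/2,\pi/2)$ we have $\cos\sigma_i > 0$, so the Jacobian is nonsingular and $\Phi^{-1}$ is $\mathcal{C}^\infty$ at every interior point, which is also directly verifiable from $(\arcsin)'(s) = 1/\sqrt{1-s^2}$ on $(-1,1)$.

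The main obstacle, and the reason the statement must be read as a diffeomorphism on the interior (with the endpoints handled only as a topological bijection), is precisely the degeneracy of $\nabla\Phi$ at $\sigma_i = \pm\pi/2$: there $\cos\sigma_i = 0$, so the Jacobian is singular and, equivalently, $\arcsin$ has an unbounded derivative at $\pm 1$. I would therefore state the result as a diffeomorphism on $\mathbb{R}\times(-\pi/2,\pi/2)$, which is exactly the region where the Lyapunov function~\eqref{eq:current_Lyapunov} and the closed-loop analysis of Proposition~\ref{prop:lyapunov_driving} are defined, and note that the extension to the closed rectangle is only a homeomorphism. This is sufficient for the intended use of $\Phi$ as a change of coordinates on $\mathrm{int}(\mathbb{Z}_i)$, where the integrator dynamics~\eqref{eq:bic} can equivalently be rewritten in the variable $s = \sin\sigma_i \in (-1,1)$ via $\dot{s} = \cos\sigma_i\,\dot{\sigma}_i = (k_{I,i}/M_i)(u_i - \tilde{i}_i)(1-s^2)$, which is the alternative bounded-integrator form the subsequent sections exploit.
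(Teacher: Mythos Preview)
Your argument is correct and in fact more careful than the paper itself: the paper states this corollary without proof and immediately uses the change of coordinates to rewrite the dynamics in the polynomial form~\eqref{eq:current_integrator_changed}. Your identification of the boundary degeneracy at $\sigma_i=\pm\pi/2$ is a valid technical point the paper glosses over, and your restriction to $\mathrm{int}(\mbb{Z}_i)$ is exactly the domain on which the transformed dynamics are subsequently used.
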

Using the change of coordinates defined in Lemma~\ref{lem:change_coordinates}, the current-integrator dynamics have polynomial dynamics for each $i\in\mc{V}$:
\begin{subequations}
\begin{align}
		L_{i} \frac{d \tilde{i}_{i}}{dt} &= -(r_{i}+k_{P,i})\tilde{i}_{i} +M_i \tilde{\sigma}_i,\label{eq:current_m}\\
		M_{i} \frac{d \tilde{\sigma}_{i}}{dt} &= k_{I,i}(u_{i}-\tilde{i}_{i}) (1 - \tilde{\sigma}^2_i).\label{eq:bic_m}
\end{align}
\label{eq:current_integrator_changed}	
\end{subequations}
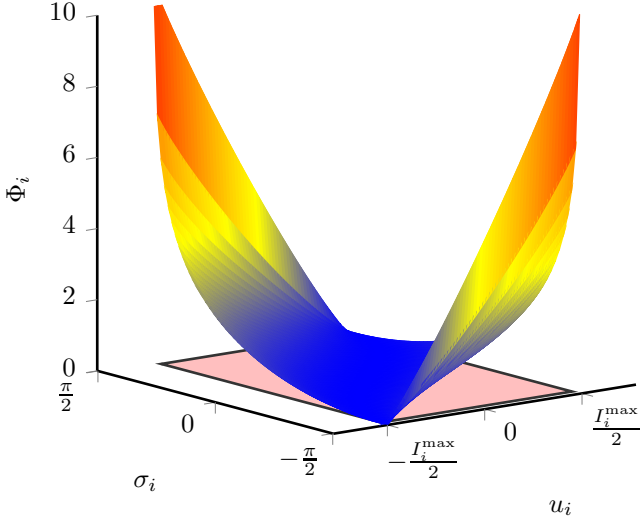
\begin{figure}[t!]
	\centering
	\definecolor{mycolor1}{rgb}{0.00000,0.44700,0.74100}%

\begin{tikzpicture}
\begin{axis}[%
enlargelimits=false,
width=0.8\linewidth,
height=0.7\linewidth,
scale only axis,
ymin=-1.58,
ymax=1.58,
ylabel = $\sigma_i$,
ytick = {-1.570796326794897, 0, 1.570796326794897},
yticklabels={$-\frac{\pi}{2}$, $0$, $\frac{\pi}{2}$},
tick align=outside,
xmin=-1.5697963267949,
xmax=1.5697963267949,
xtick = {-1.01, 0, 1.01},
xticklabels={$-\frac{I_i^\ts{max}}{2}$, $0$, $\frac{I_i^\ts{max}}{2}$},
xlabel = $u_i$,
zmin=0,
zmax=10,
zlabel = $\Phi_i$,
view={-37.9775061027938}{12.6433740229138},
axis x line*=bottom,
axis y line*=left,
axis z line*=left,
legend style={at={(1,1.05)},anchor=north east,fill = none,draw =none},
%every axis y label/.style={at={(current axis.north west)}, above=0.5mm},
]

\addplot3[area legend,solid,line width=1.0pt,draw=black!80!white,fill=red!50!white,fill opacity=0.5]
table[row sep=crcr] {%
x	y	z\\
0.99 -1.4707963267949	0\\
-0.99	-1.4707963267949	0\\
-0.99	1.4707963267949	0\\
0.99 1.4707963267949	0\\
}--cycle;\label{fig:set_data}

\addplot3[surf,
  draw opacity = 1,
    fill opacity = 1,
    shader=flat,
    mesh/rows=60] file {figs/Lyap_data.txt};\label{fig:Lyap_data}
\end{axis}
\end{tikzpicture}%
	\caption{Second and third terms of the Lyapunov function~\eqref{eq:current_Lyapunov} $\Phi(\sigma_i,u_i)$, as a function of the system inputs $u_i$, and integrator states $\sigma_i$ \eqref{fig:Lyap_data} over its respective domain of definition \eqref{fig:set_data}, \ie $[-\frac{\pi}{2},\frac{\pi}{2}]\times[-I_i^\ts{max},I_i^\ts{max}]$.}
	\label{fig:Lyapunov_function}
\end{figure}
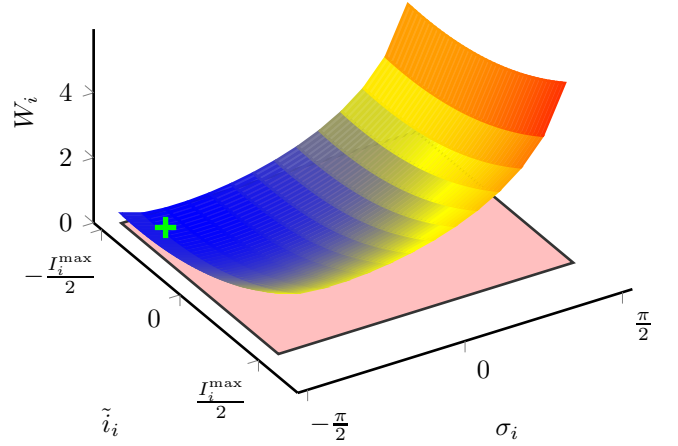
\begin{figure}[t!]
	\centering
	\definecolor{mycolor1}{rgb}{0.00000,0.44700,0.74100}%

\begin{tikzpicture}
\begin{axis}[%
enlargelimits=false,
width=0.8\linewidth,
height=0.7\linewidth,
scale only axis,
xmin=-1.1,
xmax=1.5,
xlabel = $\tilde{i}_i$,
xtick = {-1, 0, 1},
xticklabels={$-\frac{I_i^\ts{max}}{2}$, $0$, $\frac{I_i^\ts{max}}{2}$},
tick align=outside,
ymin=-1.6697963267949,
ymax=1.6697963267949,
ytick = {-1.570796326794897, 0, 1.570796326794897},
yticklabels={$-\frac{\pi}{2}$, $0$, $\frac{\pi}{2}$},
ylabel = $\sigma_i$,
zmin=0,
zmax=5.95987287216692,
zlabel = $W_i$,
view={58.6256198347108}{45.8},
axis x line*=bottom,
axis y line*=left,
axis z line*=left,
legend style={at={(1,1.05)},anchor=north east,fill = none,draw =none},
%every axis y label/.style={at={(current axis.north west)}, above=0.5mm},
]

\addplot3[area legend,solid,line width=1.0pt,draw=black!80!white,fill=red!50!white,fill opacity=0.5]
table[row sep=crcr] {%
x	y	z\\
1	-1.4707963267949	0\\
-1	-1.4707963267949	0\\
-1	1.4707963267949	0\\
1	 1.4707963267949	0\\
}--cycle;\label{fig:set_Z}

\addplot3[surf,
  draw opacity = 1,
    fill opacity = 0.9,
    shader=flat,
    mesh/rows=41] file {figs/Lyap_data_shifted.txt};\label{fig:Lyap_shifted}

\addplot3[color=green, mark size=4pt, mark = +,line width = 2pt] coordinates{(-0.8,-1.187121281842605,0)};\label{fig:equilibrium}

\end{axis}
\end{tikzpicture}%
	\caption{The Lyapunov function~\eqref{eq:current_Lyapunov} \eqref{fig:Lyap_shifted} for a constant $u_i = 0.8\frac{I_i^\ts{max}}{2}$ defined over the set $\mbb{Z}_i$ \eqref{fig:set_Z}. Clearly, the equilibrium point~\eqref{fig:equilibrium} is located at the minimum of this function.}
	\label{fig:Lyapunov_function_W}
\end{figure}
\subsection{Cascaded Structure}
\label{sec:inv_stability:cascaded}
We exploit the dynamic structure for each node which can be seen as an interconnection between node voltages and currents with their associated integrators. This implies that the state of each node can be written as: $x_i = (v_i,z_i)$ with $z_i = (\tilde{i}_i,\sigma_i)$ as the driving states. These dynamics are
\begin{subequations}
\begin{align}
  \dot{v}_i = & f_i(v_i,0) + h(v_i,z_i)\\
  \dot{z}_i = & g_i(z_i,u_i)
\end{align}
\label{eq:cascaded_buck}	
\end{subequations}
where $h_i(v_iz_i) = f_i(v_i,z_i) - f_i(v_i,0)$. The stability analysis of cascaded systems has been thoroughly explored, see for example \cite{Sepulchre1997,Isidori1999}, and two main methods exist to infer the asymptotic stability of the cascaded system. These methods rely on satisfying either:
\begin{enumerate}[i)]
	\item Asymptotic stability for the driving system and zero dynamics of the driven system, plus a linear growth restriction on the interaction between these two components. 
	\item \ac{ISS} for the \emph{driven} subsystem and asymptotic stability for the \emph{driving} subsystem.
\end{enumerate}
As a consequence of the results obtained in Section~\ref{sec:inv_stability:primary}, the \emph{driving} subsystem satisfies these stability requirements. We proceed to investigate the stability properties for the \emph{driven}  subsystem. In the following, we consider the overall model, \ie the aggregate model encompassing all nodes; an initial guess of a Lyapunov function, given the linearity of the network, is a quadratic function $S(v) = \frac{1}{2}v^\top Cv$ corresponding to the energy stored in the capacitors $C = \diag{C_i\colon i\in\mc{V}}$. To account for a nonzero equilibrium point, $v^\ts{eq}\in\Rset^{|\mc{V}|}$, and following the approach given in \cite{DePersis2015} and \cite{Alexandridis2019a}, we employ a Bergman Function 
\begin{equation}
  \mc{S}(v) = S(v) - S(v^\ts{eq}) - \pdiff{S}{v}\biggr\vert_{v^\ts{eq}} (v - v^\ts{eq}).
\label{eq:Storage_candidate_voltage}
\end{equation}
Its time derivative yields
\[
\begin{split}
\dot{\mc{S}} & = \pdiff{S}{v} \dot{v} - \pdiff{S}{v}\biggr\vert_{v^\ts{eq}} \dot{v}\\
 & = (v- v^\ts{eq})^\top(-\mc{L}v - g_{L}(v)d_{L} + \tilde{i} + i_s),
\end{split}
\]
where $d_L = (d_{L,1},\ldots,d_{L,|\mc{V}|})\in\prod_{i\in\mc{V}}\mbb{D}_i$ is the collection of load characteristics of the network and $\mc{L}\in\mc{R}^{|\mc{V}|\times|\mc{V}|}$ the connectivity graph Laplacian, \ie $\mc{L} = \mc{B}^\top R_E^{-1}\mc{B}$. We are interested in assessing the stability of the voltage zero dynamics. To this end, we set $\tilde{i}$ to $0$. The resulting derivative yields
\[
\dot{\mc{S}}\bigl\vert_{\tilde{i} = 0} = (v- v^\ts{eq})^\top(-\mc{L}v - g_{L}(v)d_{L} + i_s).
\]
When the load is resistive, \ie $g_L(v)d_L = \diag{R_{L,i}^{-1}}v$, the equilibrium point $v^\ts{eq} = (\mc{L} + \diag{R_{L,i}^{-1}})^{-1}i_s$ is clearly stable following Remark~\ref{rem:Load_resistive}. This result does not hold for the general case; for example, when the network contains constant power loads $g_L(v)d_L = \diag{v_i^{-1}}d_L$, the impedance is locally negative, \ie $\pdiff{g_L}{v}d_L = -\diag{v_i^{-2}}d_L$. As a result, the origin becomes a singularity point; and there are multiple unstable and stable equilibria. However, all is not lost; the voltage dynamics have a Laplacian structure that can be exploited. In the following, we show that a neighbourhood of the synchronisation manifold for the networked voltages is attractive. The network dynamics are dominated by the Laplacian of the connectivity graph. The kernel of which, $\ker{\mc{L}} = \ts{span}~\mathds{1}_{|\mc{V}|}$, determines the synchronisation manifold. In the absence of loads, the kernel is stable \cite{Bullo2020}; in fact in the absence of loads the state converges to a weighted average of the initial state, \ie  \[v\to \frac{\sum_{i\in\mc{V}} C_iv_i(0)}{\sum_{i\in\mc{V}} C_i}\mathds{1}_{|\mc{V}|}\in \ker{\mc{L}}.\]
We are interested in the state evolution $v(t)$ and its properties as $t\to\infty$. The next ancillary result provides estimates for $v(\cdot)$ subject to bounded perturbations.
\begin{lemma}[Network dynamics]
  The set $\mc{R}_\eta = \ker{\mc{L}}\oplus \eta\mbb{B}_{|\mc{V}|}$ is attractive with $\eta >0$ for Laplacian dynamics $C\dot{v} = -\mc{L}v + u$ with $|u(t)| \leq B_u $ for $B_u>0$ and all $t\in\Rset$
\label{lem:network_dynamics}  
\end{lemma}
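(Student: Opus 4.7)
The plan is to exploit the symmetric positive semi-definite structure of $\mc{L}$: since $\mc{G}$ is connected, $\ker{\mc{L}} = \ts{span}\{\mathds{1}_{|\mc{V}|}\}$, and the natural strategy is to decompose $v$ into its component in $\ker{\mc{L}}$ and a component $C$-orthogonal to $\ker{\mc{L}}$, show that the latter is driven by a contractive linear flow perturbed by $u$, and invoke a standard ISS-type comparison argument to confine it to a ball whose radius depends on $B_u$. The decomposition must be taken with respect to the $C$-weighted inner product rather than the standard Euclidean one, since otherwise $C^{-1}$ fails to commute with the orthogonal projection onto $\ker{\mc{L}}$ and the two components do not decouple cleanly; overcoming this is the main technical obstacle.

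Concretely, I would introduce the $C$-weighted mean $\alpha(t) = (\mathds{1}_{|\mc{V}|}^\top C v(t))/(\mathds{1}_{|\mc{V}|}^\top C \mathds{1}_{|\mc{V}|})$ and set $z(t) = v(t) - \alpha(t) \mathds{1}_{|\mc{V}|}$, so that $\mathds{1}_{|\mc{V}|}^\top C z \equiv 0$ by construction. Using $\mc{L}\mathds{1}_{|\mc{V}|} = 0$ and $\mathds{1}_{|\mc{V}|}^\top \mc{L} = 0$, a direct calculation gives $\dot{\alpha} = (\mathds{1}_{|\mc{V}|}^\top u)/(\mathds{1}_{|\mc{V}|}^\top C \mathds{1}_{|\mc{V}|})$ and $C\dot{z} = -\mc{L}z + u - \dot{\alpha}\, C\mathds{1}_{|\mc{V}|}$. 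With the Lyapunov candidate $V(z) = \tfrac{1}{2} z^\top C z$, the $\dot{\alpha}\, C \mathds{1}_{|\mc{V}|}$ contribution drops out of $\dot{V}$ because $z^\top C \mathds{1}_{|\mc{V}|} = 0$, leaving the clean expression $\dot{V} = -z^\top \mc{L}z + z^\top u$.

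To extract contraction I would use the generalised eigenvalue estimate: the pencil $\mc{L}x = \mu C x$ restricted to $\{x \colon \mathds{1}_{|\mc{V}|}^\top C x = 0\}$ has smallest eigenvalue $\mu > 0$ (positive precisely because $\mc{G}$ is connected), so $z^\top \mc{L}z \geq \mu z^\top C z = 2\mu V$. Combining this with the Cauchy--Schwarz bound $|z^\top u| \leq B_u|z| \leq B_u \sqrt{2V/c_{\min}}$, where $c_{\min} = \min_i C_i$, yields $\dot{V} \leq -2\mu V + B_u\sqrt{2V/c_{\min}}$. Passing to $y = \sqrt{V}$ and comparing with the scalar linear ODE $\dot{y} = -\mu y + B_u/\sqrt{2 c_{\min}}$ gives $|z(t)| \leq |z(0)| e^{-\mu t} + B_u/(\mu c_{\min})$. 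Since $v(t) - z(t) = \alpha(t)\mathds{1}_{|\mc{V}|} \in \ker{\mc{L}}$, this bounds the distance $|v(t)|_{\ker{\mc{L}}}$ and delivers attractivity of $\mc{R}_\eta = \ker{\mc{L}}\oplus \eta\mbb{B}_{|\mc{V}|}$ for any $\eta > B_u/(\mu c_{\min})$, completing the argument.
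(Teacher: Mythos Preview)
Your proof is correct and complete, but it takes a genuinely different route from the paper. The paper argues by explicit modal decomposition: it diagonalises the pencil $\lambda C + \mc{L}$, writes $v(t)$ in the generalised eigenbasis $\{\xi_i\}$, solves each scalar mode $\dot{y}_i = -\lambda_i y_i + \langle\xi_i,u\rangle$ in closed form via the variation-of-constants formula, and then bounds $d(v,\ker\mc{L})$ directly from the resulting expansion, arriving at the steady-state radius $\eta = B_u\sum_{i\geq 2}\lambda_i^{-1}$. Your approach instead avoids the full spectral decomposition: you project onto the $C$-orthogonal complement of $\mathds{1}_{|\mc V|}$, use the quadratic storage $V = \tfrac{1}{2}z^\top C z$, and extract contraction from the single generalised Fiedler eigenvalue $\mu$ via an ISS comparison argument. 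Both hinge on the same spectral gap, but the paper's explicit solution yields a constant involving the whole nonzero spectrum, whereas your energy method gives the cruder but self-contained bound $\eta = B_u/(\mu\, c_{\min})$ without ever computing eigenvectors. One minor slip: your transient term should carry a condition-number factor $\sqrt{c_{\max}/c_{\min}}$ in front of $|z(0)|e^{-\mu t}$, since you pass between $|z|$ and $\sqrt{V}$ twice; this does not affect the attractivity conclusion.
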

\begin{proof}
  Given the Laplacian matrix pencil $\lambda C + \mc{L}$ with eigenvalues $\{\lambda_0,\ldots,\lambda_{|\mc{V}|-1}\}$ and the state transformation $y_i = \langle \xi_i,Cv\rangle$ with $\xi_i\in\Rset^{|\mc{V}|}$ the eigenvector associated with $\lambda_i$ for $i\in\{0,\ldots,|\mc{V}|-1\}$. The dynamics of each $y_i$ satisfy 
\[
\dot{y}_i = -\lambda_i y_i + \underbrace{\langle \xi_i,u\rangle}_{\tilde{u}_i}.
\]
The solution for each $i\in\{0,\ldots,|\mc{V}|-1\}$ is
\[
y_i(t) = e^{-\lambda_i t}y_i(0) + \int_0^t e^{-\lambda_i (t-\tau)}\tilde{u}_id\tau.
\]
which yields the explicit expression for the state $v = \sum y_i\xi_i$ such that\[\begin{split}
     v = & \biggl (\frac{\langle\mathds{1}_{|\mc V|}, C{v}(0)\rangle}{\mathds{1}_{|\mc V|}^\top C\mathds{1}_{|\mc V|}} + \int_0^t\frac{\langle\mathds{1}_{|\mc V|}, u(\tau)\rangle}{\mathds{1}_{|\mc V|}^\top C \mathds{1}_{|\mc V|}}d\tau\biggr)\mathds{1}_{|\mc V|} + \\
    & \sum_{i=2}^{n} \int_0^t e^{-\lambda_i (t-\tau)}\langle \xi_i,u(\tau)\rangle d\tau \xi_i.
  \end{split}\]Given that each $z\in\ker\mc{L}$ can be written as $z = \alpha \mathds{1}_n$, the distance from any state to this set is
\[d(v(t),\ker{\mc{L}}) = \min\{|v(t) - \alpha\mathds{1}_n|\colon \alpha\in\Rset\}.\]
The explicit minimum occurs at \[\alpha^*(v) = \bar{v}= \implies d(v,\ker\mc{L}) = |v - \bar{v}\mathds{1}_n|.\]From the state evolution, it is possible to conclude that
\[d(v,\ker\mc{L}) = |(\Xi_{-1} - \tilde{\Xi})s(t)|,\]
%  v_i
where $\Xi_{-1}$ is the matrix of eigenvectors minus the first column, $\tilde{\Xi} = (\bar{\xi}_2,\ldots,\bar{\xi}_{n})\otimes \mathds{1}_n$ a matrix of eigenvalue averages, and the $i^\ts{th}$ component of $s\in\Rset^{n-1}$ is
\[s_i = \int_0^t e^{-\lambda_i (t-\tau)}\langle \xi_i,u(\tau)\rangle d\tau.\]The input boundedness hypothesis implies \[|s_i|\leq \int_0^t|e^{-\lambda_i(t-\tau)}||\xi_i|B_ud\tau.\] The desired bound is therefore

\begin{equation*}
    \begin{split}
  d(v(t),\ker{\mc{L}}) &\leq |(\Xi_{-1} - \tilde{\Xi})||s|\\  
  & \leq |(\Xi_{-1} - \tilde{\Xi})| \sum_{i=2}^n B_u \int_0^t|e^{-\lambda_i(t-\tau)}|d\tau 
\end{split}  
\end{equation*}
given that the exponential is non-negative 
\[d(v(t),\ker{\mc{L}}) \leq B_u|(\Xi_{-1} - \tilde{\Xi})|\sum_{i=2}^{n-1}\frac{1 - e^{-\lambda_i t}}{\lambda_i}.\]
Clearly, in steady state we have
\[d(v_{ss},\ker{\mc{L}}) \leq B_u \sum_{i=2}^{n-1}\frac{1}{\lambda_i}.\] Setting $\eta = B_u\sum_{i=2}^{n-1}\frac{1}{\lambda_i}$ implies the set $\mc{R} = \ker{\mc{L}}\oplus\eta\mbb{B}_{|\mc{V}|}$ is attractive for $\dot{v} = \mc{L}v + u$. 
\end{proof}
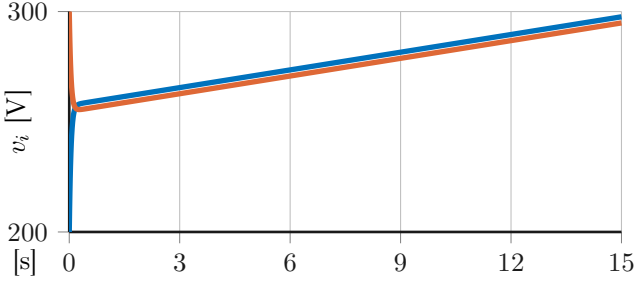
\begin{figure}[t!]
  \centering
  % This file was created by matlab2tikz.
% Minimal pgfplots version: 1.3
%
\definecolor{mycolor1}{rgb}{0.00000,0.44706,0.74118}%
\definecolor{mycolor2}{rgb}{0.87059,0.40784,0.21176}%
\definecolor{mycolor3}{rgb}{0.92941,0.69412,0.12549}%
\definecolor{mycolor4}{rgb}{0.46667,0.67451,0.18824}%
\definecolor{mycolor5}{rgb}{0.30196,0.74510,0.93333}%
\definecolor{mycolor6}{rgb}{0.49412,0.18431,0.55686}%
\begin{tikzpicture}

\begin{axis}[%
width=0.82\linewidth,
height=0.33\linewidth,
at={(0\linewidth,0\linewidth)},
scale only axis,
every outer x axis line/.append style={white!10!black},
every x tick label/.append style={font=\color{white!10!black}},
xmin=0,
xmax=15,
xtick={  0, 3, 6, 9, 12,15},
xticklabels={  $0$, $3$, $6$, $9$, $12$,$15$},
tick align=outside,
xlabel={[\si{\second}]},
xmajorgrids,
every outer y axis line/.append style={white!10!black},
every y tick label/.append style={font=\color{white!10!black}},
ymin=200,
ymax=300,
ytick={200,300},
ylabel={$v_i$ [\si{\volt}]},
ymajorgrids,
y label style={at={(axis description cs:-0.05,0.5)}},
x label style={at={(axis description cs:-0.08,-0.03)}},
axis x line*=bottom,
axis y line*=left
]
\addplot [color=mycolor1,solid,line width = 2.0pt]
table[row sep=crcr]{%
0	200\\
0.00849454526199631	209.716401029707\\
0.0169890905239926	217.797492975808\\
0.0254836357859889	224.519150566123\\
0.0339781810479852	230.110708784879\\
0.0443236827755937	235.671333791996\\
0.0546691845032022	240.116378439806\\
0.0650146862308107	243.670772348052\\
0.078186638369534	247.190853123238\\
0.0913585905082573	249.841310719044\\
0.107370819866965	252.204270805759\\
0.123383049225674	253.882793121841\\
0.143607524583573	255.33565380381\\
0.163831999941473	256.28800761636\\
0.190874202583546	257.084376144723\\
0.223413489020356	257.628377200039\\
0.264277168526499	257.997134093717\\
0.318490921723496	258.265350872312\\
0.396965137237386	258.517705930333\\
0.529606263191579	258.879526728935\\
0.847901879520131	259.73121316862\\
2.16392870097362	263.254483634168\\
3.4799555224271	266.777661117825\\
4.79598234388059	270.300850076342\\
6.11200916533408	273.824037618751\\
7.42803598678756	277.347225335921\\
8.74406280824105	280.870413031524\\
10.0600896296945	284.393600729788\\
11.376116451148	287.916788427725\\
12.6921432726015	291.439976125701\\
14.008170094055	294.963163823673\\
15	297.618431076697\\
};\label{fig:vr:v_1}

\addplot [color=mycolor2,solid,line width = 2.0pt]
table[row sep=crcr]{%
0	300\\
0.00849454526199631	292.377613495876\\
0.0169890905239926	286.044951065234\\
0.0254836357859889	280.784437766859\\
0.0339781810479852	276.415203297384\\
0.0443236827755937	272.079228371747\\
0.0546691845032022	268.623081522482\\
0.0650146862308107	265.869366838118\\
0.078186638369534	263.156248529832\\
0.0913585905082573	261.128978518082\\
0.107370819866965	259.342050766522\\
0.123383049225674	258.094920584996\\
0.143607524583573	257.045934090634\\
0.163831999941473	256.391683835986\\
0.190874202583546	255.893101833346\\
0.223413489020356	255.619878172304\\
0.264277168526499	255.524726429891\\
0.318490921723496	255.572795153558\\
0.396965137237386	255.749545478149\\
0.529606263191579	256.09934287138\\
0.847901879520131	256.951808104163\\
2.16392870097362	260.474930525585\\
3.4799555224271	263.998126279312\\
4.79598234388059	267.521312983127\\
6.11200916533408	271.044500803787\\
7.42803598678756	274.567688486619\\
8.74406280824105	278.090876186459\\
10.0600896296945	281.614063884201\\
11.376116451148	285.137251582202\\
12.6921432726015	288.66043928017\\
14.008170094055	292.183626978143\\
15	294.838894231167\\
};\label{fig:vr:v_2}

\end{axis}
\end{tikzpicture}%
  \caption{Response of a two element network, $C\dot{v} = -\mc{L}v -d + u$, to a constant current load using a mismatched current input. The voltage evolution of $v_1$ is \eqref{fig:vr:v_1} and \eqref{fig:vr:v_2} corresponds to $v_2$ starting from different initial conditions. The loads drain $d = (4,10)\si{\ampere}$ respectively; the input currents are $u = (10,5)\si{\ampere}$. The mismatch between injections and loads results in a voltage response that is akin to a ramp; this response is clearly not stable in the traditional sense, however the voltage trajectories are close to the kernel of $\mc{L}$.}
  \label{fig:ramp_voltage}
\end{figure}
The proof of the above lemma exploits the linearity of the weighted Laplacian dynamics to conclude the state remains in the vicinity of not a point, but the kernel of the Laplacian. This property resembles an ISS behaviour with respect to the kernel, this however might not imply the solutions are stable in the ordinary sense. For example, if the perturbation $u(t) = u$ is constant in Lemma~\ref{lem:network_dynamics}, then the solutions would contain a ramp which would lie close to the kernel but whose magnitude keeps increasing as seen in Figure~\ref{fig:ramp_voltage}. In the next section, we aim to carefully select the input currents such that the voltages attain a finite unique equilibria. 
\section{Distributed Voltage regulation}
\label{sec:dist_voltagecontrol}
In the previous section, we have studied the stability properties of a DC buck converter network and concluded an ISS-type stability condition with respect to the Laplacian Kernel. In this section, we exploit these stability properties to obtain closed-loop stability and optimal performance with respect to an optimisation cost. First, we define the distributed optimisation problem, then we proceed to analyse both recursive feasibility and stability of the overall network.
\subsection{Optimal control problem}
\label{sec:optim-contr-probl}
The overall control objective, following~\eqref{eq:overall_cost}, aims to minimise an infinite horizon cost. To obtain a tractable solution to this problem, we optimise this cost in a distributed way; to this aim, the separability of the cost, \ie  $J(x,u) = \sum_{i\in\mc{V}}J_i(x_i,u_i)$, is an implicit assumption. From the dynamics perspective, each node is subject to parametric uncertainty in terms of load demand \ie $d_{L,i}\in\mbb{D}_i$, and an ``artificial'' additive uncertainty arising from the interconnection with neighbouring nodes\footnote{For a graph $\mc{G} = (\mc{V},\mc{E})$, the set of neighbours of node $i$ is $\mc{N}_i=\{j\in\mc{V}\colon (i,j)\in\mc{E}\}$}, \ie $w_i = \sum_{j\in\mc{N}_j}\mc{L}_{ij}v_j$.

To handle parametric uncertainty, we note that Assumption~\ref{assump:constraints}$-ii)$ implies that $d_{L,i} = \sum_{k = 1}^{D_i}\lambda_{i,k}\hat{d}_{k,i}$ where $d_{k,i}$ are the vertices of $\mbb{D}_i$, $\lambda_{k,i} \in [0,1]$, and $\sum_{k=1}^{D_i}\lambda_{k,i} = 1$. Furthermore, it is possible to rewrite the effects of this disturbance with respect to a nominal point $\bar{d}_{L,i} = \frac{1}{D_i}\sum_{k=1}^{D_i}\hat{d}_{k,i}$ such that the voltage dynamics are
\begin{equation}
  \begin{split}
    C_i\dot{v}_i = & -\mc{L}_{ii}v_i - g_{L,i}(v_i)\bar{d}_{L,i} + \tilde{i}_i + i_{s,i}\\
    & + \underbrace{g_{L,i}(v)(\bar{d}_{L,i} - d_{L,i})}_{w_{L,i}(v_i)} + \underbrace{\sum_{j\in\mc{N}_i}\mc{L}_{ij}v_j}_{w_i}.
  \end{split}
  \label{eq:vol_dyn_uncertain}
\end{equation}
Robust methods can be used to handle the parametric uncertainty arising from the loads. These robust  methods, however, are not suitable to handle the additive term arising from the dynamic coupling. The reason behind this is that a robust based distributed MPC requires a weak coupling assumption \cite{Baldivieso2018b}. In systems that have Laplacian dynamics, this is not the case as evidenced by the next example:
\begin{example}
  Consider a two-element network $\mc{G} = \{1,2\}$ such that its voltage constraint set are $\mbb{V}_1 = \mbb{V}_2 = [0,\bar{V}]$. The line interconnecting node $1$ and $2$ has admittance $Y_{12} = 10[\si{\siemens}]$. The effect of node $1$ on node $2$ can be characterised by the set $\mbb{W}_1 = C^{-1}_1Y_{12}\mbb{V}_2$, and vice-versa $\mbb{W}_2 = C^{-1}_2Y_{12}\mbb{V}_1$. The weak coupling assumption for tube MPC methods demands $\mbb{W}_i\subset\mbb{V}_i$ for $i = 1,2$ which is not the case for this simple example. This implies that there are no robust invariant sets capable to account for the interconnection disturbance.
\label{ex:weak_coupl}
\end{example}
The above example illustrates some of the limitations of robust tube approaches for electrical networks. However, by exploiting the Laplacian structure of the network dynamics it is possible to use Lemma~\ref{lem:network_dynamics} to bound the effect of the neighbours of node $i$ have on its dynamics. Given that the voltage of each node operates near the kernel of the Laplacian, we propose a distributed controller that shares current voltage measurements to be used in a finite optimal control problem. The distributed optimisation problem $\mbb{P}_i(x_i,\bar{d}_{L,i},w_i)$ for each node $i\in\mc{V}$ consists in minimising
\begin{equation}
J_i(x_i,u_i)	= \psi_i(x_i(T)) + \int_0^T \underbrace{(v^* - C_ix_i)^2 + \gamma_i( u_i )}_{\ell_i(x_i,u_i)}dt,
\end{equation}
where $x_i = (v_i,\tilde{i}_i,\sigma_i)$, $u_i = i_i^\ts{ref}$, subject to
\begin{subequations}
\begin{align}
  \dot{{x}}_i = F_i({x}_i&,{u}_i,\bar{d}_{L,i})  + E_iw_i,\label{eq:local_constraints:dyn}\\
x_i \in{\mbb{X}}_i, &\quad u_i\in{\mbb{U}}_i,\label{eq:local_constraints:cons}\\
{x}_i(0)  = x_{i}, &\quad {x}_i(T)  \in\mbb{X}_{f,i}.\label{eq:local_constraints:boundary}
\end{align}
\label{eq:local_constraints}
\end{subequations}
Where $E_i\in\Rset^{3\times1}$ determines how the coupling $w_i$ affects the local dynamics. The constraint set is a set-valued map $\mc{U}_i^N(x_i,\bar{d}_{L,i},w_i)\subset\mbb{U}_i$. In this problem, $\psi_i\colon\mbb{X}_{f,i}\to\Rset^+$ is the terminal penalty, and $\mbb{X}_{f,i}\subseteq{\mbb{X}}_i$ the terminal cost which both satisfy the following assumption. 
\begin{assumption}
  For each $i\in\mc{V}$, the terminal conditions for $\mbb{P} _i(x_i,\bar{d}_{L,i},w_i)$ satisfy
  \begin{enumerate}[i)]
  \item (Terminal cost) For all $x\in\mbb{X}_{f,i}$, the terminal cost satisfies
    \begin{equation}
      \min_{\substack{u\in\mbb{U}\\ x_i(t)\in\mbb{X}_{f,i}}}\biggr\{\psi_i(x_i(t)) + \int_0^t\ell_i(x_i,u_i)dt \biggl \} \leq \psi_i(x_i)
      \label{eq:terminal_cost_assump}
    \end{equation}
  \item (Terminal constraints) The set $\mbb{X}_{f,i}\subset\mbb{X}$ is a \emph{control invariant set} for $\dot{x} = F_i(x_i,u_i,d_{L,i})$ with input constraints $\mbb{U}_i\subset\Rset$.
  \end{enumerate}
  \label{assump:terminal_ingredients}
\end{assumption}
Furthermore, the stage cost $\ell_i(\cdot,\cdot)$ satisfy
\begin{assumption}[Positive definite stage cost]
$\ell_{i}\colon\mbb{X}_i\times\mbb{U}_i\to\Rset$ is for each $i \in \mc{V}$ a continuous positive definite functions.
\label{assum:pd}
\end{assumption}
In the context of the \ac{OCP} $\mbb{P}_i(x_i,\bar{d}_{L,i},w_i)$, each local controller does not have access to future values of the interactions $w_i = \sum_{j\in\mc{N}_i}\mc{L}_{ij}v_j$, the only information needed at a time $t>0$ is the measured interaction which is then kept constant along the horizon $[t,t+T]$. The solution at time $t\geq0$ is an optimal control input $u_i\colon[t,t+T]\to\bar{\mbb{U}}_i$ which can be considered as piecewise constant, \ie $u_i(t) = \bar{u}$ for $t \in [0,\delta)$ for $\delta >0$, for tractability purposes. The value $\delta >0 $ acts as a sampling time such that after $u_i(\cdot)$ is applied to the system, at time $t+\delta$ each node measures its state and shares this measurement with its neighbours so that each constructs $w_i(t+\delta)$. The \ac{OCP} is solved with the updated information $x_i(t+\delta)$, $d_{L,i}$, and $w_i(t+\delta)$ to obtain a new piece-wise constant function $u_i\colon[t+\delta,t+\delta+T]\to{\mbb{U}}_i$, and the process is repeated \emph{ad infinitum}.
\subsection{Properties of the \ac{OCP}}
\label{sec:properties-ocp}
In section, we analyse the properties of the \ac{OCP} in terms of robust stability and feasibility. The analysis focuses first on local properties for recursive feasibility and robustness to load changes. Then, we focus on stability properties for the overall network. 
\subsubsection{Recursive feasibility}
\label{sec:recurs-feas}
We invoke the following assumption to make precise the concepts used in the \ac{OCP} formulation.
\begin{assumption}[Information available to the controller] For each $i\in\mc{V}$, 
  \begin{enumerate}
  \item The state $x_i(\delta k)$, interconnection information $w_i(\delta k)$, and nominal load $\bar{d}_{L,i}\in\mbb{D}_i$ are known exactly at time $t = \delta k\geq 0$ with $k\in\Nset$.
  \item There exists $\gamma_{d,i}>0$ such that the ``true'' load $d_{L,i}\in\mbb{D}_i$ satisfies $|\bar{d}_{L,i}-d_{L,i}|<\gamma_{d,i}$.
  \end{enumerate}
\label{assum:preview_information}
\end{assumption}
At a time $\delta k$, the \ac{OCP} is solved by discretising the continuous problem $\mbb{P}_i(x_i,\bar{d}_{L,i},w_i)$ such that the solution is a sequence of $N$ optimal current references $\mb{u}_i^0(x_i,\bar{d}_{L,i},w_i) = \{u_i^0(0),\ldots,u_i^0(N-1)\}$ which corresponds to a piece-wise constant function. The effect of the interconnection $w_i$ is kept constant across the horizon such that $\mb{w}_i = w_i\mathds{1}_{N+1}$; at the next sampling time, this sequence, following Assumption~\ref{assum:preview_information}, is allowed to change. To account for this, the analysis will focus on two scenarios: prove recursive feasibility for an unchanging interconnection, \ie $w_i(k+1) = w_i(k)$. Then proving that recursive feasibility holds for the general case by leveraging on the structure of the \ac{OCP} and the unchanging case.

The first hurdle to overcome is to find suitable terminal ingredients satisfying Assumption~\ref{assump:terminal_ingredients}. The equilibrium point for each node is characterised by the solution of $F_i(x_i,\bar{d}_{L,i},u_i) + E_iw_i = 0$. We note that this equation may admit multiple solutions and is dependent on both the nominal load and the influence from the rest of the network. By construction, the function $F_i\colon\Rset^3\times\mbb{D}_i\times\Rset\to\Rset^3$ is continuously differentiable which implies that there exists a local solution to this problem $(x_i^\ts{eq},u_i^\ts{eq}) = \xi_i^\ts{ss}(\bar{d}_{L,i},w_i)$. The modified cost can be obtained by translating the stage cost to these equilibrium points, \ie $\tilde{\ell}_i(x_i,u_i) = \ell_i(x_i - x_i^\ts{eq},u_i-u_i^\ts{eq})$. The terminal set is computed using the approach used in \cite{Baldivieso-Monasterios2020}; for a given polytopic set $\mbb{X}_{f,i}(x_i^\ts{eq})=\{x\colon H_{f,i}x \leq h_{f_i}\}\subseteq\mbb{X}_i$ with $H_{f,i}\in\Rset^{n_{f,i}\times3}$ and $h_{f,i}\in\Rset^{n_{f,i}}$, it is possible to find a control action $u_i = u_i^\ts{eq} + \kappa_{f,i}(x_i-x_i^\ts{eq})$ such that $x(t)\in\mbb{X}_{f,i}(x_i^\ts{eq})$ for all $t\geq 0$. This control action is built based on the one-step reachability properties of the discretised system and is the solution of
\begin{equation}
    \min \{|u_i-u_i^\ts{eq}|^2\colon H_{i,f}x_i^+\leq \lambda h_{i,f},~u_i-u_i^\ts{eq}\in\mbb{U}_i\},
  \label{eq:terminal_law_ocp}
\end{equation}
where $x_i^+$ is the value of the state at $x_i(\delta (k+1))$, $\lambda \in [0,1]$ is a design parameter to adjust the ``aggressiveness'' of the controller,   and the set of minimisers is $\mc{U}_{f,i}(x_i,\bar{d}_{L,i},w_i)\subset\mbb{U}_i$. The following proposition summarises the properties of the terminal ingredients for the \ac{OCP}.
\begin{proposition}[Shifted terminal ingredients]
Suppose Assumptions~\ref{assum:load_structure} and \ref{assum:pd} hold. $i)$ For a fixed $\bar{d}_{L,i}\in\mbb{D}_i$, if there exists a control action $u_i\in\mbb{U}_i$ such that \[|u_i|\geq \max_{y\in\mbb{X}_{f,i}(x_i^\ts{eq})}\min_{z\in\lambda\mbb{X}_{f,i}(x_i^\ts{eq})}|F_i(y,\bar{d}_{L,i},u) - F_i(0,0,u)-z|,\] then $\mc{U}_i(x_i,bar{d}_{L,i},w_i)\neq \emptyset$ for all $x_i\in\mbb{X}_{f,i}(x_i^\ts{eq})$. Furthermore, $ii)$ the function \[\Psi_{f,i}(x_i) = \inf\{r\in\Rset^+\colon x_i\in r\mbb{X}_{f,i}\}\]is a control Lyapunov function. Lastly, $iii)$ the set $\mbb{X}_{f,i}$ is control invariant for $\dot{x}_i = F_i(x_i,u_i,\bar{d}_i)+E_iw_i$. 
  \label{prop:terminal_ingredients}
\end{proposition}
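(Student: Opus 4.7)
The three items are interlinked: $(i)$ establishes existence of a terminal feedback on $\mbb{X}_{f,i}(x_i^\ts{eq})$, $(ii)$ upgrades this into a Lyapunov-type decrease via a Minkowski-gauge construction, and $(iii)$ is then an immediate corollary of the combination of $(i)$ with $\lambda\in[0,1]$.

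For part $(i)$ the plan is to read the one-step reachability constraint $H_{f,i} x_i^+ \leq \lambda h_{f,i}$ in~\eqref{eq:terminal_law_ocp} as the geometric requirement $x_i^+ \in \lambda\mbb{X}_{f,i}(x_i^\ts{eq})$. Decomposing the dynamics via $F_i(x_i,u_i,\bar{d}_{L,i}) = F_i(0,0,u_i) + [F_i(x_i,u_i,\bar{d}_{L,i}) - F_i(0,0,u_i)]$ isolates a residual term that depends only on the nonlinearity of $F_i$ over $\mbb{X}_{f,i}(x_i^\ts{eq})$; the hypothesis on $|u_i|$ is exactly what is needed for this residual to admit an element of $\lambda\mbb{X}_{f,i}(x_i^\ts{eq})$ as its ``correction'' uniformly in $y\in\mbb{X}_{f,i}(x_i^\ts{eq})$. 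Continuity of $F_i$ guaranteed by Assumption~\ref{assum:load_structure} together with compactness of $\mbb{X}_{f,i}(x_i^\ts{eq})$ ensures that the max--min is attained, supplying an explicit feasible point of~\eqref{eq:terminal_law_ocp}, so that $\mc{U}_{f,i}(x_i,\bar{d}_{L,i},w_i)\neq\emptyset$. The associated selection $\kappa_{f,i}(\cdot)$ inherits continuity from Berge's maximum theorem since~\eqref{eq:terminal_law_ocp} is a strongly convex QP with continuous data.

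For part $(ii)$ I would work in shifted coordinates so that $x_i^\ts{eq}$ is the origin and $\mbb{X}_{f,i}$ is a PC-set. Then $\Psi_{f,i}$ is its Minkowski gauge, hence continuous, positive definite, sublinear, radially unbounded, and zero only at the origin; in particular it admits $\mc{K}_\infty$ upper and lower bounds on $|x_i|$. Applying the feedback from part $(i)$, the reachability bound gives $x_i^+\in\lambda\mbb{X}_{f,i}$ whenever $x_i\in\mbb{X}_{f,i}$, which by positive homogeneity of the gauge yields $\Psi_{f,i}(x_i^+) \leq \lambda \Psi_{f,i}(x_i)$. With $\lambda\in[0,1)$ this translates, via the sampling period $\delta$ and the comparison lemma, into a continuous-time decrease $\dot{\Psi}_{f,i} + \alpha_{\Psi}(|x_i|) < 0$ for a suitable $\mc{K}_\infty$ function $\alpha_{\Psi}$, so $\Psi_{f,i}$ satisfies~\eqref{eq:clf} and is a control Lyapunov function.

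For part $(iii)$, convexity of $\mbb{X}_{f,i}$ together with $0\in\mbb{X}_{f,i}$ (in shifted coordinates) and $\lambda\in[0,1]$ gives $\lambda\mbb{X}_{f,i}\subseteq\mbb{X}_{f,i}$; hence the feedback $\kappa_{f,i}$ from $(i)$ produces $x_i^+\in\mbb{X}_{f,i}$ for every $x_i\in\mbb{X}_{f,i}$, which is exactly control invariance in the sense of Definition~\ref{def:ci}. The main technical obstacle is part $(i)$: one must ensure that the min--max selection of $u_i$ remains compatible with the input constraint $\mbb{U}_i$ and that the sampling period $\delta$ is small enough that the residual stays within the budget afforded by the hypothesis; additional care is needed to rule out pathological cases where the inner minimum is achieved on the boundary of $\lambda\mbb{X}_{f,i}(x_i^\ts{eq})$, for which a strict version of the inequality (or a mild interior condition) would be needed. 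Parts $(ii)$ and $(iii)$ are then standard Minkowski-functional arguments once part $(i)$ is in place.
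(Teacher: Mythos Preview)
Your proposal is essentially correct and structurally aligned with the paper's approach, but there are two differences worth noting.

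For parts $(i)$ and $(iii)$, the paper simply defers to \cite[Theorem~3]{Baldivieso-Monasterios2020} and \cite[Corollary~1]{Baldivieso-Monasterios2020} respectively, whereas you supply self-contained arguments. Your geometric reading of the one-step constraint as $x_i^+\in\lambda\mbb{X}_{f,i}(x_i^\ts{eq})$ and the inclusion $\lambda\mbb{X}_{f,i}\subseteq\mbb{X}_{f,i}$ for $(iii)$ are exactly the content of those external results, so nothing is lost.

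The substantive difference is in part $(ii)$. You derive the decrease from the homogeneity of the gauge, obtaining $\Psi_{f,i}(x_i^+)\leq\lambda\,\Psi_{f,i}(x_i)$ and then extracting a $\mc{K}_\infty$ function from the contraction factor $\lambda<1$. The paper instead builds the decrease function directly from the shifted stage cost, setting $\alpha_{V,i}(x_i)=\int_t^{t+\delta}\tilde{\ell}(x_i,\kappa_{f,i}(x_i,\bar{d}_i,w_i))\,dt$ (which is where Assumption~\ref{assum:pd} is actually used), and then invokes an integral version of Definition~\ref{def:clf}. Your route is cleaner as a stand-alone Lyapunov argument; the paper's route has the advantage that it ties $\Psi_{f,i}$ directly to the stage cost, which is precisely what is needed to verify the terminal-cost inequality~\eqref{eq:terminal_cost_assump} in Assumption~\ref{assump:terminal_ingredients}. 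Note also that your contraction argument needs $\lambda<1$ strictly, while the paper states $\lambda\in[0,1]$; the paper's proof sketch likewise asserts a strict decrease, so in practice both arguments implicitly exclude $\lambda=1$.
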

\begin{proof}
  The proof for $i)$ is based in a modification to our setting of \cite[Theorem 3]{Baldivieso-Monasterios2020}. The second assertion of this proposition follows by construction, when $x_i = x_i^\ts{eq}$, then $\Psi_{f,i}(x_i^\ts{eq}) = 0$. Furthermore, the difference $\Psi_{f,i}(x_i^+) - \Psi_{f,i}(x_i) < 0$ if $u_i\in\mc{U}_i(x_i,\bar{d}_{f,i},w_i)$. In particular, it is possible to construct the function $\alpha_{V,i}(x_i) = \int_t^{t+\delta}\tilde{\ell}(x_i,\kappa_{f,i}(x_i,\bar{d}_i,w_i))dt$ with $\kappa_i(\cdot,\cdot,\cdot)$ a selection of the set-valued map $\mc{U}_i(\cdot)$. Therefore, $\Psi_{f,i}(\cdot)$ satisfies an integral version of Definition~\ref{def:clf}. The control invariance stated in $iii)$ follows from \cite[Corollary 1]{Baldivieso-Monasterios2020}.
\end{proof}
The above proposition shows that our modified terminal conditions satisfy Assumption~\ref{assump:terminal_ingredients}. With the terminal set available, the feasible region for a given horizon $T>\delta>0$ and given interconnection value $w_i$ is defined recursively  as:
\begin{equation}
  \begin{split}
    \mc{X}_i^{\delta (k+1)}(\bar{d}_{L,i},w_i) = & \{x_i\in\mbb{X}_i\colon\exists u_i\in\mbb{U}_i, x_i(\delta)\in\mc{X}_i^{\delta k}\}\\
    \mc{X}_i^{0}(\bar{d}_{L,i},w_i) = & \mbb{X}_{f,i}(x_i^\ts{eq}).
  \end{split}
  \label{eq:feas_region}
\end{equation}
The existence of a feasible set $\mc{X}_i^{N\delta}$ is linked with the existence of a sequence of control actions $\mb{u}_i\in\mbb{U}_i^N$ which is related to the \ac{OCP} solution. The next result states feasibility under unchanging interconnection information.
\begin{proposition}[Recursive feasibility under unchanging $w_i$.]Suppose Assumptions~\ref{assum:load_structure}, and  \ref{assump:constraints}--\ref{assum:preview_information} hold. For each $i\in\mc{V}$, if $\mb{w}_i^+ = \mb{w}_i(t+\delta) = \mb{w}_i(t)$, then $i)$ $x_i\in\mc{X}_i^T(\bar{d}_i,\mb{w}_i)$ implies that $x(t+\delta)\in\mc{X}_i^T(\bar{d}_{L,i},\mb{w}_i^+)$. $ii)$ the set $\mc{X}_i^T(\bar{d}_{L,i},\mb{w}_i)$ is control invariant for $\dot{x}_i = F_i(x_i,u_i,\bar{d}_{L,i}) + E_iw_i$ and $\mbb{U}_i$. 
\label{lem:rec_feasibility_unchaning_disturbance}
\end{proposition}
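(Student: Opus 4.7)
The plan is to exhibit a feasible candidate input sequence for $\mbb{P}_i(x_i(t+\delta),\bar{d}_{L,i},\mb{w}_i^+)$ by shifting the optimal sequence obtained at time $t$ and appending one step of the terminal control law. Since $\mb{w}_i$ and $\bar{d}_{L,i}$ are unchanged between $t$ and $t+\delta$, the equilibrium $\xi_i^\ts{ss}(\bar{d}_{L,i},w_i)$, the terminal set $\mbb{X}_{f,i}(x_i^\ts{eq})$, and the terminal control law $\kappa_{f,i}$ guaranteed by Proposition~\ref{prop:terminal_ingredients} remain the same across the two sampling instants; this is the feature that makes the classical shift argument applicable.

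More concretely, let $\mb{u}_i^0 = \{u_i^0(0),\ldots,u_i^0(N-1)\}$ be the optimiser of $\mbb{P}_i(x_i(t),\bar{d}_{L,i},\mb{w}_i)$, with associated trajectory $x_i^0(\cdot)$ satisfying $x_i^0(0) = x_i(t)$, $x_i^0(k\delta)\in\mbb{X}_i$, $u_i^0(k)\in\mbb{U}_i$ for $k=0,\ldots,N-1$, and $x_i^0(N\delta)\in\mbb{X}_{f,i}(x_i^\ts{eq})$. By assumption the interconnection signal that is fed to~\eqref{eq:local_constraints:dyn} on $[t,t+\delta]$ coincides with the one used in $\mbb{P}_i(\cdot)$, so the true state at $t+\delta$ equals $x_i^0(\delta)$. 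I would then define the candidate sequence
\[
\tilde{\mb{u}}_i = \{u_i^0(1),\ldots,u_i^0(N-1),\kappa_{f,i}(x_i^0(N\delta))\},
\]
whose associated nominal trajectory $\tilde{x}_i(\cdot)$ starts at $x_i^0(\delta)$, satisfies $\tilde{x}_i(k\delta) = x_i^0((k+1)\delta)$ for $k = 0,\ldots,N-1$, and for the last step uses control invariance of $\mbb{X}_{f,i}(x_i^\ts{eq})$, ensured by Proposition~\ref{prop:terminal_ingredients}$iii)$, to guarantee $\tilde{x}_i(N\delta)\in\mbb{X}_{f,i}(x_i^\ts{eq})$. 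State and input constraints along $\tilde{x}_i(\cdot)$ are inherited from $\mb{u}_i^0$ for the first $N-1$ steps, and from the feasibility of $\kappa_{f,i}$ in $\mbb{U}_i$ for the last step. This certifies that $\tilde{\mb{u}}_i$ is a feasible point of $\mbb{P}_i(x_i(t+\delta),\bar{d}_{L,i},\mb{w}_i^+)$, hence $x_i(t+\delta)\in\mc{X}_i^T(\bar{d}_{L,i},\mb{w}_i^+)$, proving $i)$.

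Part $ii)$ then follows immediately: the construction provides, for every $x_i\in\mc{X}_i^T(\bar{d}_{L,i},\mb{w}_i)$, a control $u_i^0(0)\in\mbb{U}_i$ whose application over $[t,t+\delta]$ keeps the state inside $\mc{X}_i^T(\bar{d}_{L,i},\mb{w}_i)$, which is exactly Definition~\ref{def:ci} of control invariance (the continuous-time extension in between sampling instants is handled by the same control action since $\tilde{\mb{u}}_i$ is piecewise constant, and the intermediate trajectory is contained in $\mbb{X}_i$ by the feasibility of $\mb{u}_i^0$). The main subtlety I anticipate, and which is worth spelling out carefully in the formal write-up, is the identification of the true plant state at $t+\delta$ with the nominal prediction $x_i^0(\delta)$: this requires that the interconnection signal that actually drives~\eqref{eq:vol_dyn_uncertain} equals the constant signal $\mb{w}_i$ used inside the \ac{OCP} on the interval $[t,t+\delta]$, and that $\bar{d}_{L,i}$ is used as the nominal load in the prediction. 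Both conditions follow from Assumption~\ref{assum:preview_information} and the hypothesis $\mb{w}_i^+ = \mb{w}_i$; with these in place the shift argument goes through verbatim, and any residual mismatch between $\bar{d}_{L,i}$ and $d_{L,i}$ can be postponed to the robustness analysis of the next subsection.
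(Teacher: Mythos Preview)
Your proposal is correct and follows the classical shift-and-append argument that the paper itself invokes (the paper does not write out the proof but refers to \cite[Proposition~2]{BaldiviesoMonasterios2018}, adapted to the nonlinear setting). Your identification of the key subtlety---that the nominal prediction and the true successor state coincide only because $\mb{w}_i$ and $\bar{d}_{L,i}$ are held fixed, with the load mismatch deferred to the later robustness analysis---is exactly the right reading of how Proposition~\ref{lem:rec_feasibility_unchaning_disturbance} fits into the overall argument.
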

The proof of this results follows the line of argument of \cite[Proposition 2]{BaldiviesoMonasterios2018} albeit modified to account for the nonlinear nature of the system. This result provides regularity for the value function of the \ac{OCP}, \[V_{N,i}^0(x_i,\bar{d}_{L,i},w_i) = \min\{J_i(x_i,u_i)\colon u_i\in\mc{U}_{i}^N(x_i,\bar{d}_{L,i},w_i)\},\]defined over the set $\mc{Z}_{i}^N(\bar{d}_{L,i}) = \{(x_i,w_i)\colon w_i\in\mbb{W}_i,~x_i\in\mc{X}_i^N(\bar{d}_{L,i},w_i)\}$ where $\mbb{W}_i = \bigoplus_{j\in\mc{N}_i}\mc{L}_{i,j}E_i\mbb{X}_i$. Standard MPC results, see for example~\cite[Chapter 5]{Grune2016a}, \ie feasibility implies stability, lead to the following Corollary
\begin{corollary}[Local Lyapunov function]Suppose Assumptions~\ref{assum:load_structure},~\ref{assump:constraints}--\ref{assum:pd} hold. For each node $i\in\mc{V}$ and a fixed $w_i\in\mbb{W}_i$ and $\bar{d}_{L,i}\in\mbb{D}_i$, if $x_i^\ts{eq}\in\mbb{X}_i$ is an equilibrium of node $i$, then  there exist $\mc{K}$ functions $\alpha_{ih}$ with $h\in\{1,2,3\}$ such that the value function $V_{N,i}^0(\cdot,\cdot,\cdot)$ satisfies
  \begin{equation}
    \begin{split}
      \alpha_{i1}(|x_i-x_i^\ts{eq}|)\leq V_{N,i}^0(x_i,\bar{d}_{L,i},w_i)& \leq \alpha_{i2}(|x_i-x_i^\ts{eq}|)\\
     V_{N,i}^0(x_i^+,\bar{d}_{L,i},w_i) - V_{N,i}^0(x_i,\bar{d}_{L,i},w_i) & \leq -\alpha_{i3}(|x_i-x_i^\ts{eq}|)
    \end{split}
    \label{eq:value_function_lyapunov}
  \end{equation}
\label{cor:value_function_Lyapunov}  
\end{corollary}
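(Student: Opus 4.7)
The plan is to verify the three Lyapunov inequalities by adapting the standard receding-horizon arguments to the shifted stage cost $\tilde{\ell}_i$ and the continuous-time, piecewise-constant-input setting defined in Section~\ref{sec:optim-contr-probl}. The main ingredients are all already in place: Assumption~\ref{assum:pd} gives positive definiteness of $\tilde{\ell}_i$ around $(x_i^\ts{eq}, u_i^\ts{eq})$, Proposition~\ref{prop:terminal_ingredients} provides a control-invariant terminal set with $\Psi_{f,i}$ acting as a control Lyapunov function, and Proposition~\ref{lem:rec_feasibility_unchaning_disturbance} supplies recursive feasibility under a fixed $w_i$.

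First I would establish the lower bound: since $\tilde{\ell}_i$ is continuous and positive definite, there is a $\mc{K}$-function $\underline{\alpha}_i$ with $\tilde{\ell}_i(x_i,u_i) \geq \underline{\alpha}_i(|x_i - x_i^\ts{eq}|)$; using continuity of $F_i$ on the compact set $\mbb{X}_i$ to propagate the bound on a short interval $[0,\delta]$ gives $V_{N,i}^0 \geq \alpha_{i1}(|x_i - x_i^\ts{eq}|)$ with $\alpha_{i1}(r) = \tfrac{\delta}{2}\underline{\alpha}_i(r/c_i)$ for a Lipschitz-type constant $c_i$. For the upper bound, I would first handle $x_i \in \mbb{X}_{f,i}(x_i^\ts{eq})$ by applying the terminal law $\kappa_{f,i}$ selected from $\mc{U}_{f,i}(\cdot)$ throughout the horizon, which by Proposition~\ref{prop:terminal_ingredients}(ii) and Assumption~\ref{assump:terminal_ingredients}(i) produces $V_{N,i}^0 \leq \psi_i(x_i)$, itself bounded by a $\mc{K}$-function of $|x_i - x_i^\ts{eq}|$ via continuity of $\Psi_{f,i}$. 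Extending to the whole feasible set $\mc{X}_i^T(\bar{d}_{L,i}, w_i)$ uses continuity of the value function; this is where Definition~\ref{def:continuous} is used, invoking Berge's maximum theorem with the upper/lower semi-continuity of the constraint map $\mc{U}_i^N(\cdot,\bar{d}_{L,i},w_i)$ together with continuity of $J_i$ in $(x_i,u_i)$.

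For the descent, I would perform the canonical warm-start: given $\mb{u}_i^0 = \{u_i^0(0),\ldots,u_i^0(N-1)\}$ optimal at $x_i$, define the candidate $\tilde{\mb{u}}_i^+ = \{u_i^0(1),\ldots,u_i^0(N-1),\kappa_{f,i}(\tilde{x}_i(T))\}$ at $x_i^+ = x_i(\delta)$. Proposition~\ref{lem:rec_feasibility_unchaning_disturbance} ensures $\tilde{\mb{u}}_i^+ \in \mc{U}_i^N(x_i^+,\bar{d}_{L,i},w_i)$ under unchanged $w_i$. Evaluating the cost telescopically, the tail contribution from appending $\kappa_{f,i}$ together with the terminal cost increment is nonpositive by the terminal inequality~\eqref{eq:terminal_cost_assump}, yielding
\[
V_{N,i}^0(x_i^+,\bar{d}_{L,i},w_i) \leq V_{N,i}^0(x_i,\bar{d}_{L,i},w_i) - \int_0^\delta \tilde{\ell}_i(x_i(t), u_i^0(0))\,dt,
\]
and the lower-bound argument applied to the integrand gives $\alpha_{i3}(|x_i - x_i^\ts{eq}|)$.

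The hard part will be the upper bound $\alpha_{i2}$ on the whole feasible region rather than just inside $\mbb{X}_{f,i}(x_i^\ts{eq})$. The lower bound and decrement are essentially pointwise and follow directly from the stage-cost inequality; the upper bound, in contrast, requires global continuity of the value function over a set whose geometry is defined implicitly by a nonlinear-dynamics reachability condition with set-valued constraints. The intended route is to show that $\mc{U}_i^N(\cdot,\bar{d}_{L,i},w_i)$ is both u.s.c. and l.s.c. on $\mc{X}_i^N$ (exploiting continuous differentiability of $F_i$ from Assumption~\ref{assum:load_structure} and the PC/C-set structure from Assumption~\ref{assump:constraints}), and then apply Berge's theorem to obtain continuity of $V_{N,i}^0$; continuity together with $V_{N,i}^0(x_i^\ts{eq}) = 0$ and compactness of sublevel sets then produces $\alpha_{i2}$.
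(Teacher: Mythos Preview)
Your proposal is correct and is precisely the standard ``feasibility implies stability'' argument the paper is invoking: the paper does not actually give a proof of this corollary but simply appeals to standard MPC results (citing \cite[Chapter~5]{Grune2016a}) together with the recursive feasibility of Proposition~\ref{lem:rec_feasibility_unchaning_disturbance}. Your lower-bound, warm-start descent, and terminal-set upper-bound steps are exactly that standard construction; the only minor remark is that the continuity machinery you invoke for the global upper bound is essentially what the paper later develops in Lemma~\ref{lem:value_function_kappa} (via \cite[Proposition~4.4]{Bonnans2000a} rather than Berge), so you are front-loading an argument the paper postpones.
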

The above result allows us to conclude that for a fixed $w_i\in\mbb{W}_i$ and $\bar{d}_{L,i}\in\mc{D}_i$, the system \[\dot{x}_i = F_i(x_i,\kappa_{N,i}(x_i,\bar{d}_{L,i},w_i),\bar{d}_{L,i})+E_iw_i\] is asymptotically stable with respect to $\{x_i^\ts{eq}(\bar{d}_{L,i},w_i)\}$. The next regularity results is useful for proving recursive feasibility for our MPC scheme.
\begin{lemma}[$\mc{K}$-continuity of the value function] Suppose Assumptions~\ref{assum:load_structure}, \ref{assump:constraints}, and \ref{assum:pd} hold, in addition $\bar{d}_{L,i}\in\mbb{D}_i$ is fixed. The value function $V_{N,i}^0(\cdot)$ for each $i\in\mc{V}$ satisfies $|V_{N,i}^0(z) - V_{N,i}^0(\hat{z})| \leq \sigma_V(|z-\hat{z}|)$ over $\mc{Z}^N_i$ and $\sigma_V$ is a $\mc{K}-$function.
\label{lem:value_function_kappa}
\end{lemma}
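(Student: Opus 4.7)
The plan is to establish $\mc{K}$-continuity of $V_{N,i}^0$ by constructing, for any two feasible pairs $z=(x_i,w_i)$ and $\hat{z}=(\hat{x}_i,\hat{w}_i)$ in $\mc{Z}_i^N$, an admissible input at $\hat{z}$ that approximates the optimal one at $z$, and by bounding the resulting cost difference through the regularity of the dynamics and costs. Since $\mc{Z}_i^N$ is compact, it suffices to produce a continuous modulus vanishing at the origin, which can then be majorised by a $\mc{K}$-function.

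First, let $\mb{u}^0$ and $\mb{x}^0$ denote the optimal piecewise-constant input and associated trajectory for $\mbb{P}_i(x_i,\bar{d}_{L,i},w_i)$. Applying the same $\mb{u}^0$ from initial condition $\hat{x}_i$ with frozen interconnection $\hat{w}_i$ produces a perturbed trajectory $\hat{\mb{x}}$. Because $F_i$ is $\mc{C}^1$ on the compact product set $\mbb{X}_i\times\mbb{U}_i\times\mbb{D}_i$ (Assumption~\ref{assum:load_structure}), Gronwall's inequality yields
\[
\sup_{t\in[0,T]}|\hat{x}(t)-x^0(t)| \leq L_F\bigl(|x_i-\hat{x}_i| + T|w_i-\hat{w}_i|\bigr),
\]
for a Lipschitz constant $L_F>0$. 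Continuity of the stage cost $\ell_i$ (Assumption~\ref{assum:pd}) and of the terminal penalty $\Psi_{f,i}$ (Proposition~\ref{prop:terminal_ingredients}) on the compact constraint set then delivers, via uniform continuity, a bound $|J_i(\hat{z},\mb{u}^0) - V_{N,i}^0(z)| \leq \sigma_1(|z-\hat{z}|)$ for some $\sigma_1\in\mc{K}$.

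Second, I must certify admissibility of $\mb{u}^0$ at $\hat{z}$, specifically that $\hat{x}(T)\in\mbb{X}_{f,i}(x_i^\ts{eq})$. Since the terminal controller from Proposition~\ref{prop:terminal_ingredients} is contractive via $H_{f,i}x^+\leq \lambda h_{f,i}$ with $\lambda\in[0,1)$, the optimal trajectory satisfies $x^0(T)\in\lambda\mbb{X}_{f,i}$, strictly contained in the interior of $\mbb{X}_{f,i}$. Provided the Gronwall bound is smaller than the Hausdorff distance between $\partial(\lambda\mbb{X}_{f,i})$ and $\partial\mbb{X}_{f,i}$, the perturbed $\hat{x}(T)$ stays admissible, so $\mb{u}^0$ is itself feasible at $\hat{z}$. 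For larger perturbations, one splices in a short segment of the terminal controller $\kappa_{f,i}$ to steer $\hat{x}(T)$ back into $\mbb{X}_{f,i}$, incurring an additional cost bounded by a second $\mc{K}$-function $\sigma_2$. By symmetry, exchanging $z$ and $\hat{z}$ gives the reverse inequality, so
\[
|V_{N,i}^0(z)-V_{N,i}^0(\hat{z})|\leq \sigma_V(|z-\hat{z}|),\quad \sigma_V:=\sigma_1+\sigma_2\in\mc{K}.
\]

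The main obstacle is the terminal-feasibility argument: for unrestricted perturbations the shifted trajectory need not remain in $\mbb{X}_{f,i}$. The contractive design of the terminal law ($\lambda<1$) from Proposition~\ref{prop:terminal_ingredients} is precisely the mechanism that absorbs Gronwall-type propagation errors for small perturbations, while the short correcting segment with $\kappa_{f,i}$ handles larger ones. Both ingredients are essential for producing a single $\mc{K}$-function modulus valid across the whole compact set $\mc{Z}_i^N$.
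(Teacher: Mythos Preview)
Your approach is genuinely different from the paper's, and it has real gaps. The paper does not construct a candidate input at all: it invokes a parametric-optimisation continuity result (Proposition~4.4 of Bonnans--Shapiro) to get continuity of $V_{N,i}^0$ on $\mc{Z}_i^N$, then uses Heine--Cantor on the compact domain to upgrade to uniform continuity, and finally cites a standard lemma (Limon et al.) to convert a uniform-continuity modulus into a $\mc{K}$-function. The entire argument is abstract and avoids touching trajectories.

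Your constructive route is the more common one in MPC, but as written it fails in two places. First, your claim that $x^0(T)\in\lambda\mbb{X}_{f,i}$ is not supported by the OCP formulation: the terminal constraint in~\eqref{eq:local_constraints:boundary} is $x_i(T)\in\mbb{X}_{f,i}$, not the contracted set; the parameter $\lambda$ appears only in the design of the \emph{terminal law}~\eqref{eq:terminal_law_ocp}, which acts after the horizon, not as a constraint on the horizon endpoint. So the optimal terminal state may lie on $\partial\mbb{X}_{f,i}$, and an arbitrarily small Gronwall perturbation can push $\hat{x}(T)$ out of the terminal set. Your ``splicing'' remedy then breaks down, because $\kappa_{f,i}$ is only defined on $\mbb{X}_{f,i}$ and cannot be applied from a point outside it. Second, you have ignored that the terminal set itself is $\mbb{X}_{f,i}(x_i^\ts{eq}(\bar d_{L,i},w_i))$ and therefore \emph{moves} with $w_i$; feasibility at $\hat z$ requires landing in a different set than the one $x^0(T)$ was designed for. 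The same issue applies to the stage cost, which is shifted to the $w_i$-dependent equilibrium. None of this is fatal to a constructive strategy, but it means the proof as written is incomplete; the paper sidesteps all of it by working at the level of the constraint multifunction $\mc{U}_i^N(\cdot)$ and checking closedness of its graph plus level-set compactness.
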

\begin{proof}
  The claim follows from \cite[Lemma 1]{Limon2009} if the value function is uniformly continuous. The value function $V_{N,i}^0(\cdot,\cdot,\cdot)$ is defined over the compact set $\mc{Z}_i^N(\bar{d}_{L,i})$ which by the Heine-Cantor theorem allows us to conclude uniform continuity from continuity. Our goal is to match the hypothesis of~\cite[Proposition 4.4]{Bonnans2000a} to conclude the continuity of the value function. The first one is continuity of the objective $J_i(\cdot,\cdot)$ which follows from Assumption~\ref{assum:pd}. The second hypothesis asks for the set-valued function $\mc{U}_i^N(\cdot)$ to have a closed graph. This set can be written as $\mc{U}_i^N(x_i,\bar{d}_{L,i},w_i) = \{\mb{u}_i\colon G_i(\mb{u}_i,x_i,w_i,\bar{d}_{L,i})\in\mbb{K}\}$ for a fixed compact set $\mbb{K}$ and a continuous function~$G(\cdot,\cdot,\cdot)$. The closedness of the graph is a consequence of the continuity of dynamics and constraints. The third hypothesis requires that there exists $\alpha\in\Rset$ and a compact set $\mc{C}\subset\mbb{U}^N_i$ such that for every $(x_i,w_i)$ in a neighbourhood of $(\tilde{x}_i,\tilde{w}_i)$, the level set\[\ts{lev}_\alpha J_i(x_i,\cdot) = \{u_i\in\mc{U}_i(x_i,\bar{d}_{L,i},w_i)\colon J_i(x_i,u_i)\leq\alpha\}\]is not empty and contained in $\mc{C}$. This assertion follows from the continuity properties of the cost function; for a fixed $(x_i,w_i)$ the level set is compact for any $\alpha\in\Rset$ and the continuity of the cost guarantees that this property is maintained for any neighbourhood of $(x_i,w_i)$. And lastly, for any neighbourhood $\mc{T}$ of the minimisers $\mc{S}_i(x_i,\bar{d}_{L,i},w_i)$, there exists a neighbourhood $Z$ of $(x_i,w_i)$ such that $\mc{T}\cap\mc{U}_i^N(x_i,\bar{d}_{L,i},w_i) \neq\emptyset$. To prove this statement, we consider two cases an optimal point $u_i^0\in\mc{U}_i^N(x_i,\bar{d}_{L,i},w_i)$ lies either in the interior of $\mc{U}_i^N(x_i,\bar{d}_{L,i},w_i)$ or at the boundary $\mc{U}_i^N(x_i,\bar{d}_{L,i},w_i)$. The claim is trivial for the first case owing to the compacity of the constraints; for the latter, each neighbourhood $V_U$ of the optimal point $u_i^0$ contains points that lie in the interior. For any optimal point at the boundary $u_i^0$ and $(x_i,w_i)$, there exists a neighbourhood $V_U$ such that $V_U\cap\mc{U}_i^N(x_i,\bar{d}_{L,i},w_i)\neq\emptyset$. Given the graph of $\mc{U}_i^N(\cdot)$ is a closed set, there exist a sequence $\{(x_i^{k},w_i^{k},u_i^{k})\}$ converging to $(x_i,w_i,u_i^0)$. By definition of convergence, all up to a finite amount elements of $\{(x_i^{k},w_i^{k},u_i^{k})\}$ lie in the open neighbourhood $Z\times V_U$. Since $\mb{U}_i^N(\cdot)$ is closed, then the closure of $Z\times V_U$ is compact. On the other hand, there exists $Z^k\times V_U\subset \ts{cl}Z\times V_U$ for each element of the sequence that forms a covering of $\ts{cl}Z\times V_U$. The compacity of the later set implies the existence of a finite covering such that for any element of $(\tilde{x}_i,\tilde{w}_i)\in Z = \bigcap_{h=1}^H Z^{k_h}$, $V_U\cap\mc{U}_i^N(\tilde{x}_i,\bar{d}_{L,i},\tilde{w}_i)\neq\emptyset$ 
\end{proof}
The next step in order to prove recursive feasibility, we need to investigate the effect of a changing disturbance, \ie $w_i(k+1) \neq w_i(k)$. To this aim, we invoke the following assumption:
\begin{assumption}[Bounded interconnection]
  For each $i\in\mc{V}$, the interconnection effect at a time $t + \delta$ satisfy $w_i^+ = w_i(t+\delta) = w_i(t) + \Delta w_i$ where $\Delta w_i\in\Delta\mbb{W}_i$. The set $\Delta\mbb{W}_i$ is chosen such that $\lambda_i = \max\{|w-\tilde{w}|\colon w,\tilde{w}\in\mbb{W}_i,~w-\tilde{w}\in\Delta\mbb{W}_i\}$ satisfies
  \[\lambda_i \leq \sigma_V^{-1}\circ\alpha_{i3}\circ\alpha_{i2}^{-1}(\beta_i)\]where $\beta_i> 0$.
\label{assum:bounded_disturbance}  
\end{assumption}

The following result asserts recursive feasibility
\begin{theorem}[Recursive Feasibility]
  Suppose Assumptions~\ref{assum:load_structure}, and  \ref{assump:constraints}--\ref{assum:bounded_disturbance} hold. If at time $t>0$ and for a fixed $\bar{d}_{L,i}\in\mbb{D}_i$ the state satisfies $x_i\in\mc{X}_i^N(w_i,\bar{d}_{L,i})$, then at time $t+\delta$, the state satisfies $x_i(t+\delta)\in\mc{X}_i^N(w_i^+,\bar{d}_{L,i})$.
\end{theorem}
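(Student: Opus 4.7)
The plan is to reduce the changing-interconnection case to the unchanging one (Proposition~\ref{lem:rec_feasibility_unchaning_disturbance}) and then to transfer feasibility across the jump $w_i \mapsto w_i^+$ using the regularity properties of $V_{N,i}^0$ established in Corollary~\ref{cor:value_function_Lyapunov} and Lemma~\ref{lem:value_function_kappa}. The key observation is that the bound $\lambda_i \leq \sigma_V^{-1}\circ\alpha_{i3}\circ\alpha_{i2}^{-1}(\beta_i)$ in Assumption~\ref{assum:bounded_disturbance} is precisely calibrated so that the Lyapunov decrease along the closed-loop trajectory dominates the perturbation of the value function induced by updating $w_i$.

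First, I would treat the interval $[t,t+\delta]$ as if the interconnection remained at $w_i$, which is consistent with $w_i$ being piecewise-constant on the sampling grid. Proposition~\ref{lem:rec_feasibility_unchaning_disturbance} then gives $x_i^+ := x_i(t+\delta)\in\mc{X}_i^N(\bar{d}_{L,i},w_i)$, and the standard shifted-sequence argument together with Corollary~\ref{cor:value_function_Lyapunov} yields
\begin{equation*}
V_{N,i}^0(x_i^+,\bar{d}_{L,i},w_i) \leq V_{N,i}^0(x_i,\bar{d}_{L,i},w_i) - \alpha_{i3}(|x_i - x_i^\ts{eq}(w_i)|).
\end{equation*}
Next, I would invoke the $\mc{K}$-continuity of the value function (Lemma~\ref{lem:value_function_kappa}) with respect to the pair $(x_i,w_i)$ in order to bound
\begin{equation*}
|V_{N,i}^0(x_i^+,\bar{d}_{L,i},w_i^+) - V_{N,i}^0(x_i^+,\bar{d}_{L,i},w_i)| \leq \sigma_V(|w_i^+-w_i|) \leq \sigma_V(\lambda_i).
\end{equation*}
Chaining the two inequalities and using the upper bound from Corollary~\ref{cor:value_function_Lyapunov}, Assumption~\ref{assum:bounded_disturbance} forces $V_{N,i}^0(x_i^+,\bar{d}_{L,i},w_i^+)$ to remain in the admissible sub-level set parameterised by $\beta_i$, which lies inside $\mc{X}_i^N(\bar{d}_{L,i},w_i^+)$, giving $x_i^+\in\mc{X}_i^N(\bar{d}_{L,i},w_i^+)$.

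The main obstacle is the apparent circularity of the last step: Lemma~\ref{lem:value_function_kappa} is stated over $\mc{Z}_i^N$, so its use requires $(x_i^+,w_i^+)\in\mc{Z}_i^N$, which is what we are trying to prove. I would break this by constructing explicitly a candidate control sequence for the OCP $\mbb{P}_i(x_i^+,\bar{d}_{L,i},w_i^+)$, namely the shifted optimal sequence from $\mbb{P}_i(x_i,\bar{d}_{L,i},w_i)$ tailed by the terminal controller $\kappa_{f,i}$ associated with the updated equilibrium $x_i^\ts{eq}(w_i^+)$. Continuity of $F_i(\cdot)$, continuity of $x_i^\ts{eq}(\cdot)$ via the implicit function theorem applied to $F_i(x,u,\bar{d})+E_iw=0$, and a Gronwall estimate bound the deviation of the perturbed predicted trajectory from the nominal one by a term proportional to $\lambda_i$. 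The contraction slack $\lambda\in[0,1]$ embedded in~\eqref{eq:terminal_law_ocp} is then used to absorb this deviation, so that the state and input constraints are preserved and the trajectory terminates in $\mbb{X}_{f,i}(x_i^\ts{eq}(w_i^+))$ whenever $\lambda_i$ is as small as Assumption~\ref{assum:bounded_disturbance} demands. This establishes $(x_i^+,w_i^+)\in\mc{Z}_i^N$ a posteriori, making the previous value-function argument rigorous and completing the proof.
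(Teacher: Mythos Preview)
Your core argument---Proposition~\ref{lem:rec_feasibility_unchaning_disturbance} for the unchanged interconnection, then Corollary~\ref{cor:value_function_Lyapunov} for the Lyapunov decrease, then Lemma~\ref{lem:value_function_kappa} to absorb the jump $w_i\mapsto w_i^+$, combined via Assumption~\ref{assum:bounded_disturbance} to keep $V_{N,i}^0$ in a sublevel set---is exactly the paper's proof. The paper frames the conclusion through the set $\Omega_{i,\beta_i}(\bar{d}_{L,i})=\{(y,w_i):V_{N,i}^0(y,\bar{d}_{L,i},w_i)\le\beta_i\}\subseteq\mc{Z}_i^N(\bar{d}_{L,i})$ and shows it is invariant for the augmented dynamics $(x_i^+,w_i^+)$, but the chain of inequalities is identical to yours.

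Where you diverge is in noticing the circularity: Lemma~\ref{lem:value_function_kappa} is stated on $\mc{Z}_i^N$, so evaluating $V_{N,i}^0(x_i^+,\bar{d}_{L,i},w_i^+)$ already presupposes $(x_i^+,w_i^+)\in\mc{Z}_i^N$. The paper's proof does not confront this; it simply applies the lemma and reads off membership in $\Omega_{i,\beta_i}$ afterwards. Your proposed resolution---building an explicit feasible candidate for $\mbb{P}_i(x_i^+,\bar{d}_{L,i},w_i^+)$ from the shifted optimal sequence and the terminal law at the perturbed equilibrium, bounding the trajectory deviation by Gronwall and absorbing it with the $\lambda$-contraction in~\eqref{eq:terminal_law_ocp}---is a legitimate and more careful way to establish $(x_i^+,w_i^+)\in\mc{Z}_i^N$ a priori. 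This is genuinely additional work relative to the paper; it closes a gap the authors leave implicit. The trade-off is that your argument now depends on quantitative continuity of $x_i^\ts{eq}(\cdot)$ and on the terminal-set slack, which introduces further (mild) conditions on $\lambda_i$ beyond Assumption~\ref{assum:bounded_disturbance}, whereas the paper's version is cleaner but less rigorous on this point.
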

\begin{proof}
  Since $x_i\in\mc{X}_i(\bar{d}_{L,i},w_i)$, there exists $\beta_i>0$ such that \[\begin{split}\Omega_{i,\beta_i}(\bar{d}_{L,i}) = \{y\in\mc{X}_i^N(\bar{d}_{L,i},w_i)\colon & V_{N,i}^0(y,\bar{d}_{L,i},w_i)\leq\beta_i,\\
      & w_i\in\mbb{W}_i\}\end{split}\]is non-empty and $\Omega_{i,\beta_i}(\bar{d}_{L,i})\subseteq\mc{Z}_i^N(\bar{d}_{L,i})$. From Corollary~\ref{cor:value_function_Lyapunov}, the value function and the state at $t+\delta$ satisfy $x_i^+=x_i(t+\delta)\in\mc{X}_N(\bar{d}_{L,i},w_i)$ and\[V_{N,i}^0(x_i^+,\bar{d}_{L,i},w_i)\leq V_{N,i}^0(x_i,\bar{d}_{L,i},w_i)-\alpha_{i3}(|x_i-x_i^\ts{eq}|).\]On the other hand, Lemma~\ref{lem:value_function_kappa} establishes the $\mc{K}-$continuity such that
  \[V_{N,i}^0(x_i^+,\bar{d}_{L,i},w_i^+) - V_{N,i}^0(x_i^+,\bar{d}_{L,i},w_i) \leq \sigma_V(|w_i^+-w_i|).\]Combining both of these inequalities,  Assumption~\ref{assum:bounded_disturbance}, and $x_i\in\Omega_{i,\beta_i}(\bar{d}_{L,i},w_i)$ yield
  \[\begin{split}
      V_{N,i}^0(x_i^+,\bar{d}_{L,i},w_i^+)  \leq & V_{N,i}^0(x_i,\bar{d}_{L,i},w_i)  - \alpha_{i3}(|x^\ts{eq}-x_i^\ts{eq}) \\ & + \sigma_V(|w_i^+-w_i|)\\
     \leq  & (i_d - \alpha_{i3}\circ\alpha_{i2}^{-1})(\beta_i) + \alpha_{i3}\circ\alpha_{i2}^{-1}(\beta_i)\\
     \leq  & \beta_i.
   \end{split}\]This implies that $x_i^+\in\Omega_{i,\beta}(\bar{d}_{L,i})$, and as a result the set $\Omega_{i,\beta_i}(\bar{d}_{L,i})$ is invariant for the dynamics\[\begin{split}
     x^+_i =& F_i(x_i,\kappa_{N,i}(x_i,w_i),\bar{d}_{L,i}) + E_iw_i\\
     w_i^+ \in & w_i + \Delta\mbb{W}_i.
   \end{split}\]where $\kappa_{N,i}(x_i,w_i)\in\mbb{U}_i$ is the optimal control related to $(x_i,w_i)$. This fact has as a consequence $x_i^+\in\mc{X}_i^N(\bar{d}_{L,i},w_i^+)$. 
\end{proof}
\subsubsection{Closed-loop stability}
\label{sec:clos-loop-stab-1}
In the previous section, we have shown that the controller of each node $i$ is recursively feasible. The proof relies on the stability properties of the MPC controller attached to each node $i$, in particular the value function $i\in\mc{V}$ behaves as a Lyapunov function which depends on the interconnection. In this section, we analyse the stability of the network and construct a Lyapunov function for the complete network. In the analysis, we put emphasis on bounding the interaction disturbance and on guaranteeing that the map characterising the steady state for each node converges to the network equilibrium.

We first analyse the relation between the equilibrium pairs for local systems $(x_i^\ts{eq}(\bar{d}_{L,i},w_i),u_i^\ts{eq}(\mb{d}_{L,i},w_i))$ and those arising from the network equilibrium $(x^\ts{eq},u^\ts{eq})$. The local equilibria, characterised by $F_i(x_i,u_i,\bar{d}_{L,i}) =0$ and  $u_i\in \frac{1}{2}[-I_i^\ts{max},I_i^\ts{max}]$, satisfy $\tilde{i}_i = u_i$, $\sigma_i = \frac{2}{I_i^\ts{max}}u_i$ for both inverter currents and bounded integrators. This allows us to analyse only the voltage equilibria via the following steady state optimisation problem
\begin{equation}
  \begin{split}
    \mbb{P}_i^\ts{ss}(\bar{d}_{L,i},w_i)\colon \min &\{|v_i-v^*|^2\colon   u_i\in \frac{1}{2}[-I_i^\ts{max},I_i^\ts{max}],\\ 
    & v_i \in [V_i^\ts{min},V_i^\ts{max}],\\
     \mc{L}_{ii}&v_i +g_{i}(v_i)\bar{d}_{L,i} = i_{s,i} + u_i +  w_i \}
  \end{split}
  \label{eq:ocp_ss}
\end{equation}
The following property sheds some light on the properties of the steady state optimisations for each node
\begin{proposition}
  If the optimal current $u_i^\ts{ss}\in \frac{1}{2}(-I_i^\ts{max},I_i^\ts{max})$, then the optimal steady state voltage satisfies $v_i^\ts{ss} = v^*$. 
\end{proposition}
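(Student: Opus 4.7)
The plan is to use the equality constraint to eliminate $u_i$, reducing $\mbb{P}_i^\ts{ss}$ to a scalar minimisation in $v_i$ alone, and then to exhibit a feasible descent direction at any candidate $v_i^\ts{ss} \neq v^*$ for which the current box is strictly inactive.

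First I would define $\phi_i(v) := \mc{L}_{ii}v + g_{i}(v)\bar{d}_{L,i} - i_{s,i} - w_i$, so that the equality constraint reads $u_i = \phi_i(v_i)$. Under Assumption~\ref{assum:load_structure}, $\phi_i$ is $\mc{C}^1$, and the reduced feasible set
\[
   \Phi_i = \{\, v \in [V_i^\ts{min}, V_i^\ts{max}] : \phi_i(v) \in \tfrac{1}{2}[-I_i^\ts{max}, I_i^\ts{max}]\, \}
\]
is closed, contains $v_i^\ts{ss}$, and renders $\mbb{P}_i^\ts{ss}$ equivalent to $\min\{(v - v^*)^2 : v \in \Phi_i\}$. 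The hypothesis $u_i^\ts{ss} = \phi_i(v_i^\ts{ss}) \in \frac{1}{2}(-I_i^\ts{max}, I_i^\ts{max})$ together with continuity of $\phi_i$ supplies an open neighbourhood $N$ of $v_i^\ts{ss}$ that $\phi_i$ maps strictly inside the current box; therefore $N \cap [V_i^\ts{min}, V_i^\ts{max}] \subseteq \Phi_i$, \ie the current constraint is locally inactive at $v_i^\ts{ss}$.

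Finally, if $v_i^\ts{ss} \neq v^*$ and $v^* \in [V_i^\ts{min}, V_i^\ts{max}]$ (as is implicit in $v^*$ being the common nominal voltage of the network), a sufficiently small displacement from $v_i^\ts{ss}$ toward $v^*$ stays in $N \cap [V_i^\ts{min}, V_i^\ts{max}] \subseteq \Phi_i$ and strictly decreases $(v - v^*)^2$, contradicting optimality; hence $v_i^\ts{ss} = v^*$. The only delicate point is the voltage-box constraint: if $v^*$ were outside $[V_i^\ts{min}, V_i^\ts{max}]$ the conclusion would fail, since the minimiser would then sit at the box boundary with $u_i^\ts{ss}$ still possibly interior, so the argument tacitly relies on the nominal voltage being node-admissible. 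Equivalently, one may argue via KKT conditions: interior optimality of $u_i^\ts{ss}$ forces the equality-constraint multiplier to vanish through the stationarity equation in $u_i$, after which stationarity in $v_i$ reduces to $2(v_i^\ts{ss} - v^*) = \lambda^- - \lambda^+$, and the absence of an active voltage-box multiplier yields $v_i^\ts{ss} = v^*$.
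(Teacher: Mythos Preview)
Your proposal is correct. Your primary argument---eliminating $u_i$ via the equality constraint and exhibiting a feasible descent direction toward $v^*$---is a genuinely different and more elementary route than the paper's, which works directly with the full KKT system in $(v_i,u_i)$: there, interiority of $u_i^\ts{ss}$ kills the $u$-box multipliers, which in turn (via the $u$-stationarity equation and $\nabla_u h\neq 0$) forces the equality multiplier $\lambda=0$, and the remaining $v$-stationarity plus complementary slackness yields $v_i^\ts{ss}=v^*$. Your reduction to a one-dimensional problem makes the geometry transparent and avoids tracking multipliers; the paper's approach is more mechanical but generalises straightforwardly to higher-dimensional versions of the problem. Your closing KKT sketch is essentially the paper's argument. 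You are also right to flag the implicit assumption $v^*\in(V_i^\ts{min},V_i^\ts{max})$: the paper uses it tacitly when it discards the boundary cases on the grounds that the resulting multipliers would be negative.
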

\begin{proof}
  The  KKT system for $\mbb{P}_i^\ts{ss}(\bar{d}_i)$ is given by\[\begin{split}
      (v_i-v^*) + \lambda \nabla_v h(v_i,u_i,w_i\bar{d}_{L,i}) + \mu_v^\top\nabla_vr(v_i,u_i) &= 0\\
     \lambda \nabla_u h(v_i,u_i,w_i\bar{d}_{L,i}) + \mu_u^\top\nabla_ur(v_i,u_i)& = 0 \\
     h(v_i,u_i,w_i,\bar{d}_i) & =0\\
     0\leq \mu \bot g(v_i,u_i) & \leq 0 
   \end{split}\]where $h(v_i,u_i,w_i\bar{d}_{L,i})$ defines the equality constraints, $r(v_i,u_i)$ is an affine function characterising the inequality constraints, $(\lambda,\mu_v,\mu_v)\in\Rset^{5}$ are the dual variables, and $a \bot b = a^\top b$ with $a,b\in\Rset^n$. Since by assumption $u_i^\ts{ss}$ is an interior point, then $\mu_u = 0$. This implies that $\lambda = 0$. The reduced KKT system can be expressed as\[\begin{split}
     (v_i-v^*) +  \mu_{v1} - \mu_{v2} &= 0\\
     \min(\mu_{v1},-v_i+V_i^\ts{max})& = 0\\
     \min(\mu_{v2},v_i-V_i^\ts{min})& = 0.\\
   \end{split}\] Since $\min(a,b) =0$ if and only if $(a,b)\geq 0$ or $ab=0$, then from the last equation we obtain two cases: $\mu_{v2} = 0$ or $v_i = V_i^\ts{min}$. The latter condition implies $\mu_{v1} = 0$ and $\mu_{v2} = V_i^\ts{min} - v^*$ which does not satisfy the positivity condition. The first case yields $\mu_{v1} = v^*-v_i$ and two further cases: $\mu_{v1}=0$ or $v_i = V_i^\ts{max}$. The second case leads to a negative multiplier, therefore $\mu_{v1} = 0$ which implies $v_i^\ts{ss} = v^*$. 
\end{proof}
On the other hand, the  network steady state pairs $(x^\ts{eq},u^\ts{eq})$ lie in the manifold given by
\begin{equation}
    \begin{split}
    \mc{H} = \{(v,u)\in\Rset^{|\mc{V}|}\colon &-\mc{L}v + u + i_s - g_L(v)d_L = 0,\\
    & d_L\in\mbb{D},~u\in\mbb{U}\};
  \end{split}
\label{eq:ss_net}
\end{equation}
this manifold can be seen as the level set of a map $\Phi_{{d}_L}\colon\Rset^{|\mc{V}|}\times\Rset^{|\mc{V}|}\to\Rset^{|\mc{V}|}$. The differential of this map is surjective at each point of an open neighbourhood $\mc{N}\supset\mbb{X}\times\mbb{U}$, as a consequence of Assumption~\ref{assum:load_structure} and because of the implicit function theorem there is a map $x^\ts{eq} = \xi(u)$ such that $\Phi_{{d}_L}(\xi(u),u) =0$ for all $u\in\mbb{U}$. This implies that for each current reference, there exists at least an associated voltage. A natural step is to investigate the deviation of $u^\ts{eq}$ and $u^\ts{ss} = (u_1^{ss}(w_1,\bar{d}_{L,1}),\ldots,u_{\mc V}^{ss}(w_{\mc V},\bar{d}_{L,|\mc V|}))$. The solution of $\mbb{P}_i(\bar{d}_{L,i},w_i)$ yields $u_i^\ts{ss}(\bar{d}_{L,i},w_i) = \mc{L}_{ii}v^* + g_i(v^*)\bar{d}_{L,i} -w_i-i_{s,i}$ where the interaction is $w_i = \mc{L}_i^Cv$ where $\mc{L}^C = \mc{L} - \mc{L}^D$ with $\mc{L}^D$ is a diagonal matrix with $\mc{L}_{ii}^D = \mc{L}_{ii}$ for all $i\in\mc{V}$. Following~\eqref{eq:ss_net}, the desired difference is\[\begin{split}
    |u^\ts{ss}-u^\ts{eq}| = & |\mc{L}^D(v^\ts{eq}-v^*\mathds{1}_{|\mc{V}|}) + g_L(v^\ts{eq}) - g_L(\mathds{1}_{|\mc{V}|})\bar{d}_L\\
    & + g_L(v^\ts{eq}(d_L-\bar{d}_L))|\\
    |u^\ts{ss}-u^\ts{eq}| \leq & (|\mc{L}^D| + G)|v^\ts{eq} - v^*\mathds{1}_{|\mc{V}|}| + |g_L(v^\ts{eq})|\sum_{i\in\mc{V}}\gamma_{L,i},
  \end{split}\]where $G = \sum_{i\in\mc{V}}\sum_{j = 1}^{n_{L,i}}\bar{d}_{L,ij}\pdiff{g_{L,ij}}{v_i}\big\vert_{v^*}$ is the Lipschitz constant for the load characteristics, and $\gamma_{L,i}>$ from Assumption~\ref{assum:preview_information}. Following the formulation for each \ac{OCP}, the control law aims to minimise the deviation from $u_i^\ts{ss}$ such that $\kappa_{N,i}(x_i,\bar{d}_{L,i},w_i) = u_i^\ts{ss}(\bar{d}_{L,i},w_i) + \tilde{u}_i$ with $\tilde{u}\in\{f\in\mbb{U}\colon u_i^\ts{ss}(\bar{d}_{L,i},w_i)+f\in\mbb{U}_i\}$. The resulting closed-loop network voltage dynamics are
\begin{equation}
  \begin{split}
    C\dot{v} = & -\mc{L}^D(v-v^*\mathds{1}_{|\mc{V}|}) - (g_L(v) - g_L(v^*\mathds{1}_{|\mc{V}|}))\bar{d}_L\\
    & + \tilde{r}(\tilde{i}-u_i) - g_L(v)(d_L-\bar{d}_L).
  \end{split}
  \label{eq:dyn_cl_net}
\end{equation}
In the above equation, we have tacitly assumed that all node currents $\tilde{i}\to u_i$ and the quantity $\tilde{r}$ is a function of the error between current and reference. The speed of this convergence is governed by~\eqref{eq:Lyap_decrease} and the proportional gain $k_{P,i}$ from the current controller. Further modifications to~\eqref{eq:dyn_cl_net} yield\[\begin{split}
    C\dot{v} = & -(\mc{L}^D+ G)(v-v^*\mathds{1}_{|\mc{V}|}) + \tilde{u}_i\\
    & + \tilde{r}(\tilde{i}-u_i) + \tilde{s}(v-v^*\mathds{1}_{|\mc{V}|}) - g_L(v)(d_L-\bar{d}_L)
  \end{split}\]where we have used the definition of derivative for $g_L(\cdot)$ which implies that $\lim_{v\to v^*\mathds{1}_{|\mc V|}}\frac{|s(v- v^*\mathds{1}_{|\mc V|})|}{|v- v^*\mathds{1}_{|\mc V|}|} = 0$ and $G$ is the derivative of $g_L(\cdot)\bar{d}_L$ evaluated at the reference voltage. Depending on the load characteristics the matrix G can be either negative or positive definite; in the latter case each control input $\tilde{u}_i$ counteracts the effects of both instability of $-(\mc{L}^D +G)$ and the effects of $\tilde{r}$ and $\tilde{s}$. We note that when using $u^\ts{ss}$, the resulting equilibrium manifold can be described as\[\mc{L}^D(v-\mathds{1}_{|\mc V|}) + (g_L(v)-g_L(v^*\mathds{1}_{\mc V}))\bar{d}_L = g_L(v)(d_L-\bar{d}_L),\]the solution of which is clearly $v^\ts{eq} = v^*\mathds{1}_{\mc V}$ when $\bar{d}_L-d_L = 0$, \ie when the controller has perfect knowledge of the load.
\begin{example}[equilibrium for CPLs]
For $g_{L,i}(v) = v^{-1}$, \ie a CPL, the equilibrium manifold, assuming perfect knowledge of the load, has two solutions at $v_i = v_i^*$ and $v_i = \frac{d_{L,i}}{\mc{L}_{ii}v^*}$. The second solution depends on the load demand which may be a natural choice of equilibrium, however for an operating voltage of $560 \si{\volt}$ and an admittance in the order of $o(10^1)$ yields that $v_i \approx 10^{-4}d_{L,i}$ which may be outside the voltage constraint set. When the load is uncertain, the solution, following a perturbation analysis, is $v_i = v^*\mathds{1}_{\mc V} + (d_{L,i}-\bar{d}_{L,i})\frac{v^*}{\bar{d}_{L,i}-\mc{L}_{ii}v^*} + (d_{L,i}-\bar{d}_{L,i})^2\bigl(\frac{v^*}{\bar{d}_{L,i}-\mc{L}_{ii}v^*}\bigr)^3+\cdots$. This implies that $|v^\ts{eq}_i-v^*| \leq \frac{(d_{L,i}-\bar{d}_{L,i})}{\mc{L}_{ii}v^*}$ approximately.
\end{example}
Inspired by the above example, we can attempt to bound the equilibrium deviation from the operating value as a function of the load uncertainty. A similar perturbation analysis yields  $v^\ts{eq} = v^* + |d_L-\bar{d}_L|\nu_1(v^*,\bar{d}_L) + |d_L-\bar{d}_L|^2\nu_2(v^*,\bar{d}_L) + \cdots$ where $\nu_1$, $\nu_2$, etc are functions of both the operating conditions and nominal load. This results in a bound on $|v^\ts{eq}-v^*\mathds{1}_{\mc V}| \leq \varepsilon |\sum_{k=1}^{\infty}\nu_k\varepsilon^{k-1}|$ with $\varepsilon = \sum_{i\in\mc{V}}\gamma_{L,i}$. The following result establishes the properties of the closed loop system around a neighbourhood of the nominal operating point.
\begin{proposition}[Closed loop control invariance near the equilibrium]
  Suppose Assumptions~\ref{assum:load_structure}, \ref{assump:constraints}, and \ref{assum:preview_information} hold. There exists a set $\mbb{S}\subset\Rset^{|\mc V|}$ that is control invariant for the voltage dynamics $\dot{z} = A z + \tilde{u} + \tilde{s}(z) + \tilde{r}(\tilde{i}-u_i) - g_L(z+v^*\mathds{1}_{|\mc V|})(d-\bar{d})$ and constraint sets $(\mbb{V},\mbb{U})$. 
\end{proposition}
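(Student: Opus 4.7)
My plan is to construct $\mbb{S}$ as a small sublevel set of a quadratic Lyapunov function and verify control invariance via a Nagumo-type tangentiality argument. Since the control enters the voltage dynamics through the identity, the pair $(A,I_{|\mc V|})$ is trivially stabilisable, so there exists a linear gain $K$ such that $A_{cl}=A+K$ is Hurwitz regardless of whether $G$ is positive or negative definite. Decomposing the controller as $\tilde{u}=Kz+\tilde{u}_{ff}$, where $\tilde{u}_{ff}$ is a pointwise feed-forward correction to be selected, and letting $P\succ 0$ solve $A_{cl}^\top P+PA_{cl}=-Q$ for some $Q\succ 0$, I would adopt $V(z)=z^\top Pz$ as the candidate Lyapunov function and
\[
\mbb{S}=\{z\in\mbb{V}-v^*\mathds{1}_{|\mc V|}\colon V(z)\le c\}
\]
as the candidate invariant set, for a level $c>0$ to be fixed at the end.

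Next I would bound the three perturbation channels. By Assumption~\ref{assump:constraints} together with Assumption~\ref{assum:preview_information}, the load mismatch $g_L(z+v^*\mathds{1}_{|\mc V|})(d-\bar{d})$ is uniformly bounded by some $\eta_d>0$ over the compact set $\mbb{V}-v^*\mathds{1}_{|\mc V|}$. By Corollary~\ref{cor:Lyapunov_stability}, the current tracking error $\tilde{i}-u_i$ is transient and admits a bound $\eta_i>0$. And since $\tilde{s}(z)=o(|z|)$ by construction, for every $\varepsilon>0$ there exists $r_\varepsilon>0$ such that $|\tilde{s}(z)|\le\varepsilon|z|$ whenever $|z|\le r_\varepsilon$. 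Taking the Lie derivative of $V$ along the closed-loop dynamics and using Cauchy--Schwarz gives
\[
\dot V \le -z^\top Q z + 2|Pz|\bigl(|\tilde{u}_{ff}|+\varepsilon|z|+\eta_i+\eta_d\bigr),
\]
so selecting $\tilde{u}_{ff}$ to partially cancel the known bounded terms and taking $\varepsilon$ small enough that $\lambda_{\min}(Q)-2\varepsilon\lambda_{\max}(P)>0$ yields $\dot V<0$ outside a small residual ball around the origin.

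Fixing now $c>0$ so that $\mbb{S}\subseteq r_\varepsilon\mbb{B}_{|\mc V|}$ and so that the required action $Kz+\tilde{u}_{ff}$ lies inside the shifted admissible set $\mbb{U}-u^\ts{ss}(\bar{d}_L,w)$ for every $z\in\mbb{S}$, the condition $\dot V<0$ on $\partial\mbb{S}$ implies control invariance in the sense of Definition~\ref{def:ci}, with the selector $\tilde{u}(z)=Kz+\tilde{u}_{ff}(z)$ playing the role of the admissible feedback.

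The main obstacle I anticipate is precisely the last admissibility check: quantitatively verifying that the steady-state optimiser $u_i^\ts{ss}(\bar{d}_{L,i},w_i)$ of $\mbb{P}_i^\ts{ss}(\bar{d}_{L,i},w_i)$ sits strictly inside $\mbb{U}_i$, uniformly over the feasible range of $(\bar{d}_{L,i},w_i)$, so that enough control margin remains for the corrective term $Kz+\tilde{u}_{ff}$. This effectively amounts to an interior-point hypothesis on $\mbb{P}_i^\ts{ss}$, which is reasonable provided the operating voltage $v^*$ and the bounds $\gamma_{L,i}$ from Assumption~\ref{assum:preview_information} are not too large; granted this, a compactness argument produces a uniform $c>0$ that works across every admissible load and interconnection pair. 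If $\partial\mbb{S}$ intersects $\partial(\mbb{V}-v^*\mathds{1}_{|\mc V|})$, an additional tangent-cone check on that face is a routine consequence of the smoothness of the vector field and the perturbation bounds already established.
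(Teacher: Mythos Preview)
Your proposal is correct and follows a genuinely different route from the paper. The paper's own proof is a one-line sketch: it simply asserts that control invariance follows from the formulation of the OCP $\mbb{P}_i$ together with the boundedness of the voltage, current, and load deviations relative to $v^*\mathds{1}_{|\mc V|}$, $\kappa_N(v,\bar d)$, and $d_L$. In other words, the paper leans on the MPC machinery already built in the preceding subsection---the terminal ingredients of Proposition~\ref{prop:terminal_ingredients}, the feasible region~\eqref{eq:feas_region}, and recursive feasibility---to furnish the invariant set implicitly as (a subset of) the MPC feasible region.

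Your argument is instead a self-contained Lyapunov construction: you exploit that the control enters through the identity so $(A,I)$ is trivially stabilisable, pick a Hurwitz $A_{cl}=A+K$, and take $\mbb{S}$ as a sublevel set of the associated quadratic $V(z)=z^\top Pz$. The perturbation bounds you invoke are exactly the ones the paper uses informally, and your Nagumo-type check on $\partial\mbb{S}$ is a standard way to close the argument. The obstacle you flag---that $u_i^{\ts{ss}}$ must lie strictly inside $\mbb{U}_i$ so that the corrective term $Kz+\tilde u_{ff}$ remains admissible---is real, and the paper does not address it explicitly either; it is effectively absorbed into the phrase ``formulation of the OCP.'' What your approach buys is an explicit, constructive set and a decoupling from the MPC framework; what the paper's approach buys is consistency with the controller actually deployed, since the invariant set it has in mind is tied directly to $\kappa_N$ rather than to an auxiliary linear gain $K$.
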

\begin{proof}
  The proof is a consequence of the formulation of the \ac{OCP} together with the boundedness of the deviations of voltages, currents, and loads w.r.t $v^*\mathds{1}_{|\mc{V}|}$, $\kappa_{N}(v,\bar{d})$ and $d_L$ respectively.
  % The voltage dynamics can be seen as a perturbed linear system. The perturbation is characterised by the deviations of the voltage, current, and loads  . The invoked assumptions guarantee that these perturbations are bounded by a factor $\Gamma > 0$. The discrete counterpart of the resulting dynamics obeys $Cv^+=Cv + \int_t^{t+\delta}(-Av+\tilde{u}+\Gamma)dt$ 
\end{proof}
The importance of the above proposition is that it ensures the existence of a control action that counteracts both the potential instability introduced by the loads. As time increases, the deviation of current vanishes following \eqref{eq:Lyap_decrease}, and the only remaining terms have potentially constant values of $v^\ts{eq} - v^*\mathds{1}_{|\mc V|}$ and $d_L-\bar{d}_L$. This yields an equilibrium pair $(v^\ts{eq},\tilde{u}^\ts{eq}+\kappa_{N}(v,\bar{d}))\in\mc H$.

The final part of the puzzle is the analysis of the interconnection disturbance. The behaviour of between samples can be characterised by the variation $\Delta w_i = w_i(\delta+t)-w_i(t)$ which can be bounded for all $i\in\mc{V}$ as\[\begin{split}
    |\Delta w_i| \leq & |\mc{L}_i^C| |v^+ - v|)
  \end{split}\]
where $\mc{L}_i^C = [\mc{L}_{ij}]_{i\neq j}\in\Rset^{|\mc{V}|}$ collects all interconnection information for node $i\in\mc{V}$. On the other hand, the difference between the solutions of $\mbb{P}_i^\ts{ss}(w_i(t+\delta),\bar{d}_{L,i})$ and $\mbb{P}_i^\ts{ss}(w_i(t),\bar{d}_{L,i})$ behaves in a similar way, \ie \[\begin{split}
    |u_i^\ts{ss}(w_i(t+\delta),\bar{d}_{L,i}) - u_i^\ts{ss}(w_i,\bar{d}_{L,i})| \leq & |\mc{L}_i^C| |v^+ - v|.
  \end{split}\]
We note that the difference $|v^+ - v|$ shrinks as the voltage approaches its equilibrium. Moreover, from the closed loop dynamics~\eqref{eq:dyn_cl_net}, the bounds for this  difference is\[\begin{split}
    |v^+-v|\leq & \biggl|\int_t^{t+\delta}-\mc{L}^D(v-v^*\mathds{1}_{\mc V}) - (g_L(v) - g_L(v^*\mathds{1}_{\mc V}))\bar{d}_L\\
    &\quad + g_L(v)(d_L-\bar{d}_L) + \tilde{u} dt\biggr|\\
    \leq &(G+|\mc{L}^D|)\int_{t}^{t+\delta}|v-v^*\mathds{1}_{\mc V}| dt + \delta|u| \\
    &\quad+ |g_L(v)|\sum_{i\in\mc V}\gamma_{L,i}\\
    & \alpha_V(|v-v^*\mathds{1}_{\mc V}|) + \delta (|\tilde{u}| + \Gamma\sum_{i\in\mc V}\gamma_{L,i})
  \end{split}\]
In the last inequality, we have used the following well known property of $\mc{K}-$functions: the integral of a class $\mc{K}-$function is also class $\mc{K}$. Given the continuity of the load characteristics, we have $|g_L(v)| \leq \Gamma$ for all $v$, and in between samples the action of the secondary controller is considered constant. We are now in position to state the main result of this paper:
\begin{theorem}[Closed-loop stability]
  Suppose Assumption~\ref{assum:load_structure},\ref{assump:constraints}--\ref{assum:preview_information} hold. If in addition, $\sigma_V(r) = L_Vr$ for $r>0$ and $L_V>0$, and there exists $\eta_i>0$ for all $i\in\mc{V}$ such that \[\alpha(r) = \sum_{i\in\mc V}\eta_i\alpha_{3i}(r) - L_V(\sum_{i\in\mc{V}}\eta_i|\mc{L}_i^C|)\alpha_{V}(r)\]is a $\mc{K}-$function, then the network of buck converters~\eqref{buck} is input to state stable (ISS) in closed loop with the control law~\eqref{eq:control_law} and current references given by $\tilde{i}_i^\ts{ref} = \kappa_{N,i}(x_i,w_i,\bar{d}_{L,i})$. 
\end{theorem}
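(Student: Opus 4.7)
The plan is to construct an aggregate Lyapunov function for the network by weighted combination of the local value functions, and exploit the $\mathcal{K}$-continuity bound from Lemma~\ref{lem:value_function_kappa} together with the inter-sample voltage bound derived immediately above the theorem statement. Concretely, I would take
\[
V(x,\mathbf{w}) = \sum_{i\in\mc{V}} \eta_i\, V_{N,i}^0(x_i,\bar{d}_{L,i},w_i),
\]
with the $\eta_i>0$ supplied by the hypothesis. Corollary~\ref{cor:value_function_Lyapunov} already gives class-$\mc{K}$ bounds on each $V_{N,i}^0$ in the node deviation $|x_i - x_i^\ts{eq}|$, so lower and upper bounds on $V$ in terms of $|x - x^\ts{eq}|$ follow directly by summing and using comparison functions; this dispenses with the positive-definiteness side of the ISS argument.

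First I would look at the one-step decrement of each local term. Combining the descent inequality of Corollary~\ref{cor:value_function_Lyapunov} (holding the neighbour signal fixed) with the $\mc{K}$-continuity of the value function from Lemma~\ref{lem:value_function_kappa}, we obtain, for each $i\in\mc{V}$,
\[
V_{N,i}^0(x_i^+,\bar{d}_{L,i},w_i^+) - V_{N,i}^0(x_i,\bar{d}_{L,i},w_i) \le -\alpha_{i3}(|x_i-x_i^\ts{eq}|) + \sigma_V(|w_i^+ - w_i|).
\]
The assumed linearity $\sigma_V(r)=L_V r$ together with the identification $|\Delta w_i|\le|\mc{L}_i^C||v^+-v|$ from the paragraph preceding the theorem turns the rightmost term into $L_V|\mc{L}_i^C||v^+-v|$. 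Multiplying by $\eta_i$, summing over $i\in\mc{V}$ and using the inter-sample bound $|v^+-v|\le\alpha_V(|v-v^*\mathds{1}_{|\mc V|}|) + \delta(|\tilde u|+\Gamma\sum_j\gamma_{L,j})$ already established above, the combined decrement takes the form
\[
V(x^+,\mathbf{w}^+) - V(x,\mathbf{w}) \le -\alpha(|x-x^\ts{eq}|) + \rho(\delta,|\tilde u|,\gamma_L),
\]
where $\alpha(\cdot)$ is precisely the class-$\mc{K}$ function posited in the theorem and $\rho$ groups the residual terms due to load mismatch and control activity.

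Second, I would absorb the residual $\rho$ into an ISS gain. Using Assumption~\ref{assum:preview_information} to bound $|\bar{d}_{L,i}-d_{L,i}|$ by $\gamma_{d,i}$ and the continuity of $g_L$ on the operating range (Assumption~\ref{assum:load_structure}), the term $\Gamma\sum_i\gamma_{L,i}$ is a $\mc{K}$-function of the overall load-perturbation norm $|d_L-\bar d_L|$; similarly $|\tilde u|$ is bounded by a $\mc{K}$-function of that same quantity through the \ac{OCP} formulation together with the bound on $|u^\ts{ss}-u^\ts{eq}|$ established just before the theorem. This packages $\rho$ as $\sigma(|d_L-\bar d_L|)$ for some class-$\mc{K}$ function $\sigma$. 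The standard ISS-Lyapunov dissipation inequality
\[
V(x^+,\mathbf{w}^+) - V(x,\mathbf{w}) \le -\alpha(|x-x^\ts{eq}|) + \sigma(|d_L-\bar d_L|)
\]
then yields ISS of the closed loop with respect to the load uncertainty by the usual comparison argument; the recursive feasibility already proved in Theorem~\ref{lem:rec_feasibility_unchaning_disturbance} guarantees the trajectories along which this inequality is evaluated actually remain in the domain of $V$.

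The main obstacle I expect is the last step in forming $\alpha(\cdot)$: one must verify that the positive contributions $\sum_i\eta_i\alpha_{i3}$ strictly dominate the interconnection penalty $L_V(\sum_i\eta_i|\mc{L}_i^C|)\alpha_V$ uniformly on the operating set. This is the small-gain-style condition hidden in the hypothesis that $\alpha$ is class $\mc{K}$; proving that such $\eta_i$ exist for realistic Laplacians would require an auxiliary computation tying the growth rates of $\alpha_{i3}$ (inherited from the stage cost) to the Lipschitz modulus $L_V$ and the row sums of $\mc{L}^C$, which in turn depend on how tight the terminal-set construction of Proposition~\ref{prop:terminal_ingredients} is made. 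Once this dominance is granted, the remainder of the proof is a bookkeeping exercise in composing class-$\mc{K}$ estimates.
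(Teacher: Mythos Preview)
Your approach is essentially the same strategy as the paper's: build a weighted sum of local value functions, combine the Corollary~\ref{cor:value_function_Lyapunov} descent with the $\mc{K}$-continuity of Lemma~\ref{lem:value_function_kappa}, bound $|\Delta w_i|$ via $|\mc{L}_i^C||v^+-v|$ and the inter-sample estimate, and invoke the hypothesis on $\alpha$ to obtain an ISS dissipation inequality. Two differences are worth flagging. First, the paper augments the candidate with the bounded-integrator Lyapunov function $W_i$ from Proposition~\ref{prop:lyapunov_driving}, i.e.\ it uses $\Psi(x)=\sum_i\eta_i\bigl(V_{N,i}^0+W_i\bigr)$; the extra term contributes an additional quadratic decay $-\gamma_i(r_i+k_{P,i})\int_t^{t+\delta}(\tilde{i}_i-\kappa_{N,i})^2\,dt$ in the current-tracking error, which the paper folds into a strengthened $\tilde{\alpha}_{4i}$. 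Second, and more substantively, the paper splits $\alpha_{3i}(|x_i-x_i^\ts{eq}|)$ into separate voltage, current and integrator contributions before invoking the hypothesis, because the interconnection penalty $L_V\sum_i\eta_i|\mc{L}_i^C|\alpha_V(\cdot)$ acts only on the \emph{voltage} deviation $|v-v^*\mathds{1}|$, not on the full state. Your line ``the combined decrement takes the form $-\alpha(|x-x^\ts{eq}|)$'' skips this step: as written, you are comparing a decay in $|x-x^\ts{eq}|$ against a perturbation in $|v-v^*\mathds{1}|$, and the hypothesis on $\alpha$ only controls the latter against $\sum_i\eta_i\alpha_{3i}$ evaluated at the same scalar. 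The paper's component splitting (and the $W_i$ term) is precisely what makes the two sides commensurable; you would need to insert an analogous step to close the argument cleanly.
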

\begin{proof}
  Following Corollary~\ref{cor:value_function_Lyapunov}, the value function for each $i\in\mc{N}$ is a Lyapunov function with respect to an equilibrium point $x^\ts{eq}_i(\bar{d}_{L,i},w_i)\in\mbb{X}_i$ which is dependent on both nominal load and interaction with the network. Furthermore, Proposition~\ref{prop:lyapunov_driving} provides us with a Lyapunov function for the driving subsystem for each $i\in\mc{V}$. Our strategy to prove the asymptotic stability of the closed-loop system hinges on showing that
  \begin{equation}
    \Psi(x) = \sum_{i\in\mc{V}}\eta_i(\underbrace{V_{N,i}^0(x_i,\bar{d}_{L,i},w_i) + W_i(x_i)}_{\Psi_i(x_i,\bar{d}_{L,i},w_i)})
    \label{eq:network_lyap}
  \end{equation}
  is a Lyapunov function. Here, $V_{N,i}^0(\cdot,\cdot)$ is the \ac{OCP} value function and $W_i(\cdot)$ is the bounded integrator Lyapunov function defined in \eqref{eq:current_Lyapunov}; the constants $\eta_i>0$ are suitable weights as in \cite{Siljak2007}. The variation $\Delta\Psi(x) = \Psi(x(t+\delta)) - \Psi(x(t))$, following Corollary~\ref{cor:value_function_Lyapunov} and Proposition~\ref{prop:lyapunov_driving}, is
   \[\begin{split}
    \Delta \Psi(x) \leq &\sum_{i\in\mc{V}}\eta_i \bigl (-\alpha_{3i}(|x_i-x^*|) + \sigma_V(|\Delta w_i)|) \\
    &- \gamma_i(r + k_{P,i})\int_t^{\delta +t}(\tilde{i}_i-\kappa_{N,i}(x_i,w_i,\bar{d}_{L,i}))^2dt\bigr ) \\
    \leq & \sum_{i\in\mc{V}}\eta_i \bigl (-\frac{1}{3}\alpha_{3i}(|v_i-v_i^\ts{eq}|) \\
    & - \tilde{\alpha}_{4i}(|\tilde{i}_i-\kappa_{N,i}(x_i,w_i,\bar{d}_{L,i})|)\\
    & -\frac{1}{3}\alpha_{3i}(|\sigma_i- \frac{2}{I_i^\ts{max}}\kappa_{N,i}(x_i,w_i,\bar{d}_{L,i})|)\bigr)\\
    & + L_V\sum_{i\in\mc{V}}\eta_i|\mc{L}_i^C|(\alpha_V(|v-v^*\mathds{1}_{\mc V}|) \\
    & + \delta(|\tilde{u}| + \Gamma\sum_{i\in \mc V}\gamma_{L,i})))\\
    \leq &  -\alpha(|v-v^*\mathds{1}_{\mc V}|) - \tilde{\alpha}(|\tilde{i}-\kappa_{N}(x,\bar{d}_{L,i})|)\\
    & -\hat{\alpha}(|\sigma- \frac{2}{I^\ts{max}}\kappa_{N}(x,\bar{d}_{L,i})|) \\
    & + L_V\sum_{i\in\mc{V}}\eta_i|\mc{L}_i^C|\delta(|\tilde{u}| + \Gamma\sum_{i\in \mc V}\gamma_{L,i})). 
  \end{split}\]
In the above chain of inequalities, we have used the hypothesis that claims $\alpha(\cdot)$ is a $\mc{K}-$function. Furthermore $\tilde{\alpha} = \sum_{i\in\mc{i}}\eta_i\tilde{\alpha}_{4i}$ where $\tilde{\alpha}_{4i}r = \frac{1}{3}\alpha_{3i}(r) + \int_t^{t+\delta}|r|dt$ and $\hat{\alpha} = \sum_{i\in\mc{V}}\frac{\eta_i}{3}\alpha_{3i}$ are $\mc{K}$-functions. Following that $|\tilde{u}|\to0$ as the system approaches its equilibrium, and the uncertainty of the load is bounded by Assumption~\ref{assum:preview_information}, the ISS of the buck converter network follows.
\end{proof}
\section{Simulations}
\label{sec:simulations}
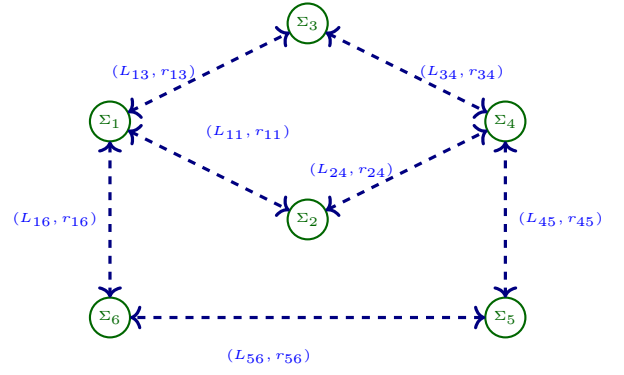
\begin{figure}[t!]
  \centering
  \begin{tikzpicture}[scale=0.65,font=\tiny]
  % every node/.style  = draw,
  % every label/.style = draw]
  \begin{scope}[anchor=west,every node/.style={circle,thick,draw,minimum size=15,inner sep=0pt, outer sep=0pt},green!40!black]
    \node (S1) at (-4,2) {$\Sigma_1$};
    \node (S2) at (0,0) {$\Sigma_2$};    
    \node (S3) at (0,4) {$\Sigma_3$};
    \node (S4) at (4,2) {$\Sigma_4$};
    \node (S5) at (4,-2) {$\Sigma_{5}$};
    \node (S6) at (-4,-2) {$\Sigma_6$};
  \end{scope}
% Interconnection
\begin{scope}[blue,decoration={ markings,
    mark=at position 0.5 with {\arrow{>}}},every node/.style={circle}, every edge/.style={draw=blue!50!black,very thick,dashed}]
  \path[<->] (S1) edge node[above right] {$(L_{11},r_{11})$} (S2);
  \path[<->] (S1) edge node[left] {$(L_{13},r_{13})$} (S3);
  \path[<->] (S2) edge node[left] {$(L_{24},r_{24})$} (S4);
  \path[<->] (S3) edge node[right] {$(L_{34},r_{34})$} (S4);
  \path[<->] (S1) edge node[left] {$(L_{16},r_{16})$} (S6);
  \path[<->] (S4) edge node[right] {$(L_{45},r_{45})$} (S5);
  \path[<->] (S5) edge node[below left] {$(L_{56},r_{56})$} (S6);
\end{scope}
\end{tikzpicture}
  \caption{Network topology: each node $\Sigma_i$ is interconnected with a subset $\mc{V}_i$ of $\{1,\ldots,6\}$ via lines characterised by parameters $(L_{ij},r_{ij})$. Each node is assumed to be feeding a load.}
  \label{fig:topology}
\end{figure}
In this section, we explore the behaviour of a network of $|\mc{V}| = 6$ buck converters, see Fig.~\ref{fig:topology} for the interconnection topology, where each source is feeding a constant power load, \ie each $g_{L,i}(v_i) = v_i^{-1}$. The dynamics for each power converter are:\[\begin{split}
    C_i\frac{dv_i}{dt} &= -\mc{L}_{ii}v_i -\frac{P_i}{v_i} + i_i - \sum_{j\in\mc{V}_i}\mc{L}_{ij}v_j\\
    L_i\frac{di_i}{dt} & = -r_ii_i + V_\ts{in}u_i - v_i.
  \end{split}\]
The input voltage for each controller is $V_{in} = 800 \si{\volt}$, and given that all sources belong to the same connected component of the network, the operating voltage is set to $v^* = 560\si{\volt}$. The rated power for each converter is given as $P_{C,i}^\ts{max} =\{43,39,46,39,50,42\}\si{\kilo\watt}$. We note that the load characteristic function is not defined around the origin and each load draws more current as the voltage plummets. This observation lead us to define the voltage constraint set $\mbb{V}_i = V_\ts{in}[0.3,1]$ which yield the upper bounds for the currents:  $I_i^\ts{max} = \frac{P_{C,i}^\ts{max}}{0.3V_\ts{in}}$ such that $I^\ts{max} = \{178.7,160.9, 193.2,162.1,207.9,173.2\}\si{\ampere}$. The primary controller is given by~\eqref{eq:eq_current} and the translation $\tilde{i} = i_i - i_{s}$ with $i_s = \frac{1}{2}I_i^\ts{max}$. The cost used in the secondary controller for each $i\in\mc V$ is $\ell_i(x_i,u_i) = q_i|v_i-v^*|^2 + n_i|u_i-u_i^\ts{ss}(w_i,\bar{d}_i)|$. The terminal set is $\mbb{X}_i^f = v^*+[-\Delta V_i,\Delta V_i]\times\frac{1}{2}[-I_i^\ts{max},I_i^\ts{max}]\times[-1,1]$ with $\Delta V_i = 10\si{\volt}$.

%Finally, the sampling time used in this simulations is $\delta = 2\si{\milli\second}$, and the simulation horizon is $14\si{\second}$ 

The simulation shows the behaviour of the network to \emph{a priori} unknown load changes; these changes may occur as a step function in the load value or as uncertainty around each nominal value. Each power converter feeds a load, but in this example we have some loads exceeding the rated capacity of their corresponding power converter, albeit the condition $\sum_{i\in\mc V} P_{L,i} \leq \sum_{i\in\mc V}P_{C,i}^\ts{max}$ always holds. When power converter $i$ cannot feed its own load, the rest of the network aids this converter while accounting for losses in the network. The power demanded by each load is given in Fig.~\ref{fig:power} together with the power provided by the network. The load steps in the following way: $P_{L,1} = 0.95P_{C,i}^\ts{max}$ and at $t= 0.6\si{\second}$ it switches to $P_{L,1} = 0.735P_{C,1}^\ts{max}$; $P_{L,2} = 1.14P_{C,2}^\ts{max}$ and at $t= 1.24\si{\second}$ it switches to $P_{L,2} = 0.73P_{C,2}^\ts{max}$; $P_{L,4} = 0.5P_{C,4}^\ts{max}$ and at $t= 0.93\si{\second}$ it switches to $P_{L,4} = 1.03P_{C,4}^\ts{max}$; and $P_{L,6} = 0.66P_{C,6}^\ts{max}$ and at $t= 0.3\si{\second}$ it switches to $P_{L,6} = 1.05P_{C,6}^\ts{max}$. In Fig.~\ref{fig:power}, we see how node $i=3$ changes its power, despite its load retaining its nominal value, to account for a large step in node $i=4$. As seen in Fig.~\ref{fig:voltage}, the voltage of each node reacts to a change in load without leaving the terminal set, we note the non-minimal phase type behaviour exhibited by all the voltages when reacting to a load step.  The currents injected $i_i = i_{s,i} + \tilde{i}_i$ by each converter can be seen in Fig.~\ref{fig:current}. We note how these currents do not operate close to their operating limits even when the load exceeds the capacity of a converter; the network itself provides required support for each converter.  
\begin{figure}[t!]
  \input{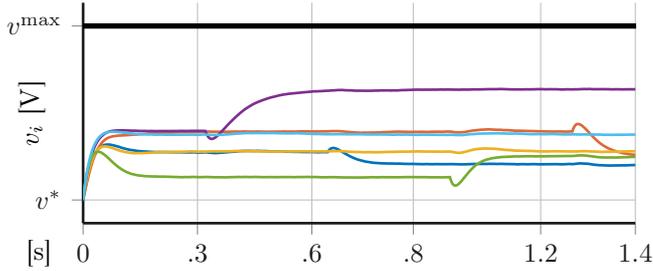}
  \caption{Voltage responses of the network to arbitrary and uncertain load variations. All voltages remain within a compact neighbourhood of the operating voltage $v^*$.}
  \label{fig:voltage}
\end{figure}
\begin{figure}[t!]
  \input{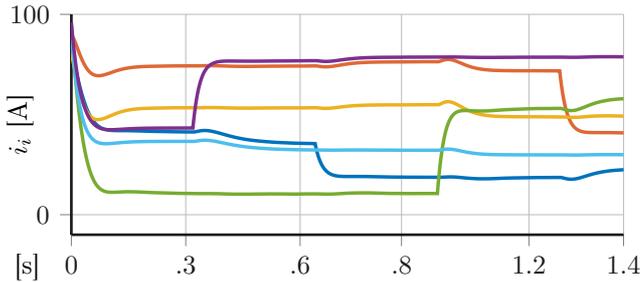}
  \caption{Current responses to load variations. When a drastic load change occurs, see for example \eqref{fig:v:v_5}, the other nodes respond accordingly to maintain balance in the network }
  \label{fig:current}
\end{figure}
\begin{figure}[t!]
  \input{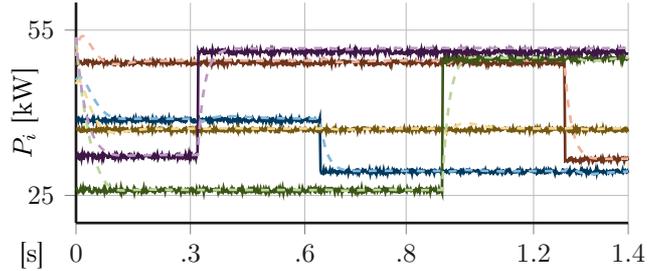}
  \caption{Power response of the network. Solid lines, \eg \eqref{fig:p:p_1}, represent power drawn by each CPL and dashed lines, \eg  \eqref{fig:p:d_1}, represent power provided by each converter.}
  \label{fig:power}
\end{figure}
We focus now in the analysis of the equilibrium points of the network. From Fig.~\ref{fig:voltage} and Fig.~\ref{fig:current}, both voltages and currents clearly converge to neighbourhoods of the equilibrium points. In Fig.~\ref{fig:current_ivsiss}, we observe the deviation of converter currents with respect to the steady state value computed in \eqref{eq:ocp_ss} for each $i\in\mc{V}$. The deviation from this nominal equilibrium point shrinks while the nominal load does not step. The difference, however, does not converge to zero because of the uncertainty in the real load which results in a deviation in terms of an equilibrium point. We see the effect of this deviation is Fig.~\ref{fig:voltage_ss} where we portray the distance of the centralised equilibrium $v^\ts{eq}$ computed for the uncertain load to the operating point $v^*\mathds{1}_{|\mc V|}$ and the voltage $v$. As shown in previous sections, the difference $|v-v^\ts{eq}|$ is bounded by a quantity that depends on the uncertainty in the load. Lastly, we portray in Fig.~\ref{fig:kernel_distance} the distance of the voltage to the kernel of the network Laplacian. As shown in the theory above, this distance does not shrink to zero, but remains in a neighbourhood of $\ker\mc{L}$. The changes in this distance are because of load steps, but despite the changes this distance remains bounded.
\begin{figure}[t!]
  \input{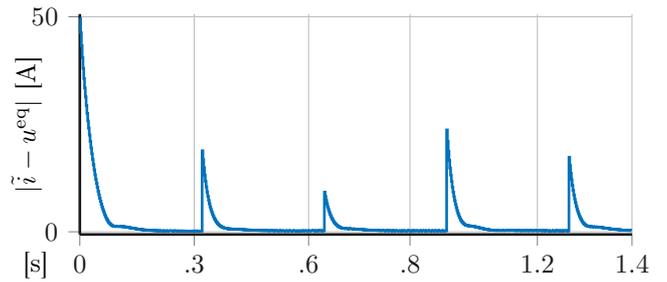}
  \caption{Deviation from current equilibrium $|\tilde{i}-u^\ts{eq}|$. The distance to the equilibrium shrinks while the nominal load is kept constant.}
  \label{fig:current_ivsiss}
\end{figure}
\begin{figure}[t!]
  \input{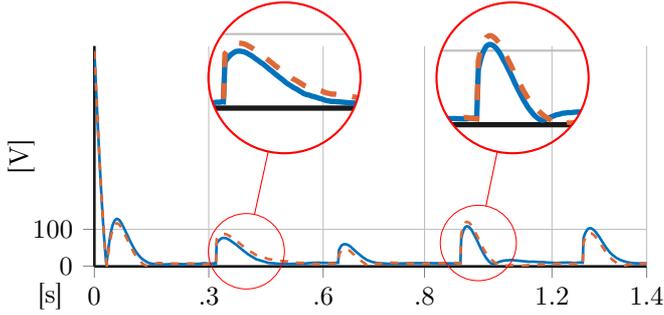}
  \caption{Deviation from voltage centralised equilibrium, \eqref{fig:vss:veq-vs} represents $|v^\ts{eq}-v^*\mathds{1}_{|\mc V|}|$ and \eqref{fig:vss:veq-v} represents $|v^\ts{eq}-v|$. The distance from the equilibrium voltage $v^\ts{eq}$ decreases in time when the nominal does not change.}
  \label{fig:voltage_ss}
\end{figure}
\begin{figure}[t!]
  \input{figs/dker.tikz}
  \caption{Distance between $v\in\Rset^{|\mc V|}$ and $\ker\mc L$.}
  \label{fig:kernel_distance}
\end{figure}

\section{Conclusions}
\label{sec:conclusions}
In this paper, we have proposed a decentralised primary controller, together with a distributed secondary controller for a DC network of Buck power converters connected in meshed topology. For the primary controller, we have proposed a Lyapunov function candidate defined over a compact set which is considered a safety region, \ie a set containing all currents below a given threshold. Using this Lyapunov function, we have proven the asymptotic stability of an equilibrium point using the invariance principle. We have analysed the interconnection between voltages and currents to conclude the attractiveness of a neighbourhood of the kernel of the network Laplacian matrix. We leverage on this result to prove the ISS behaviour of the closed loop system between the network and a distributed receding horizon voltage controller. We have also proven recursive feasibility of the controller. Lastly, we illustrated our approach in a network containing uncertain constant power loads with \emph{a-priori} unknown variation. The results show how our proposed controller is capable to steer the voltages near a nominal operating voltage; furthermore, the simulation results show how different members of the network are able to cooperate when a load exceeds the capacity of its hosting power converter.
\bibliography{extracted}%

\begin{thebibliography}{47}
\expandafter\ifx\csname natexlab\endcsname\relax\def\natexlab#1{#1}\fi
\providecommand{\url}[1]{\texttt{#1}}
\providecommand{\href}[2]{#2}
\providecommand{\path}[1]{#1}
\providecommand{\DOIprefix}{doi:}
\providecommand{\ArXivprefix}{arXiv:}
\providecommand{\URLprefix}{URL: }
\providecommand{\Pubmedprefix}{pmid:}
\providecommand{\doi}[1]{\href{http://dx.doi.org/#1}{\path{#1}}}
\providecommand{\Pubmed}[1]{\href{pmid:#1}{\path{#1}}}
\providecommand{\bibinfo}[2]{#2}
\ifx\xfnm\relax \def\xfnm[#1]{\unskip,\space#1}\fi
%Type = Article
\bibitem[{Alexandridis(2019)}]{Alexandridis2019a}
\bibinfo{author}{Alexandridis, A.T.}, \bibinfo{year}{2019}.
\newblock \bibinfo{title}{{Studying state convergence of input-to-state stable
  systems with applications to power system analysis}}.
\newblock \bibinfo{journal}{Energies} \bibinfo{volume}{13},
  \bibinfo{pages}{1--24}.
\newblock \DOIprefix\doi{10.3390/en13010092}.
%Type = Phdthesis
\bibitem[{Baldivieso-Monasterios(2018)}]{Baldivieso2018b}
\bibinfo{author}{Baldivieso-Monasterios, P.}, \bibinfo{year}{2018}.
\newblock \bibinfo{title}{{Distributed Model Predictive Control for
  Reconfigurable Large-Scale Systems}}.
\newblock Ph.D. thesis. University of Sheffield.
%Type = Article
\bibitem[{{Baldivieso Monasterios} and
  Trodden(2018)}]{BaldiviesoMonasterios2018}
\bibinfo{author}{{Baldivieso Monasterios}, P.}, \bibinfo{author}{Trodden, P.},
  \bibinfo{year}{2018}.
\newblock \bibinfo{title}{{Model Predictive Control of Linear Systems with
  Preview Information: Feasibility, Stability and Inherent Robustness}}.
\newblock \bibinfo{journal}{IEEE Transactions on Automatic Control}
  \DOIprefix\doi{10.1109/TAC.2018.2886156}.
%Type = Inproceedings
\bibitem[{Baldivieso-Monasterios and
  Konstantopoulos(2020)}]{Baldivieso-Monasterios2020}
\bibinfo{author}{Baldivieso-Monasterios, P.R.},
  \bibinfo{author}{Konstantopoulos, G.C.}, \bibinfo{year}{2020}.
\newblock \bibinfo{title}{{Constrained control for microgrids with constant
  power loads}}, in: \bibinfo{booktitle}{2020 59th IEEE Conference on Decision
  and Control (CDC)}, \bibinfo{publisher}{IEEE}. pp.
  \bibinfo{pages}{3341--3346}.
\newblock \DOIprefix\doi{10.1109/CDC42340.2020.9304345}.
%Type = Book
\bibitem[{Bonnans and Shapiro(2000)}]{Bonnans2000a}
\bibinfo{author}{Bonnans, J.F.}, \bibinfo{author}{Shapiro, A.},
  \bibinfo{year}{2000}.
\newblock \bibinfo{title}{{Perturbation Analysis of Optimization Problems}}.
\newblock \bibinfo{publisher}{Springer New York}, \bibinfo{address}{New York,
  NY}.
\newblock \DOIprefix\doi{10.1007/978-1-4612-1394-9}.
%Type = Book
\bibitem[{Bullo(2020)}]{Bullo2020}
\bibinfo{author}{Bullo, F.}, \bibinfo{year}{2020}.
\newblock \bibinfo{title}{{Lectures on Network Systems}}.
\newblock \bibinfo{edition}{1.4} ed., \bibinfo{publisher}{Kindle Direct
  Publishing}.
%Type = Article
\bibitem[{Cucuzzella et~al.(July 2019)Cucuzzella, Trip, De~Persis, Cheng,
  Ferrara and van~der Schaft}]{Cucuzzella2019}
\bibinfo{author}{Cucuzzella, M.}, \bibinfo{author}{Trip, S.},
  \bibinfo{author}{De~Persis, C.}, \bibinfo{author}{Cheng, X.},
  \bibinfo{author}{Ferrara, A.}, \bibinfo{author}{van~der Schaft, A.},
  \bibinfo{year}{July 2019}.
\newblock \bibinfo{title}{A robust consensus algorithm for current sharing and
  voltage regulation in {DC} microgrids}.
\newblock \bibinfo{journal}{IEEE Transactions on Control Systems Technology}
  \bibinfo{volume}{27}, \bibinfo{pages}{1583--1595}.
\newblock \DOIprefix\doi{10.1109/TCST.2018.2834878}.
%Type = Article
\bibitem[{{De Persis} and Monshizadeh(2015)}]{DePersis2015}
\bibinfo{author}{{De Persis}, C.}, \bibinfo{author}{Monshizadeh, N.},
  \bibinfo{year}{2015}.
\newblock \bibinfo{title}{{Bregman storage functions for microgrid control}}.
\newblock \bibinfo{journal}{IEEE Transactions on Automatic Control}
  \bibinfo{volume}{63}, \bibinfo{pages}{53--68}.
\newblock \DOIprefix\doi{10.1109/TAC.2017.2709246}.
%Type = Article
\bibitem[{{De Persis} et~al.(2018){De Persis}, Weitenberg and
  D{\"{o}}rfler}]{DePersis2018a}
\bibinfo{author}{{De Persis}, C.}, \bibinfo{author}{Weitenberg, E.R.},
  \bibinfo{author}{D{\"{o}}rfler, F.}, \bibinfo{year}{2018}.
\newblock \bibinfo{title}{{A power consensus algorithm for DC microgrids}}.
\newblock \bibinfo{journal}{Automatica} \bibinfo{volume}{89},
  \bibinfo{pages}{364--375}.
\newblock \href{http://arxiv.org/abs/1611.04192}{{\tt arXiv:1611.04192}}.
%Type = Article
\bibitem[{Elsayed et~al.(2015)Elsayed, Mohamed and Mohammed}]{Elsayed2015}
\bibinfo{author}{Elsayed, A.T.}, \bibinfo{author}{Mohamed, A.A.},
  \bibinfo{author}{Mohammed, O.A.}, \bibinfo{year}{2015}.
\newblock \bibinfo{title}{{DC microgrids and distribution systems: An
  overview}}.
\newblock \bibinfo{journal}{Electr. Power Syst. Res.} \bibinfo{volume}{119},
  \bibinfo{pages}{407--417}.
%Type = Article
\bibitem[{Engelmann et~al.(2020)Engelmann, Jiang, Houska and
  Faulwasser}]{Engelmann2020}
\bibinfo{author}{Engelmann, A.}, \bibinfo{author}{Jiang, Y.},
  \bibinfo{author}{Houska, B.}, \bibinfo{author}{Faulwasser, T.},
  \bibinfo{year}{2020}.
\newblock \bibinfo{title}{{Decomposition of Nonconvex Optimization via Bi-Level
  Distributed ALADIN}}.
\newblock \bibinfo{journal}{IEEE Transactions on Control of Network Systems}
  \bibinfo{volume}{7}, \bibinfo{pages}{1848--1858}.
\newblock \href{http://arxiv.org/abs/1903.11280}{{\tt arXiv:1903.11280}}.
%Type = Book
\bibitem[{Gr{\"{u}}ne and Pannek(2016)}]{Grune2016a}
\bibinfo{author}{Gr{\"{u}}ne, L.}, \bibinfo{author}{Pannek, J.},
  \bibinfo{year}{2016}.
\newblock \bibinfo{title}{{Nonlinear model predictive control : theory and
  algorithms}}.
\newblock \bibinfo{edition}{Second} ed., \bibinfo{publisher}{Springer}.
%Type = Article
\bibitem[{Guerrero et~al.(2011)Guerrero, Vasquez, Matas, {De Vicu{\~{n}}a} and
  Castilla}]{Guerrero2011}
\bibinfo{author}{Guerrero, J.M.}, \bibinfo{author}{Vasquez, J.C.},
  \bibinfo{author}{Matas, J.}, \bibinfo{author}{{De Vicu{\~{n}}a}, L.G.},
  \bibinfo{author}{Castilla, M.}, \bibinfo{year}{2011}.
\newblock \bibinfo{title}{{Hierarchical control of droop-controlled AC and DC
  microgrids - A general approach toward standardization}}.
\newblock \bibinfo{journal}{IEEE Trans. Ind. Electron.} \bibinfo{volume}{58},
  \bibinfo{pages}{158--172}.
%Type = Article
\bibitem[{Guo et~al.(July 2019)Guo, Huang and Wang}]{Guo2019b}
\bibinfo{author}{Guo, T.}, \bibinfo{author}{Huang, S.}, \bibinfo{author}{Wang,
  X.}, \bibinfo{year}{July 2019}.
\newblock \bibinfo{title}{Overcurrent protection control design for {DC}-{DC}
  buck converter with disturbances}.
\newblock \bibinfo{journal}{IEEE Access} \bibinfo{volume}{7},
  \bibinfo{pages}{90825--90833}.
\newblock \DOIprefix\doi{10.1109/ACCESS.2019.2926985}.
%Type = Article
\bibitem[{Guo et~al.(Jan. 2019)Guo, Wang, Wang, Li and Li}]{Guo2019a}
\bibinfo{author}{Guo, T.}, \bibinfo{author}{Wang, Z.}, \bibinfo{author}{Wang,
  X.}, \bibinfo{author}{Li, S.}, \bibinfo{author}{Li, Q.}, \bibinfo{year}{Jan.
  2019}.
\newblock \bibinfo{title}{A simple control approach for buck converters with
  current-constrained technique}.
\newblock \bibinfo{journal}{IEEE Transactions on Control Systems Technology}
  \bibinfo{volume}{27}, \bibinfo{pages}{418--425}.
\newblock \DOIprefix\doi{10.1109/TCST.2017.2758347}.
%Type = Article
\bibitem[{Hans et~al.(2018)Hans, Braun, Raisch, Gr{\"{u}}ne and
  Reincke-Collon}]{Hans2018}
\bibinfo{author}{Hans, C.}, \bibinfo{author}{Braun, P.},
  \bibinfo{author}{Raisch, J.}, \bibinfo{author}{Gr{\"{u}}ne, L.},
  \bibinfo{author}{Reincke-Collon, C.}, \bibinfo{year}{2018}.
\newblock \bibinfo{title}{{Hierarchical Distributed Model Predictive Control of
  Interconnected Microgrids}}.
\newblock \bibinfo{journal}{IEEE Transactions on Sustainable Energy}
  \bibinfo{volume}{3029}, \bibinfo{pages}{1--10}.
%Type = Article
\bibitem[{Hu et~al.(2021)Hu, Shan, Guerrero, Ioinovici, Chan and
  Rodriguez}]{Hu2021}
\bibinfo{author}{Hu, J.}, \bibinfo{author}{Shan, Y.},
  \bibinfo{author}{Guerrero, J.M.}, \bibinfo{author}{Ioinovici, A.},
  \bibinfo{author}{Chan, K.W.}, \bibinfo{author}{Rodriguez, J.},
  \bibinfo{year}{2021}.
\newblock \bibinfo{title}{{Model predictive control of microgrids – An
  overview}}.
\newblock \bibinfo{journal}{Renew. Sustain. Energy Rev.} \bibinfo{volume}{136},
  \bibinfo{pages}{110422}.
%Type = Book
\bibitem[{Isidori(1999)}]{Isidori1999}
\bibinfo{author}{Isidori, A.}, \bibinfo{year}{1999}.
\newblock \bibinfo{title}{{Nonlinear Control Systems II}}.
\newblock Communications and Control Engineering, \bibinfo{publisher}{Springer
  London}, \bibinfo{address}{London}.
\newblock \DOIprefix\doi{10.1007/978-1-4471-0549-7}.
%Type = Article
\bibitem[{Jin et~al.(2021)Jin, Li, Yan and Cao}]{Jin2021}
\bibinfo{author}{Jin, B.}, \bibinfo{author}{Li, H.}, \bibinfo{author}{Yan, W.},
  \bibinfo{author}{Cao, M.}, \bibinfo{year}{2021}.
\newblock \bibinfo{title}{{Distributed Model Predictive Control and
  Optimization for Linear Systems with Global Constraints and Time-Varying
  Communication}}.
\newblock \bibinfo{journal}{IEEE Transactions on Automatic Control}
  \bibinfo{volume}{66}, \bibinfo{pages}{3393--3400}.
%Type = Book
\bibitem[{Khalil(2001)}]{Khalil2001}
\bibinfo{author}{Khalil, H.K.}, \bibinfo{year}{2001}.
\newblock \bibinfo{title}{{Nonlinear Systems}}.
\newblock \bibinfo{edition}{Third edit} ed.,
  \bibinfo{publisher}{Prentice-Hall}.
%Type = Article
\bibitem[{K{\"{o}}hler et~al.(2019)K{\"{o}}hler, M{\"{u}}ller and
  Allg{\"{o}}wer}]{Kohler2019}
\bibinfo{author}{K{\"{o}}hler, J.}, \bibinfo{author}{M{\"{u}}ller, M.A.},
  \bibinfo{author}{Allg{\"{o}}wer, F.}, \bibinfo{year}{2019}.
\newblock \bibinfo{title}{{Distributed model predictive control—Recursive
  feasibility under inexact dual optimization}}.
\newblock \bibinfo{journal}{Automatica} \bibinfo{volume}{102},
  \bibinfo{pages}{1--9}.
%Type = Article
\bibitem[{Konstantopoulos and
  Baldivieso-Monasterios(2020)}]{Konstantopoulos2019a}
\bibinfo{author}{Konstantopoulos, G.C.},
  \bibinfo{author}{Baldivieso-Monasterios, P.R.}, \bibinfo{year}{2020}.
\newblock \bibinfo{title}{{State-limiting PID controller for a class of
  nonlinear systems with constant uncertainties}}.
\newblock \bibinfo{journal}{International Journal of Robust and Nonlinear
  Control} \bibinfo{volume}{30}, \bibinfo{pages}{1770--1787}.
\newblock \DOIprefix\doi{10.1002/rnc.4853}.
%Type = Article
\bibitem[{Kosaraju et~al.(April 2022)Kosaraju, Sivaranjani and
  Gupta}]{Kosaraju22}
\bibinfo{author}{Kosaraju, K.C.}, \bibinfo{author}{Sivaranjani, S.},
  \bibinfo{author}{Gupta, V.}, \bibinfo{year}{April 2022}.
\newblock \bibinfo{title}{Safety during transient response in direct current
  microgrids using control barrier functions}.
\newblock \bibinfo{journal}{IEEE Control Systems Letters} \bibinfo{volume}{6},
  \bibinfo{pages}{337--342}.
\newblock \DOIprefix\doi{10.1109/LCSYS.2021.3074471}.
%Type = Incollection
\bibitem[{Limon et~al.(2009)Limon, Alamo, Raimondo, de~la Pe{\~{n}}a, Bravo,
  Ferramosca and Camacho}]{Limon2009}
\bibinfo{author}{Limon, D.}, \bibinfo{author}{Alamo, T.},
  \bibinfo{author}{Raimondo, D.M.}, \bibinfo{author}{de~la Pe{\~{n}}a, D.M.},
  \bibinfo{author}{Bravo, J.M.}, \bibinfo{author}{Ferramosca, A.},
  \bibinfo{author}{Camacho, E.F.}, \bibinfo{year}{2009}.
\newblock \bibinfo{title}{{Input-to-State Stability: A Unifying Framework for
  Robust Model Predictive Control}}, in: \bibinfo{booktitle}{Nonlinear Model
  Predictive Control}. \bibinfo{publisher}{Springer Berlin Heidelberg}, pp.
  \bibinfo{pages}{1--26}.
%Type = Book
\bibitem[{Maestre and Negenborn(2014)}]{Maestre2014a}
\bibinfo{author}{Maestre, J.}, \bibinfo{author}{Negenborn, R.R.},
  \bibinfo{year}{2014}.
\newblock \bibinfo{title}{{Distributed Model Predictive Control Made Easy}}.
  volume~\bibinfo{volume}{69}.
%Type = Inproceedings
\bibitem[{Nahata and Ferrari-Trecate(2020)}]{Nahata2020b}
\bibinfo{author}{Nahata, P.}, \bibinfo{author}{Ferrari-Trecate, G.},
  \bibinfo{year}{2020}.
\newblock \bibinfo{title}{On existence of equilibria, voltage balancing, and
  current sharing in consensus-based {DC} microgrids}, in:
  \bibinfo{booktitle}{2020 European Control Conference (ECC)}, pp.
  \bibinfo{pages}{1216--1223}.
\newblock \DOIprefix\doi{10.23919/ECC51009.2020.9143766}.
%Type = Article
\bibitem[{Nahata et~al.(2020)Nahata, Soloperto, Tucci, Martinelli and
  Ferrari-Trecate}]{NAHATA2020}
\bibinfo{author}{Nahata, P.}, \bibinfo{author}{Soloperto, R.},
  \bibinfo{author}{Tucci, M.}, \bibinfo{author}{Martinelli, A.},
  \bibinfo{author}{Ferrari-Trecate, G.}, \bibinfo{year}{2020}.
\newblock \bibinfo{title}{A passivity-based approach to voltage stabilization
  in {DC} microgrids with {ZIP} loads}.
\newblock \bibinfo{journal}{Automatica} \bibinfo{volume}{113},
  \bibinfo{pages}{108770}.
\newblock \DOIprefix\doi{https://doi.org/10.1016/j.automatica.2019.108770}.
%Type = Article
\bibitem[{{Parada Contzen}(2021)}]{ParadaContzen2021a}
\bibinfo{author}{{Parada Contzen}, M.}, \bibinfo{year}{2021}.
\newblock \bibinfo{title}{{Power sharing in actuated DC grids in mesh
  topology}}.
\newblock \bibinfo{journal}{Sustain. Energy, Grids Networks}
  \bibinfo{volume}{28}, \bibinfo{pages}{100523}.
%Type = Article
\bibitem[{Parisio et~al.(2014)Parisio, Rikos and Glielmo}]{Parisio2014}
\bibinfo{author}{Parisio, A.}, \bibinfo{author}{Rikos, E.},
  \bibinfo{author}{Glielmo, L.}, \bibinfo{year}{2014}.
\newblock \bibinfo{title}{{A model predictive control approach to microgrid
  operation optimization}}.
\newblock \bibinfo{journal}{IEEE Transactions on Control Systems Technology}
  \bibinfo{volume}{22}, \bibinfo{pages}{1813--1827}.
%Type = Article
\bibitem[{Planas et~al.(2015)Planas, Andreu, G{\'{a}}rate, {Mart{\'{i}}nez De
  Alegr{\'{i}}a} and Ibarra}]{Planas2015}
\bibinfo{author}{Planas, E.}, \bibinfo{author}{Andreu, J.},
  \bibinfo{author}{G{\'{a}}rate, J.I.}, \bibinfo{author}{{Mart{\'{i}}nez De
  Alegr{\'{i}}a}, I.}, \bibinfo{author}{Ibarra, E.}, \bibinfo{year}{2015}.
\newblock \bibinfo{title}{{AC and DC technology in microgrids: A review}}.
\newblock \bibinfo{journal}{Renew. Sustain. Energy Rev.} \bibinfo{volume}{43},
  \bibinfo{pages}{726--749}.
%Type = Article
\bibitem[{Riverso et~al.(2018)Riverso, Kouramas and
  Ferrari-Trecate}]{Riverso2018}
\bibinfo{author}{Riverso, S.}, \bibinfo{author}{Kouramas, K.},
  \bibinfo{author}{Ferrari-Trecate, G.}, \bibinfo{year}{2018}.
\newblock \bibinfo{title}{{Decentralized and distributed robust control
  invariance for constrained linear systems}}.
\newblock \bibinfo{journal}{2017 IEEE 56th Annual Conference on Decision and
  Control, CDC 2017} \bibinfo{volume}{2018-Janua}, \bibinfo{pages}{5978--5984}.
%Type = Article
\bibitem[{Sadabadi(Nov. 2021)}]{Sadabadi2021a}
\bibinfo{author}{Sadabadi, M.S.}, \bibinfo{year}{Nov. 2021}.
\newblock \bibinfo{title}{Line-independent plug-and-play voltage stabilization
  and $\mathcal{L}_2$ gain performance of {DC} microgrids}.
\newblock \bibinfo{journal}{IEEE Control Systems Letters} \bibinfo{volume}{5},
  \bibinfo{pages}{1609--1614}.
\newblock \DOIprefix\doi{10.1109/LCSYS.2020.3041335}.
%Type = Article
\bibitem[{Sadabadi(Oct. 2021)}]{Sadabadi2021b}
\bibinfo{author}{Sadabadi, M.S.}, \bibinfo{year}{Oct. 2021}.
\newblock \bibinfo{title}{A distributed control strategy for parallel {DC}-{DC}
  converters}.
\newblock \bibinfo{journal}{IEEE Control Systems Letters} \bibinfo{volume}{5},
  \bibinfo{pages}{1231--1236}.
\newblock \DOIprefix\doi{10.1109/LCSYS.2020.3025411}.
%Type = Article
\bibitem[{Sadabadi and Shafiee(Jan. 2020)}]{Sadabadi2020}
\bibinfo{author}{Sadabadi, M.S.}, \bibinfo{author}{Shafiee, Q.},
  \bibinfo{year}{Jan. 2020}.
\newblock \bibinfo{title}{Scalable robust voltage control of {DC} microgrids
  with uncertain constant power loads}.
\newblock \bibinfo{journal}{IEEE Transactions on Power Systems}
  \bibinfo{volume}{35}, \bibinfo{pages}{508--515}.
\newblock \DOIprefix\doi{10.1109/TPWRS.2019.2928512}.
%Type = Article
\bibitem[{Sadabadi et~al.(Nov. 2018)Sadabadi, Shafiee and
  Karimi}]{Sadabadi2018}
\bibinfo{author}{Sadabadi, M.S.}, \bibinfo{author}{Shafiee, Q.},
  \bibinfo{author}{Karimi, A.}, \bibinfo{year}{Nov. 2018}.
\newblock \bibinfo{title}{Plug-and-play robust voltage control of {DC}
  microgrids}.
\newblock \bibinfo{journal}{IEEE Transactions on Smart Grid}
  \bibinfo{volume}{9}, \bibinfo{pages}{6886--6896}.
\newblock \DOIprefix\doi{10.1109/TSG.2017.2728319}.
%Type = Book
\bibitem[{Sepulchre et~al.(1997)Sepulchre, Jankovi{\'{c}} and
  Kokotovi{\'{c}}}]{Sepulchre1997}
\bibinfo{author}{Sepulchre, R.}, \bibinfo{author}{Jankovi{\'{c}}, M.},
  \bibinfo{author}{Kokotovi{\'{c}}, P.V.}, \bibinfo{year}{1997}.
\newblock \bibinfo{title}{{Constructive Nonlinear Control}}.
\newblock Communications and Control Engineering, \bibinfo{publisher}{Springer
  London}, \bibinfo{address}{London}.
\newblock \DOIprefix\doi{10.1007/978-1-4471-0967-9}.
%Type = Article
\bibitem[{Silani et~al.(July 2021)Silani, Cucuzzella, Scherpen and
  Yazdanpanah}]{Silani2021}
\bibinfo{author}{Silani, A.}, \bibinfo{author}{Cucuzzella, M.},
  \bibinfo{author}{Scherpen, J.M.A.}, \bibinfo{author}{Yazdanpanah, M.J.},
  \bibinfo{year}{July 2021}.
\newblock \bibinfo{title}{Output regulation for voltage control in {DC}
  networks with time-varying loads}.
\newblock \bibinfo{journal}{IEEE Control Systems Letters} \bibinfo{volume}{5},
  \bibinfo{pages}{797--802}.
\newblock \DOIprefix\doi{10.1109/LCSYS.2020.3005775}.
%Type = Book
\bibitem[{Siljak(2007)}]{Siljak2007}
\bibinfo{author}{Siljak, D.D.}, \bibinfo{year}{2007}.
\newblock \bibinfo{title}{{Large-scale dynamic systems : stability and
  structure}}.
\newblock \bibinfo{publisher}{Dover Publications}.
%Type = Article
\bibitem[{Trip et~al.(Jan. 2019)Trip, Cucuzzella, Cheng and
  Scherpen}]{Trip2019}
\bibinfo{author}{Trip, S.}, \bibinfo{author}{Cucuzzella, M.},
  \bibinfo{author}{Cheng, X.}, \bibinfo{author}{Scherpen, J.},
  \bibinfo{year}{Jan. 2019}.
\newblock \bibinfo{title}{Distributed averaging control for voltage regulation
  and current sharing in {DC} microgrids}.
\newblock \bibinfo{journal}{IEEE Control Systems Letters} \bibinfo{volume}{3},
  \bibinfo{pages}{174--179}.
\newblock \DOIprefix\doi{10.1109/LCSYS.2018.2857559}.
%Type = Article
\bibitem[{Trodden and Maestre(2017)}]{Trodden2017}
\bibinfo{author}{Trodden, P.}, \bibinfo{author}{Maestre, J.},
  \bibinfo{year}{2017}.
\newblock \bibinfo{title}{{Distributed predictive control with minimization of
  mutual disturbances}}.
\newblock \bibinfo{journal}{Automatica} \bibinfo{volume}{77},
  \bibinfo{pages}{31--43}.
%Type = Article
\bibitem[{Tucci et~al.(Sept. 2018)Tucci, Meng, Guerrero and
  Ferrari-Trecate}]{Tucci2018b}
\bibinfo{author}{Tucci, M.}, \bibinfo{author}{Meng, L.},
  \bibinfo{author}{Guerrero, J.M.}, \bibinfo{author}{Ferrari-Trecate, G.},
  \bibinfo{year}{Sept. 2018}.
\newblock \bibinfo{title}{Stable current sharing and voltage balancing in {DC}
  microgrids: {A} consensus-based secondary control layer}.
\newblock \bibinfo{journal}{Automatica} \bibinfo{volume}{95},
  \bibinfo{pages}{1--13}.
\newblock \DOIprefix\doi{https://doi.org/10.1016/j.automatica.2018.04.017}.
%Type = Article
\bibitem[{Tucci et~al.(May 2018)Tucci, Riverso and Ferrari-Trecate}]{Tucci2018}
\bibinfo{author}{Tucci, M.}, \bibinfo{author}{Riverso, S.},
  \bibinfo{author}{Ferrari-Trecate, G.}, \bibinfo{year}{May 2018}.
\newblock \bibinfo{title}{Line-independent plug-and-play controllers for
  voltage stabilization in {DC} microgrids}.
\newblock \bibinfo{journal}{IEEE Transactions on Control Systems Technology}
  \bibinfo{volume}{26}, \bibinfo{pages}{1115--1123}.
\newblock \DOIprefix\doi{10.1109/TCST.2017.2695167}.
%Type = Article
\bibitem[{Tucci et~al.(Nov. 2016)Tucci, Riverso, Vasquez, Guerrero and
  Ferrari-Trecate}]{Tucci2016}
\bibinfo{author}{Tucci, M.}, \bibinfo{author}{Riverso, S.},
  \bibinfo{author}{Vasquez, J.C.}, \bibinfo{author}{Guerrero, J.M.},
  \bibinfo{author}{Ferrari-Trecate, G.}, \bibinfo{year}{Nov. 2016}.
\newblock \bibinfo{title}{A decentralized scalable approach to voltage control
  of {DC} islanded microgrids}.
\newblock \bibinfo{journal}{IEEE Transactions on Control Systems Technology}
  \bibinfo{volume}{24}, \bibinfo{pages}{1965--1979}.
\newblock \DOIprefix\doi{10.1109/TCST.2016.2525001}.
%Type = Article
\bibitem[{Venkatasubramanian et~al.(1995)Venkatasubramanian, Schattler and
  Zaborszky}]{Venkatasubramanian1995}
\bibinfo{author}{Venkatasubramanian, V.}, \bibinfo{author}{Schattler, H.},
  \bibinfo{author}{Zaborszky, J.}, \bibinfo{year}{1995}.
\newblock \bibinfo{title}{{Fast time-varying phasor analysis in the balanced
  three-phase large electric power system}}.
\newblock \bibinfo{journal}{IEEE Transactions on Automatic Control}
  \bibinfo{volume}{40}, \bibinfo{pages}{1975--1982}.
\newblock \DOIprefix\doi{10.1109/9.471228}.
%Type = Article
\bibitem[{Wang and Ong(2017)}]{Wang2017}
\bibinfo{author}{Wang, Z.}, \bibinfo{author}{Ong, C.J.}, \bibinfo{year}{2017}.
\newblock \bibinfo{title}{{Distributed Model Predictive Control of linear
  discrete-time systems with local and global constraints}}.
\newblock \bibinfo{journal}{Automatica} \bibinfo{volume}{81},
  \bibinfo{pages}{184--195}.
%Type = Book
\bibitem[{{Zambroni de Souza} and Castilla(2019)}]{ZambronideSouza2019}
\bibinfo{editor}{{Zambroni de Souza}, A.C.}, \bibinfo{editor}{Castilla, M.}
  (Eds.), \bibinfo{year}{2019}.
\newblock \bibinfo{title}{{Microgrids Design and Implementation}}.
\newblock \bibinfo{publisher}{Springer International Publishing}.
%Type = Article
\bibitem[{Zhao and D{\"{o}}rfler(2015)}]{Zhao2015a}
\bibinfo{author}{Zhao, J.}, \bibinfo{author}{D{\"{o}}rfler, F.},
  \bibinfo{year}{2015}.
\newblock \bibinfo{title}{{Distributed control and optimization in DC
  microgrids}}.
\newblock \bibinfo{journal}{Automatica} \bibinfo{volume}{61},
  \bibinfo{pages}{18--26}.

\end{thebibliography}
%\bibliography{/Users/prbalm/Dropbox/Reports/library}
\end{document}